\definecolor{darkred}{rgb}{0.6,0.0,0.1}
\definecolor{darkgreen}{rgb}{0,0.5,0}
\definecolor{darkblue}{rgb}{0,0,0.5}
\pgfplotsset{compat=1.10}
\renewcommand{\cite}{\citet}
\theoremstyle{mysc}\newtheorem{prop}{Proposition}[section]
\theoremstyle{mysc}
\theoremstyle{mysc}\newtheorem{coro}[prop]{Corollary}
\theoremstyle{mysc}\newtheorem{theo}[prop]{Theorem}
\theoremstyle{mysc}
\theoremstyle{mysc}\newtheorem{lem}[prop]{Lemma}
\theoremstyle{myex}\newtheorem{rem}{Remark}[section]
\theoremstyle{myex}\newtheorem{example}{Example}[section]
\theoremstyle{myex}
\newtheorem{assA}{Assumption}
\theoremstyle{mysc}
\theoremstyle{mysc}
\theoremstyle{mysc}
\numberwithin{equation}{section}
\def\@fnsymbol#1{\ensuremath{\ifcase#1\or * \or \star \or 1 \or 2\or 3\or  , \or
g\or h\or i\else\@ctrerr\fi}}%
 \author{\textsc{Christoph Breunig}\thanks{Department of Economics, Emory University, Rich Memorial Building, Atlanta, GA 30322, USA, e-mail:
\url{	christoph.breunig@emory.edu}}\\
{\small \textit{Emory University}}
}
\title{{Specification Testing in Nonparametric Instrumental Quantile Regression}\thanks{Parts of this paper derive from my doctoral dissertation, completed under the
guidance of Enno Mammen.  I would like to thank Liangjun Su and four anonymous referees for excellent comments and suggestions that greatly improved the paper. I also thank  seminar participants at Boston College, Bristol, Mannheim,  Toulouse School of Economics, University College London, WIAS Berlin, and Yale. I am  also  grateful for the support and hospitality of the Cowles Foundation.}
}
\begin{document}
\maketitle
\begin{abstract}
There  are many environments in econometrics which require  nonseparable modeling of a structural disturbance. In a nonseparable model with endogenous regressors, key conditions are validity of instrumental variables and monotonicity of the model in a scalar unobservable variable. Under these conditions the nonseparable model is equivalent to an
instrumental quantile regression model.
A failure of the key conditions, however, makes instrumental quantile regression potentially inconsistent.
This paper develops a methodology for testing the hypothesis whether the instrumental quantile regression model is correctly specified.
Our test statistic is asymptotically normally distributed under correct specification and consistent against any alternative model. In addition, test statistics to justify the model simplification are established.
 Finite sample properties are examined in a Monte Carlo study and an empirical illustration is provided.
\end{abstract}
\begin{tabbing}
\noindent \emph{Keywords:} \=Nonparametric quantile regression, instrumental variable,\\ 
\> specification test, local alternative, nonlinear inverse problem.\\[.2ex]
\> 
\end{tabbing}

\section{Introduction}
Regression models that involve instrumental variables are widely used in economics to overcome endogeneity problems.
In these models, assuming the structural disturbances to be  additively separable implies that marginal effects do not depend on unobserved characteristics, which may be difficult to justify.
 This is why their nonseparable extension has received a lot of attention recently.
Under certain key conditions,
the nonseparable model is equivalent to an instrumental quantile regression model.  
These conditions are validity of instruments and monotonicity of the model in a scalar unobservable.
 If one of these conditions is violated, however, the quantile regression representation is misspecified. 

In this paper, we propose a specification test of  the instrumental quantile regression model
	\begin{equation}
	\label{model:NP}
	Y = \sol(Z,q)+U(q)\quad\mbox{ where }\quad \PP(U(q)\leq 0|W)= q
	\end{equation}
	for each $0<q<1$, where $Y$ is a scalar dependent variable, $Z$ a vector of potentially endogenous regressors, $W$ a vector of instruments, and $U(q)$ an unobservable disturbance.\footnote{Since conditional expectations are defined only up to equality almost surely, all (in)equalities with
conditional expectations and/or random variables are understood as (in)equalities almost surely.}
This quantile regression model is equivalent to a  nonseparable model (cf. \cite{HL07econometrica}) given by
\begin{equation}
\label{model:NP:nonsep}
 Y=\sol(Z,V)
\end{equation}
where
\begin{itemize}
 \item[\textit{(a.1)}] the instrumental variable $W$ is independent of $V$,
 \item[\textit{(a.2)}] the function $\sol$ is strictly monotonic increasing in the scalar disturbance $V$, and
 \item[\textit{(a.3)}] $V \sim \cU (0,1)$.
\end{itemize}
Condition \textit{(a.3)} can be assumed without loss of generality if $V$ is continuously distributed with positive density on its support which we assume to hold throughout the paper. The quantile regression model \eqref{model:NP} for all $0<q<1$ is thus misspecified if in its nonseparable version \eqref{model:NP:nonsep} the instrument is not valid, that is, $W$ is not independent of $V$, or the function $\sol$ is not monotonic in $V$.

 Specification testing in instrumental variable models is a subject of considerable literature.
In the context of nonparametric instrumental mean regression $Y=g(Z)+U$ with $\Ex[U|W]=0$, tests for correct specification have been proposed by \cite{GS07}, \cite{Horowitz2009}, and \cite{Breunig2012}. These tests are, however, not robust against potential nonseparability of the structural disturbance. On the other hand, by considering the nonseparable model \eqref{model:NP:nonsep} with conditions \textit{(a.1)}--\textit{(a.3)} a failure of the exclusion restriction of the instruments might only be one source of misspecification. Indeed, as argued by \cite{hoderlein2007}, in certain applications, such as consumer demand, the monotonicity restriction \textit{(a.2)} might be highly unrealistic. As such, providing a specification test of model \eqref{model:NP:nonsep} together with conditions \textit{(a.1)}--\textit{(a.3)} seems paramount but, as far as we know, has not yet been addressed in the literature.

Research on identification and estimation in nonparametric instrumental quantile regression has been active in the last decade. 
\cite{2003Chesher} establishes nonparametric identification of derivatives of the unknown functions in a triangular array structure. \cite{Chern2005} and \cite{Chern2007} give identification conditions and develop a nonparametric minimum distance estimator. Sufficient conditions for local identification are given by \cite{chen2012}.
\cite{HL07econometrica} propose an estimator based on Tikhonov regularization, \cite{chen08} study penalized sieve minimum distance estimation, and \cite{Dunker2011} consider an iteratively regularized Gau\ss-Newton method.
Further, \cite{Gag2012} obtain asymptotic distribution results of a Tikhonov regularized estimator. There is also a large literature on testing quantile regression models with exogenous covariates. In this context particularly relevant is  quantile regression testing using an infinite number of quantiles for parametric functions, see \cite{escanciano2010} and, in the nonparametric context, \cite{escanciano2014}.

In instrumental quantile regression \eqref{model:NP} for a fixed quantile $0<q<1$, \cite{HorLee2009} established a test of parametric specification of $\sol$. \cite{chen2015} consider functionals of semi/nonparametric conditional moment restrictions with possibly nonsmooth generalized residuals. A test of monotonicity in unobservables of $\sol$ has been proposed by \cite{hoderlein2011} but requires conditional exogeneity of $Z$ and hence, is not related to instrumental variables methodology. Recently and independently of this paper, \cite{feve2013} developed a test of whether $Z$ is independent of the nonseparable disturbance $V$ in the model \eqref{model:NP:nonsep}.

Our test statistic is based on the $L^2$--norm of the empirical conditional quantile restriction and involves sieve methodology. The sieve approach makes the statistic easy to implement and further, is convenient to impose additional constraints on the structural function $\sol$. As an example, we  discuss a test of additivity of $\sol$ with respect to the vector of regressors $Z$. In addition, we establish a test statistic for testing exogeneity which is robust against nonseparability.  More precisely, we establish a test of exogeneity of the regressors $Z$ at some quantile $0<q<1$, that is, whether $\PP(Y\leq \sol(Z,q)|Z)=q$.
This extends the results on nonparametric tests of exogeneity in mean regression suggested by \cite{Blundell2007} and \cite{Breunig2012} to the quantile regression case.

 It should also be noted that the test proposed in this paper is a joint test of monotonicity and instrument validity. This is the nature of many nonparametric tests, see, for instance, \cite{chiappori2015} or \cite{lewbel2015}. On the other hand, we show in this paper how the sign of $\PP(Y\leq \sol(Z,q)|W)-q$ can be exploited to make inferences on the validity of the instrumental variables. As such, in many cases it is possible to detect the cause of a rejection of our test.

 We establish the asymptotic distribution of our test statistic under the null hypothesis and its consistency against fixed alternatives. We study the power of our test against a sequence of local alternatives. By Monte Carlo simulations we demonstrate the power properties of our test in finite samples. As an empirical illustration, we study a nonseparable model of the effects of class size on test scores of 4th grade students in Israel. We reject the hypothesis of exogeneity of class size  but fail to reject the instrumental variable model.

The remainder of this work is organized as follows. In Section \ref{sec:main}, we propose a test statistic and obtain its asymptotic distribution. We further establish consistency of our test. The power of the test is judged by considering a sequence of local alternatives.
Section \ref{sec:extension} gives several extensions of the previous results.
In Section \ref{sec:MC} and \ref{sec:emp} we study the finite sample properties of our test and give an empirical illustration. All proofs can be found in the appendix.

 \section{The test statistic and its asymptotic properties}\label{sec:main}
This section begins with the definition of the test statistic and states assumptions required to obtain its asymptotic distribution under the null hypothesis.
Moreover, we study power and consistency properties of our test.

\subsection{Definition of the test statistic}
The quantile regression model \eqref{model:NP} leads to a nonlinear operator equation, as we see in the following. 
Let $\varPhi$ be a Banach space endowed with the norm $\|\phi\|_{Z,p}:=(\Ex|\phi(Z)|^p)^{1/p}$ for some integer $p>0$ and if $p=\infty$ then $\|\phi\|_{Z,\infty}:=\sup_{z}|\phi(z)|$. For simplicity let $\|\phi\|_Z:=\|\phi\|_{Z,2}$. Further, let us introduce the Hilbert space $L_W^2:=\{\psi:\,\|\psi\|_W^2:=\Ex|\psi(W)|^2<\infty\}$. We define a nonlinear operator $\cT:\varPhi\to L_W^2$ with
	\begin{equation}
	\label{model:preEE}
	\cT\phi:=\Ex[\1{\{Y\leq \phi(Z)\}}|W]
	\end{equation}
for any $\phi\in\varPhi$ where $\1$ denotes the indicator function. Thereby, model \eqref{model:NP} can be rewritten as the operator equation $\cT\solq=q$ with $\sol_q(\cdot):=\sol(\cdot,q)$ for all $0<q<1$. 

 In many economic applications, for instance when estimating a demand function or Engel curves, the structural function of interest may be assumed to be smooth. This \textit{a priori} knowledge is captured by a set $\cB\subset \varPhi$ which we introduce below. The set $\cB$ may also contain constraints on the function $\solq$ such as monotonicity, concavity/convexity or additivity (see also Section \ref{sec:add}) and can also ensure uniqueness of $\solq$ (see Example \ref{exmp:ident} below). Let us introduce the set $\cB^{(0,1)}=\set{\phi:\, \phi(\cdot,q)\in\cB \text{ for all }q\in(0,1)}$. We consider the null hypothesis
	\begin{equation}
	\label{model:EE}
	H_0:\,\text{there exists a function }\sol\in\cB^{(0,1)} \text{ such that } \cT\solq=q \text{ for all }q\in(0,1).
	\end{equation}
The alternative is that there exists no function $\sol\in\cB^{(0,1)}$ solving $\cT\solq=q$ for all $q\in(0,1)$.

 We construct in the following a test statistic based on the $L^2$--distance.
 Throughout the paper, we assume that an independent and identically distributed $n$-sample of $(Y,Z,W)$ is available. 
 Let $\{f_j\}_{j\geq 1}$ be a sequence of approximating functions in $L_W^2$. Then, for any integer $k\geq 1$ we denote $f_\uk(\cdot)=(f_1(\cdot),\dots,f_k(\cdot))^t$ and $\bold W_k=\big(f_\uk(W_1),\dots,f_\uk(W_n)\big)^t$ which is a $n\times k$ matrix.
 A series least square estimator of $\Ex[\1\set{Y\leq \phi(Z)}-q|W=\cdot]$ then writes
 \begin{equation*}
 f_\uln(\cdot)^t(\bold W_\l^t\bold W_\l)^-\sum_{i=1}^n(\1\set{Y_i\leq \phi(Z_i)}-q)f_\uln(W_i)
 \end{equation*}
 where $(\cdot)^-$ denotes a generalized inverse. 
 Further, we define the sieve least square estimator of $\solq$ by
 \begin{equation}\label{def:est}
  \hsolq\in\argmin_{\phi\in\cB_\k}\Big(\sum_{i=1}^n(\1\set{Y_i\leq \phi(Z_i)}-q)f_\uln(W_i)\Big)^t(\bold W_\l^t\bold W_\l)^-\sum_{i=1}^n(\1\set{Y_i\leq \phi(Z_i)}-q)f_\uln(W_i)
 \end{equation}
where $\cB_\k$ is a $\k$--dimensional sieve space that becomes dense in $\cB$ as the sample size $n$ tends to infinity. If $\cB$ contains additional constraints then these are imposed in $\cB_\k$ on the finite dimensional functions. 
Here, $\k$ and $\l$ grow with sample size $n$. Clearly, $\k\leq \l$ for each $n$ is required and in our simulations we choose $\l = C\k$ for some constant $C>1$ (see also \cite{chen2015optimal} in the case of nonparametric instrumental mean regression). The estimator $\hsol_{qn}$ is a simplified version of the penalized sieve minimum distance estimator suggested by \cite{chen08}.

The test statistic is then given by
\begin{equation}\label{sn}
S_n=\int_0^1\Big(\sum_{i=1}^n(\1\set{Y_i\leq \hsolq(Z_i)}-q)f_\umn(W_i)\Big)^t(\bold W_\m^t\bold W_\m)^-\sum_{i=1}^n(\1\set{Y_i\leq \hsolq(Z_i)}-q)f_\umn(W_i)dq
\end{equation}
where $\m$ grows with sample size $n$. As the test is one sided, we reject the null hypothesis at level $\alpha$ when the standardized version of $S_n$, namely $3\sqrt{5/\m}\big(S_n-\m/6\big)$, is larger than the $(1-\alpha)$--quantile of $\cN(0, 1)$.
 The asymptotic distribution of $S_n$ is derived below under mild restrictions on the dimension parameters $\k$, $\l$, and $\m$. We require that the number of unconditional moment restrictions determined by $\m$ is asymptotically larger than the dimension of the sieve space $\cB_\k$. This corresponds to the test of overidentifying restrictions in parametric models. In contrast to the parametric setting, however, also the number of unconditional moment restrictions used to construct the estimator (determined by $\l$) must be asymptotically smaller than the number of moment restrictions used for the test statistic. This ensures that the estimation error in the test statistic becomes asymptotically negligible as we see below. 

 Our test statistic builds on the nonparametric specification test in instrumental mean regression suggested by \cite{Breunig2012}. Testing in instrumental quantile regression, on the other hand, requires a different methodology. First, the test statistic is a discontinuous function of the unknown structural effect $\solq$. Second, instrumental quantile regression leads to a nonlinear inverse problem and hence deriving asymptotic results is more challenging. Third, to verify the conditional moment restrictions for all quantiles we need to integrate over them.
 In the appendix, we show that the mapping $q\mapsto \solq$ is continuous under mild assumptions. This justifies the use of our $L^2$--type rather than a sup norm statistic.

\subsection{Assumptions and notation}
In order to obtain our asymptotic result we state the following assumptions.
Our first assumption gathers conditions which we require for the basis functions $\{f_j\}_{j\geq1}$. In the following, the supports $\cZ$ of $Z$ and $\cW$  of $W$  are assumed to be bounded (see also Assumption \ref{ass:A0}).  The probability density function (p.d.f.) of $W$, denoted by $p_W$, is assumed to be uniformly bounded from above and away from zero. 

\begin{assA}\label{ass:A1:q} 
(i) There exists a constant $C>0$ and a sequence of positive integers $(m_n)_{n\geq 1}$ satisfying  $\sup_{w\in\cW}\|f_\umn(w)\|^2\leqslant C\m$.
 (ii) The smallest eigenvalue of the matrix $\Ex[f_\um(W)f_\um(W)^t]$ is bounded away from zero uniformly in $m$. 
\end{assA}
 Assumption \ref{ass:A1:q} $(i)$ holds for sufficiently large $C$   if  $\{f_j\}_{j\geq1}$ are  trigonometric basis functions,  B-splines, or wavelets. Assumption \ref{ass:A1:q} $(ii)$ is satisfied if the marginal density of $W$ is uniformly bounded away from zero on $\mathcal W$ and $f_{\umn}$ forms a vector of orthonormal basis functions. 
For any $\phi\in\cB^{(0,1)}$ we write $\phi_q(\cdot):=\phi(\cdot,q)$ for all $0<q<1$.
We denote the Fr\'echet derivative of $\cT$ at $\solq$ by
\begin{equation*}
 T_q\phi:=\Ex\big[ p_{Y|Z,W}\big(\sol(Z,q),Z,W\big)\phi(Z)\big|W\big]
\end{equation*}
where $p_{Y|Z,W}$ denotes the density of $Y$ conditional on $(Z,W)$.  We introduce the notation $\interleave\phi\interleave_{Z,p}=\big(\int_0^1\|\phi(\cdot,q)\|_{Z,p}^pdq\big)^{1/p}$ and $\interleave\psi\interleave_W=\big(\int_0^1\|\psi(\cdot,q)\|_W^2dq\big)^{1/2}$ for functions $\phi(\cdot,q)\in\varPhi$ and $\psi(\cdot,q)\in L_W^2$ for all $q\in(0,1)$.
\begin{assA}\label{ass:A3:q}(i) If $\interleave\cT\phi-\cT\sol\interleave_W^2=0$ for some function $\phi\in\cB^{(0,1)}$ then it holds $\interleave\phi-\sol\interleave_{Z,p}^2=0$.  (ii) There exists some constant $0<\eta<1$ such that for all $0<q<1$ and all functions $\phi\in\set{\phi\in\cB:\,\|\phi-\sol_q\|_{Z,p}\leq \varepsilon}$ for some $\varepsilon>0$ it holds
 \begin{equation}\label{tang:cone:cond}
  \|\cT\phi-\cT\solq-\op_q(\phi-\solq)\|_W\leq \eta\|T_q(\phi-\solq)\|_W.
 \end{equation}
\end{assA}
Assumption \ref{ass:A3:q} $(i)$ ensures identification of $\solq$ for almost all $0<q<1$ on the set $\cB$ which we introduce below.
Assumption \ref{ass:A3:q} $(ii)$ specifies an upper bound on the Taylor remainder of $\cT$ in a small neighborhood around $\solq$. It is also known as the tangential cone condition and frequently used in the analysis of nonlinear operator equations (cf. \cite{Hanke1995} or \cite{Dunker2011} in case of instrumental variable estimation). 
We provide sufficient conditions for the tangential cone condition in Example \ref{exmp:ident} below and refer to   \cite{chen2012} for further discussions. 
\begin{assA}\label{ass:A5:q}
 There exists a sequence $(r_n)_{n\geq 1}$ with $r_n=o(1)$ such that for constants $C>0$ and $\kappa\in(0,1]$ it holds
\begin{equation}\label{eq:kappa}
 \max_{1\leq j\leq \m}\Ex\Big[\int_0^1\sup_{\phi\in\cB_{n}}\big|\1\{Y\leq\phi(Z,q)\}-\1\{Y\leq \sol(Z,q)\}\big|^2dq\,f_j^2(W)\Big]
\leq Cr_n^{2\kappa}
\end{equation}
where $\cB_{n}:=\{\phi\in\cB^{(0,1)}:\,\interleave\phi-\sol\interleave_{Z,p}^2\leq r_n^2\}$.
\end{assA}
Assumption \ref{ass:A5:q}  states that the function $\solq\mapsto(\1\{Y\leq\sol(Z,q)\}-q)f_j(W)$, $1\leq j\leq \m$,  is locally uniformly $L_W^2$ continuous for almost all $0<q<1$. This condition has also been exploited by \cite{CLVK03econometrics} (Theorem 3), \cite{Chen07} (Lemma 4.2 (i)) or \cite{chen08} (Remark c.1).
Example \ref{exmp:loc:cont} below gives primitive conditions under which Assumption \ref{ass:A5:q} holds true.

Let $\cZ\subset \mathbb R^{d_z}$ and for any vector of nonnegative integers $k=(k_1,\dots,k_{d_z})$  define $|k|=\sum_{j=1}^{d_z}k_j$ and $D^k=\delta^{|k|}/(\delta z_1^{k_1}\dots\delta z_{d_z}^{k_{d_z}})$. For some integer $p>0$ we define the norms
\begin{equation*}
 \|\phi\|_{\alpha,p}=\Big(\sum_{|k|\leq \alpha+\alpha_0}\int_\cZ \big|D^k\phi(z)\big|^p dz\Big)^{1/p},\quad \|\phi\|_{\alpha,\infty}=\max_{|k|\leq \alpha}\sup_{z\in\cZ}\big|D^k\phi(z)\big|
\end{equation*}
where $\alpha$ and $\alpha_0$ are positive integers. We denote the Sobolev spaces associated with the norm $ \|\cdot\|_{\alpha,p}$ by 
\begin{equation}
 W^{\alpha,p}:=\set{\phi:\cZ\to\mathbb R:\,\|\phi\|_{\alpha,p}<\infty}.
\end{equation}
For some constant $\rho>0$, define $\cB$ as the Sobolev ellipsoid of radius $\rho$ given by
\begin{equation}
 \cB:=\cB(\alpha):=\set{\phi\in W^{\alpha,p}:\,\|\phi\|_{\alpha,p}\leq \rho}.
\end{equation}
On the other hand, our sieve space $\cB_\k$ used to approximate $\cB$ is compact under $\|\cdot\|_Z$ and thus, penalization is not necessary for consistent estimation (see also \cite{chen08}). Also additional constraints such as monotonicity can be imposed by $\cB=\set{\phi\in W^{\alpha,p}:\,\|\phi\|_{\alpha,p}\leq \rho, \,\inf_{z\in\cZ}\phi'(z)>0}$ for  scalar $z$. Such a monotonicty constraint does not necessarily lead to faster rates of convergence, in contrast to an additivity restriction on $\solq$.  Consequently, we do not treat shape restrictions like monotonicty explicitly but only discuss a test of additivity in Section \ref{sec:add}. In this context, we also refer to \cite{chetverikov2015} for using shape restriction for sieve estimation in instrumental mean regression.
The following assumption gathers regularity conditions imposed on the structural functions $\sol$ and the supports $\cZ$ and $\cW$. 
\begin{assA}\label{ass:A0}
(i) Let $\alpha_0>{d_z}/p$ and $\alpha>{d_z}/\kappa$.
(ii) $\cZ$ is  bounded, convex and satisfies a uniform cone property. (iii) $\cW$ is bounded. (iv) The marginal density of $W,$ denoted by $p_W$, is bounded from above and uniformly bounded away from zero on $\cW$.
 (v) $p_{Y|Z,W}(\cdot,Z,W)$ is  bounded from above. 
\end{assA}
Assumption \ref{ass:A0} $(i)$ requires $\alpha$ to be large if \eqref{eq:kappa} holds only for small $\kappa>0$ or the dimension ${d_z}$ is large.
Assumption \ref{ass:A0} $(ii)$ imposes a weak regularity condition on the shape of $\cZ$. For the uniform cone property see, for instance, Paragraph 4.4 in \cite{2003Adams}. This property was also used by \cite{Santos12}. Assumption \ref{ass:A0} $(v)$ ensures that $\|T_q\phi\|_W\leq C\|\phi\|_Z$ for all $\phi\in L_Z^2$ and some constant $C>0$.
\begin{example}[Primitive Conditions for Assumption \ref{ass:A3:q}]\label{exmp:ident}
 Let $\varPhi$ coincide with the Hilbert space $L_Z^2:=\set{\phi:\,\|\phi\|_Z<\infty}$. If for any $0<q<1$ the operator $T_q$ is compact then there exists an orthonormal basis in $L_Z^2$ denoted by $\{e_j\}_{j\geq 1}$ satisfying $\|T_q\phi\|_W^2=\sum_{j=1}^\infty s_{qj}^2\Ex[\phi(Z)e_j(Z)]^2$ where $(s_{qj})_{j\geq 1}$ are the singular values of $T_q$. 
 If 
 \begin{equation*}
 \cB\subset\cB_{source,q}:=\set{\phi\in L_Z^2:\,\sum_{j=1}^\infty s_{qj}^{-2}\Ex[(\phi(Z)-\sol(Z,q))e_j(Z)]^2<c_0}  
 \end{equation*}
for some constant $c_0>0$ then, under mild assumptions on the joint distribution of $(Y,Z,W)$, the function $\solq$ is identified on $\cB$ (cf. Theorem 6 of \cite{chen2012}). A similar restriction was also imposed by \cite{HL07econometrica}.  If $\cB\subset \bigcap_{q\in(0,1)}\cB_{source,q}$ then Assumption \ref{ass:A3:q} $(i)$ holds true. Under further assumptions, imposing bounds on the generalized Fourier coefficients is equivalent to imposing smoothness restrictions. To illustrate this relation let $Z$ be a scalar uniformly distributed random variable and assume $s_{qj}=j^{-\zeta}$, $j\geq 1$, for some constant $\zeta>0$. In this case, if $\set{e_j}_{j\geq 1}$ are the usual trigonometric basis functions then $\cB_{source,q}$ coincides with the Sobolev space of $\zeta$--times differentiable functions with periodic boundary conditions, while if $s_{qj}^2=\exp(-j^{2\zeta})$, $j\geq 1$ and $\zeta>1$,  $\cB_{source,q}$ contains only analytic functions (see also \cite{Kr89}). 
 In this sense, $\cB_{source,q}$ links the smoothness of $\phi-\sol_q$ to the degree of ill-posedness determined by the degree of decay of $(s_{qj})_{j\geq 1}$, which is also known as a so-called \textit{source condition} (cf. \cite{ChenReiss2008} or \cite{Dunker2011} for a further discussion). 

Under the singular value decomposition of $T_q$ it is also possible to provide primitive conditions for the tangential cone condition \eqref{tang:cone:cond}. Assume that the conditional p.d.f. of $Y$ given $(Z,W)$, denoted by $p_{Y|Z,W}$, is continuously differentiable with $|\partial p_{Y|Z,W}(\cdot,Z,W)/\partial y|\leq c_1$ and the conditional p.d.f. of $Z$ given $W$ satisfies $p_{Z|W}(\cdot,W)\leq c_2p_{Z}(\cdot)$, for some constants $c_1,c_2>0$. Then by Theorem 6 of \cite{chen2012} it holds
 \begin{equation}\label{weak:tang:cone:cond}
 \|\cT\phi-\cT\solq-\op_q(\phi-\solq)\|_W\leq c_1\, c_2\,\|\phi-\solq\|_Z^2.
\end{equation}
We further obtain for all $\phi\in\cB_{source,q}$ by making use of the Cauchy-Schwarz inequality
\begin{align*}
\|\phi-&\solq\|_Z^2=\sum_{j=1}^\infty \frac{s_{qj}}{s_{qj}}\Ex[(\phi(Z)-\sol(Z,q))e_j(Z)]^2\\
&\leq \Big(\sum_{j=1}^\infty s_{qj}^{-2}\Ex[(\phi(Z)-\sol(Z,q))e_j(Z)]^2\Big)^{1/2}
\Big(\sum_{j=1}^\infty s_{qj}^{2}\Ex[(\phi(Z)-\sol(Z,q))e_j(Z)]^2\Big)^{1/2}\\
&\leq c_0^{1/2}\,\|T_q(\phi-\solq)\|_W.
\end{align*}
Consequently, the tangential cone condition \eqref{tang:cone:cond} is satisfied if we assume $c_0^{1/2}\,c_1\, c_2<1$. We also note that for our test of exogeneity in Section \eqref{sec:exog} only the weaker condition \eqref{weak:tang:cone:cond} is required.
 $\hfill\square$
\end{example}

\begin{example}[Primitive Conditions for Assumption \ref{ass:A5:q}]\label{exmp:loc:cont}
Let $F_{Y|ZW}$ denote the cumulative distribution function of $Y$ given $(Z,W)$ and assume that it is Lipschitz continuous with constant $C_L>0$, that is, 
 $|F_{Y|ZW}(y)-F_{Y|ZW}(y')|\leq C_L|y-y'|$ for all $(y,y')$.
Due to Assumption \ref{ass:A0} the Sobolev space $W^{\alpha,p}$ can be embedded in $W^{\alpha, \infty}$ (cf. Theorem 6 of \cite{2003Adams}). In particular, the supremum norm
is bounded on $\cB$ and moreover, Assumption \ref{ass:A5:q} holds true.
Indeed, $\int_0^1\|\phi_q-\solq\|_\infty^2dq\leq r_n^2$ implies $\|\phi_q-\solq\|_\infty\leq c \,r_n$ for almost all $0<q<1$ and some constant $c>0$.
Hence, $\sol(Z,q)- c \,r_n\leq \phi(Z,q)\leq \sol(Z,q)+ c \,r_n$ for almost all $0<q<1$ and following \cite{CLVK03econometrics} (page 1599 -- 1600) we observe
\begin{align*}
 \Ex\Big[\int_0^1\max_{\phi\in\cB_{n}}\big(\1\{Y\leq\phi&(Z,q)\}-\1\{Y\leq \sol(Z,q)\}\big)^2dq\,\Big|W\Big]\\
  \leq& \int_0^1\Ex\Big[\1\big\{Y\leq\sol(Z,q)+ c \,r_n\big\}-\1\big\{Y\leq \sol(Z,q)- c \,r_n\big\}\Big|W\Big]dq\\\hfill
=&\int_0^1\Ex\Big[F_{Y|ZW}\big(\sol(Z,q)+ c \,r_n\big)-F_{Y|ZW}\big(\sol(Z,q)- c \,r_n\big)\Big|W\Big]dq\\
\leq& C_L\, c \,r_n
\end{align*}
which implies Assumption \ref{ass:A5:q} with $\kappa=1/2$.
$\hfill\square$
\end{example}

\begin{rem}[Local Overidentification]
In this remark, we discuss local overidentification restrictions in nonparametric instrumental quantile regression for some $0<q<1$.
 As \cite{chen2015overidentification}  point out in their Example 5.2, the range of the  Fr\'echet derivative  $T_q$, is  given by
   \begin{align*}
   \mathcal R_q =\set{\psi\in L_W^2:\, \psi=T_q\phi \text{ for some }\phi\in L_Z^2}.
   \end{align*}
   Local overidentification corresponds to the case where the closure of the range $\mathcal R_q$ is a strict subset of $L_W^2$. 
In this paper, the class of structural functions $\sol$ is restricted to belong to an ellipsoid $\cB$ and thus, we consider for each $q$:
   \begin{align*}
   \mathcal R_q(\cB) =\set{\psi\in L_W^2:\, \psi=T_q\phi \text{ for some }\phi\in \cB}.
   \end{align*}
 Mild restrictions on the ellipsoid $\cB$ imply local overidentification and hence, the class of functions in the alternative model is not empty. 
 $\hfill\square$
\end{rem}
The next result formalizes the discussion of the previous remark and shows that the regularity conditions imposed on the function set  $\mathcal B$ ensure overidentification.
\begin{prop}\label{prop:overident}
Let $\varPhi$ coincide with the Hilbert space $L_Z^2$ and let Assumption \ref{ass:A0} $(v)$ be satisfied. 
 Then we have local identification, i.e., for any $0<q<1$ the closure of $\mathcal R_q(\cB)$ is a strict subset of $L_W^2$. 
\end{prop}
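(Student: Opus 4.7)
The plan is to realize $\overline{\mathcal R_q(\mathcal B)}$ as a compact subset of the infinite-dimensional Hilbert space $L_W^2$; strict inclusion then follows at once, since any compact subset of an infinite-dimensional Hilbert space has empty interior and in particular cannot be the whole space.

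Step 1 is to verify that $T_q$ extends to a bounded linear operator from $L_Z^2$ to $L_W^2$. Assumption \ref{ass:A0}(v) supplies a finite constant $M$ with $p_{Y|Z,W}(\cdot,Z,W)\le M$. Applying the conditional Cauchy--Schwarz inequality inside the definition of $T_q$ and then the tower property yields $\|T_q\phi\|_W\le M\|\phi\|_Z$ for every $\phi\in L_Z^2$, as already noted in the discussion following Assumption \ref{ass:A0}.

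Step 2 is to show that $\mathcal B$ is relatively compact in $L_Z^2$. Assumptions \ref{ass:A0}(i)-(ii) ensure that the Sobolev embedding $W^{\alpha+\alpha_0,p}(\mathcal Z)\hookrightarrow W^{\alpha,\infty}(\mathcal Z)$ holds on the bounded convex domain $\mathcal Z$ with uniform cone property, so every $\phi\in\mathcal B$ is uniformly bounded with uniformly bounded derivatives up to order $\alpha$. The Arzel\`a--Ascoli theorem then gives relative compactness of $\mathcal B$ in $C(\mathcal Z)$, and the uniform sup-norm bound together with dominated convergence transfers this to relative compactness in $L_Z^2$. Because $T_q$ is continuous, $T_q(\mathcal B)=\mathcal R_q(\mathcal B)$ is relatively compact in $L_W^2$, so its closure is compact.

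To conclude, Assumption \ref{ass:A0}(iii)-(iv) identifies $L_W^2$ with (an equivalent renorming of) $L^2(\mathcal W)$ on the bounded set $\mathcal W$, which is an infinite-dimensional Hilbert space; for instance, it contains an infinite orthonormal sequence of trigonometric polynomials in $W$. A compact subset of an infinite-dimensional Hilbert space is nowhere dense, so in particular $\overline{\mathcal R_q(\mathcal B)}\subsetneq L_W^2$. I foresee no real obstacle, only the mild bookkeeping of translating between the unweighted Sobolev/Lebesgue structure on $\mathcal Z$ and $\mathcal W$ and the probability-weighted norms $\|\cdot\|_Z$ and $\|\cdot\|_W$, which is handled by the sup-norm bound on $\mathcal B$ and by the density bounds in Assumption \ref{ass:A0}(iv), respectively.
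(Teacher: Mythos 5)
Your proof is correct, but it takes a noticeably heavier route than the paper's. You establish that $\mathcal B$ is relatively compact in $L_Z^2$ via the Sobolev embedding and Arzel\`a--Ascoli, push this compactness forward through the bounded operator $T_q$, and then invoke the fact that compact subsets of an infinite-dimensional Hilbert space are nowhere dense. The paper does much less: it only observes that $\|\cdot\|_Z\leq\|\cdot\|_{\alpha,p}$ (so $\sup_{\phi\in\mathcal B}\|\phi\|_Z\leq\rho$) and that $T_q$ is bounded, so that $\mathcal R_q(\mathcal B)$ lies inside a ball of radius $\rho\,\|T_q\|_{o,2}$. A bounded set in a nontrivial normed space is automatically a strict subset, so the closure of $\mathcal R_q(\mathcal B)$ cannot be all of $L_W^2$ --- no compactness, no infinite-dimensionality, and (this is the point the paper explicitly flags in the paragraph following the proposition) no smoothness of the functions in $\mathcal B$ is required. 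Your argument leans on the Sobolev embedding and the $\alpha_0>d_z/p$ condition, which inadvertently obscures the conceptual message that local overidentification here follows from the sup-norm bound $\rho$ alone. If you want to preserve the stronger conclusion that the range is precompact (which could be useful elsewhere), your argument is the right one; but for the stated claim, the paper's boundedness argument is both shorter and makes a sharper methodological point.
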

The proof of Proposition \ref{prop:overident} relies on the fact that the functions in $\mathcal B$ are bounded by some constant $\rho>0$ and, in particular, no smoothness restrictions are employed here to achieve overidentification.\footnote{I thank an anonymous referee for suggesting this argumentation.} It is also possible to achieve overidentification for classes containing unbounded functions, as long as they satisfy minimal smoothness conditions.

The following result is due to \cite[Lemma 4.1]{chen2015overidentification} and gives a condition for local overidentification without imposing \textit{a priori} restrictions on the set of functions $\mathcal B$. 
\begin{lem}[\cite{chen2015overidentification}]\label{lemma:cs}
The model is locally overidentified if and only if
\begin{align*}
\left\{\psi\in L_W^2 : \EE[p_{Y|Z,W} (\varphi(Z,q),Z,W)\psi(W)|Z] = 0\right\} \neq \{0\}.
\end{align*}
\end{lem}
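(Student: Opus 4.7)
The plan is to translate the geometric statement of local overidentification into an algebraic condition on the adjoint of the Fréchet derivative $T_q$, and then to read off the claim. By definition, local overidentification at $q$ means $\overline{\mathcal R_q} \subsetneq L_W^2$, i.e.\ the closure of the range of $T_q$ is a proper subspace. In any Hilbert space setting, the closure of the range of a bounded linear operator $T_q : L_Z^2 \to L_W^2$ coincides with the orthogonal complement of the kernel of its adjoint, so
\begin{equation*}
\overline{\mathcal R_q} \subsetneq L_W^2 \quad\Longleftrightarrow\quad \ker(T_q^*) \neq \{0\}.
\end{equation*}
The whole proof therefore reduces to (a) verifying that $T_q$ is a well-defined bounded linear operator on $L_Z^2$ (so that the range/adjoint identity is available), and (b) computing the adjoint $T_q^*$ explicitly and matching it with the condition in the statement.

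For (a), boundedness of $T_q$ is immediate from Assumption \ref{ass:A0}$(v)$: the conditional density $p_{Y|Z,W}(\varphi(Z,q), Z, W)$ is bounded from above, so Jensen gives $\|T_q \phi\|_W \leq C \|\phi\|_Z$ for some constant $C>0$, exactly as noted after Assumption \ref{ass:A0}.

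For (b), I would compute the adjoint by a direct inner-product calculation. For $\phi \in L_Z^2$ and $\psi \in L_W^2$, by Fubini / iterated expectations,
\begin{align*}
\langle T_q \phi, \psi \rangle_W
&= \EE\bigl[\EE[p_{Y|Z,W}(\varphi(Z,q), Z, W)\,\phi(Z)\,|\,W]\,\psi(W)\bigr] \\
&= \EE\bigl[p_{Y|Z,W}(\varphi(Z,q), Z, W)\,\phi(Z)\,\psi(W)\bigr] \\
&= \EE\bigl[\phi(Z)\,\EE[p_{Y|Z,W}(\varphi(Z,q), Z, W)\,\psi(W)\,|\,Z]\bigr] \\
&= \langle \phi, T_q^* \psi\rangle_Z,
\end{align*}
so that $T_q^* \psi(z) = \EE[p_{Y|Z,W}(\varphi(Z,q), Z, W)\,\psi(W)\,|\,Z = z]$. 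Hence $\ker(T_q^*)$ is precisely the set appearing in the statement of the lemma, and combining with the equivalence above yields the result.

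The only step that requires any care is the interchange of expectations in the adjoint computation, which is justified by Fubini once one notes that, since $p_{Y|Z,W}(\varphi(Z,q),\cdot,\cdot)$ is bounded (Assumption \ref{ass:A0}$(v)$) and $\phi\in L_Z^2$, $\psi \in L_W^2$, the integrand $p_{Y|Z,W}(\varphi(Z,q),Z,W)\,|\phi(Z)\psi(W)|$ has finite expectation by Cauchy--Schwarz. I do not expect a genuine obstacle here — the lemma is essentially a restatement of the standard range/kernel duality in Hilbert space, specialized to the particular form of $T_q$ given by the Fréchet derivative in \eqref{model:preEE}.
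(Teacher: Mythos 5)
Your proof is correct, and the approach is the natural one. Note that the paper itself does not prove this lemma; it is attributed to Chen and Santos (2015, Lemma 4.1) and stated without an argument, so there is no in-paper proof to compare against. What you wrote is exactly the argument one would expect to find in the cited reference: translate ``local overidentification'' (closure of $\mathcal R_q$ a strict subspace of $L_W^2$) into the Hilbert-space duality $\overline{\mathcal R(T_q)} = (\ker T_q^*)^\perp$, conclude that the closure is proper iff $\ker T_q^*$ is nontrivial, and then compute $T_q^*$ explicitly. Your adjoint computation is right: the Fubini interchange is justified by Assumption~\ref{ass:A0}(v) and Cauchy--Schwarz, yielding $T_q^*\psi(z) = \EE[p_{Y|Z,W}(\varphi(Z,q),Z,W)\,\psi(W)\,|\,Z=z]$, whose kernel is precisely the set in the lemma. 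The one thing worth making explicit, which you gesture at but do not state, is that $T_q$ being bounded guarantees the adjoint exists as a bounded operator and that the range-kernel identity applies; you have the ingredient (boundedness from Assumption~\ref{ass:A0}(v)) but could say this more directly. Otherwise this is complete.
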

Lemma \ref{lemma:cs} provides a necessary and sufficient condition for local overidentification without imposing regularity or other shape restrictions. This result involves the adjoint of the Fr\'echet derivative $T_q$ and can be characterized more explicitly in different cases. 
For instance, assume that the vector of instruments can be decomposed such that $W=(W_1,W_2)$ with $p_{Y|Z,W} =p_{Y|Z,W_1}$, i.e., $W_2$ has no additional information on $Y$ which is not contained in $(Z,W_1)$. In this case, we have
\begin{align*}
\EE[p_{Y|Z,W} (\varphi(Z,q),Z,W)\psi(W)|Z]&=\EE\left[p_{Y|Z,W_1} (\varphi(Z,q),Z,W_1)\EE[\psi(W)|W_1,Z]\big|Z\right]
\end{align*}
 and hence, the model is locally overidentified when  there exists a nontrivial function $\psi$ such that $\EE[\psi(W)|Z,W_1]=0$. The last criterion is satisfied, for instance, if $W_2$ is independent of $(Z,W_1)$ for all $\psi$ which only depend on $W_2$ and $\EE[\psi(W_2)]=0$.

\paragraph{Notation}
For any $\phi\in\cB$ we introduce $\varPi_\k \phi\in\cB_\k$ satisfying $\|\varPi_\k\phi-\phi\|_{Z,p} = o(1)$. 
Further, we define  
\begin{equation*}
\omega_n=\max\Big(n^{-1}\l,\max_{\phi\in\cB_\k}\sum_{j>\l}\Ex[(\cT\phi(W)-q)f_j(W)]^2, \interleave T_\cdot(\varPi_\k\sol-\sol)\interleave_W^2\Big). 
\end{equation*}
The rate $\omega_n$ captures the variance and bias part for estimating $\mathcal T \phi$ for a fixed function $\phi$ and also contains the bias for approximating the structural function $\sol$ in the weak norm induced by the Fr\'echet derivative of $\mathcal T$.
Following \cite{chen08} we introduce the sieve measure of local ill-posedness by
\begin{equation*}
 \tau_\k:=\max_{\phi\in\cA_\k}\Big(\frac{\interleave\phi-\sol\interleave_{Z,p}^2}{\interleave T_{\cdot}(\phi-\sol)\interleave_W^2}\Big)
\end{equation*}
where $\cA_\k=\set{\phi\in\cB_\k^{(0,1)}:\interleave T_{\cdot}(\phi-\sol)\interleave_W^2>0}$.  We write $a_n\sim b_n$ when there exist
constants $c,c' > 0$ such that $c b_n\leq a_n\leq c' b_n$ for sufficiently large $n$.
\subsection{Asymptotic distribution under the null hypothesis} 
 The following theorem establishes asymptotic normality of the test statistic $S_n$ after standardization under the null hypothesis $H_0$. 
\begin{theo}\label{prop:main}
Let  Assumptions \ref{ass:A1:q}--\ref{ass:A0} be satisfied.  Assume that
\begin{equation}\label{cond:theo:norm}
m_n^{-1}=o(1),\quad \m=o(n^{1/2})
 \end{equation}
 and in addition
\begin{equation}\label{prop:chen:est:cond}
  n\omega_n=o(\sqrt{\m}) \text{ and } \interleave\varPi_\k\sol-\sol\interleave_{Z,p}^2+\tau_\k\omega_n=o\big(m_n^{-(1+\epsilon)/\kappa}\big)
\end{equation}
for some $\epsilon>0$.
Then we have under $H_0$
\begin{equation*}
 3\sqrt{5/\m}\big(S_n-\m/6\big)\stackrel{d}{\rightarrow} \cN(0,1).
\end{equation*}
\end{theo}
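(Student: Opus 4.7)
The plan is to reduce $S_n$ to a quadratic form in the infeasible residuals $\epsilon_i(q):=\1\{Y_i\leq\sol(Z_i,q)\}-q$ and then invoke a central limit theorem for a degenerate U-statistic with diverging kernel dimension.

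\textbf{Step 1 (Oracle reduction).} Let $S_n^\ast$ denote the statistic \eqref{sn} with $\hsolq$ replaced by $\solq$. I would show $S_n-S_n^\ast=o_p(\sqrt{\m})$. Writing
\begin{equation*}
\1\{Y_i\leq\hsolq(Z_i)\}-\1\{Y_i\leq\solq(Z_i)\} = \bigl(\cT\hsolq-\cT\solq\bigr)(W_i)+R_i(q),
\end{equation*}
the conditional-mean piece linearizes to $T_q(\hsolq-\solq)(W_i)$ with Taylor error controlled by Assumption \ref{ass:A3:q}(ii), while the centered empirical-process remainder $R_i(q)$ has $L^2_W$-modulus bounded through Assumption \ref{ass:A5:q}. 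Plugging into the quadratic form \eqref{sn}, the deterministic part is of order $n\,\interleave T_\cdot(\hsolq-\solq)\interleave_W^2$. A sieve minimum-distance rate analysis in the spirit of \cite{chen08} yields $\interleave T_\cdot(\hsolq-\solq)\interleave_W^2=O_p(\omega_n)$ and $\interleave\hsolq-\solq\interleave_{Z,p}^2=O_p(\tau_\k\omega_n+\interleave\varPi_\k\sol-\sol\interleave_{Z,p}^2)$. The first bound combined with $n\omega_n=o(\sqrt{\m})$ in \eqref{prop:chen:est:cond} eliminates the deterministic piece; the second combined with the remainder of \eqref{prop:chen:est:cond} places $\hsolq$ inside $\cB_n$ with probability tending to one, making Assumption \ref{ass:A5:q} applicable. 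Cross terms between oracle residuals and the linearization are dispatched by Cauchy--Schwarz.

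\textbf{Step 2 (Mean calibration).} Let $P=\bold W_\m(\bold W_\m^t\bold W_\m)^-\bold W_\m^t$ be the rank-$\m$ idempotent projection, with entries $P_{ij}=f_\umn(W_i)^t(\bold W_\m^t\bold W_\m)^- f_\umn(W_j)$. Then $S_n^\ast=D_n+U_n$ where
\begin{equation*}
D_n=\int_0^1\sum_{i=1}^n P_{ii}\,\epsilon_i(q)^2\,dq,\qquad U_n=2\int_0^1\sum_{i<j}P_{ij}\,\epsilon_i(q)\epsilon_j(q)\,dq.
\end{equation*}
Under $H_0$, $\1\{Y_i\leq\solq(Z_i)\}\mid W_i$ is $\mathrm{Bernoulli}(q)$, and monotonicity of $q\mapsto\solq$ gives
\begin{equation*}
\Ex[\epsilon_i(q_1)\epsilon_i(q_2)\mid W_i]=(q_1\wedge q_2)\bigl(1-(q_1\vee q_2)\bigr).
\end{equation*}
Hence $\Ex[D_n\mid W]=\operatorname{tr}(P)\cdot\int_0^1 q(1-q)\,dq=\m/6$, and since $P_{ii}=O_p(\m/n)$ uniformly under Assumption \ref{ass:A1:q}, a routine variance bound gives $D_n-\m/6=O_p(\m/\sqrt{n})=o_p(\sqrt{\m})$ whenever $\m=o(n^{1/2})$.

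\textbf{Step 3 (CLT for $U_n$).} $U_n$ is a conditionally degenerate second-order U-statistic with kernel dimension $\m$ diverging. A direct computation using $\sum_{i,j}P_{ij}^2=\operatorname{tr}(P^2)=\m$ yields
\begin{equation*}
\Ex[U_n^2\mid W]=4\sum_{i<j}P_{ij}^2\int_0^1\int_0^1[(q_1\wedge q_2)(1-q_1\vee q_2)]^2\,dq_1\,dq_2+o_p(\m)\longrightarrow\frac{\m}{45},
\end{equation*}
using the Beta-integral identity $\int_0^1\int_0^1[(q_1\wedge q_2)(1-q_1\vee q_2)]^2\,dq_1\,dq_2=1/90$. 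A CLT for degenerate quadratic forms (e.g.\ de Jong's theorem, whose fourth-moment Gaussian condition reduces for bounded $\epsilon_i$ to $\max_i P_{ii}=o_p(1)$, verified under the stated rates) yields $U_n/\sqrt{\m/45}\stackrel{d}{\longrightarrow}\cN(0,1)$. Combining with Steps 1--2 and the identity $3\sqrt{5/\m}\cdot\sqrt{\m/45}=1$ gives the theorem. The crux is Step 1: the indicator $\phi\mapsto\1\{Y\leq\phi(Z)\}$ is not pathwise differentiable, so no smooth Taylor linearization is available; the argument must simultaneously deploy the tangential cone condition to pass from $\cT\hsolq-\cT\solq$ to the Fr\'echet derivative, the local $L^2$-equicontinuity of Assumption \ref{ass:A5:q} to absorb the empirical-process fluctuations, and the calibrated balance in \eqref{prop:chen:est:cond}, which ensures both sources of error remain below the $\sqrt{\m}$ scale set by the Gaussian fluctuations of $U_n$.
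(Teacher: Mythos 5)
Your three-step plan mirrors the structure of the paper's argument — oracle reduction, centering at $\m/6$, CLT for the degenerate quadratic form — but with one substitution: you invoke a de Jong-type CLT for $U_n$, whereas the paper builds an explicit martingale difference array $\{Q_{ni}\}$ and verifies the conditions of a martingale CLT (Lemma \ref{normal:lem:1}). Both routes are legitimate; the de Jong route shortens the bookkeeping since the Lindeberg-type condition reduces to $\max_i P_{ii}=o_p(1)$ once the $\epsilon_i(q)$ are bounded, while the paper's explicit martingale computation is self-contained and produces the constants $1/6$ and $1/(3\sqrt5)$ directly. Your Steps 2--3 are correct as stated.

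The gap is in Step~1, specifically the sentence ``Cross terms between oracle residuals and the linearization are dispatched by Cauchy--Schwarz.'' The cross term in question (the paper's $C_{n2}$) is
\begin{equation*}
C_{n2}=\sum_{j=1}^{\m}\int_0^1\Big(\sum_i\epsilon_i(q)f_j(W_i)\Big)\,\skalarV{\cT\hsolq-\cT\solq,\,f_j}_W\,dq .
\end{equation*}
Cauchy--Schwarz in the index $j$ gives $|C_{n2}|\leq\big(nI_n\big)^{1/2}\big(n\interleave F_\m(\cT\hsol_{\cdot n}-\cT\sol)\interleave_W^2\big)^{1/2}$. Under $H_0$ the first factor is $O_p(\sqrt{\m})$ and the second is $o_p(m_n^{1/4})$ by $n\omega_n=o(\sqrt{\m})$, so the product is only $o_p(\m^{3/4})$, which does not beat the $\sqrt{\m}$ scale set by $U_n$. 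The obstruction is that $\hsolq$ is data-dependent, so you cannot use the conditional orthogonality $\Ex[\epsilon_i(q)\mid W_i]=0$ directly to kill the mean of the cross term. The paper's fix is to view $C_{n2}$ as $\sqrt n$ times an empirical process over the class $\cG_{qn}=\{(\1\{Y\le\sol(Z,q)\}-q)\,(F_\m\cT\phi_q-F_\m\cT\solq)(W):\phi\in\cD_{qn}\}$, which, because every $\phi\in\cD_{qn}$ satisfies $n\interleave\cT\phi-\cT\sol\interleave_W^2\le\sqrt{\m}$ and $p_W$ is bounded away from zero, has a uniform envelope of order $m_n^{1/4}n^{-1/2}$. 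Combined with a Lipschitz argument that bounds the bracketing integral $J_{[\,]}(1,\cG_{qn},L^2_{\bold X})$ by a constant (via Assumption~\ref{ass:A3:q}(ii) and Assumption~\ref{ass:A0}), this yields $C_{n2}=O_p(m_n^{1/4})=o_p(\sqrt{\m})$. You need this uniform-in-$\phi$ argument exploiting the small pointwise size of $F_\m(\cT\phi-\cT\solq)$; the bare Cauchy--Schwarz bound is short by a factor of $m_n^{1/4}$.
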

To motivate the constants in the sieve mean and variance, respectively, we observe
\begin{equation*}
 \int_0^1\Ex[(\1\{Y\leq \sol(Z,q)\}-q)^2|W]dq=\int_0^1 q(1-q)dq=1/6
\end{equation*}
and
\begin{multline*}
 \int_0^1 \big(\Ex[(\1\set{Y\leq \sol(Z,q)}-q)(\1\set{Y\leq \sol(Z,q')}-q')|W]\big)^2d(q,q')\\
 =2\int_0^1  (\min(q,q')-qq')^2d(q,q')=(3\sqrt{5})^{-2},
\end{multline*}
see also the proof of Lemma \ref{normal:lem:1}. The required rate imposed in \eqref{cond:theo:norm} on $\m$ is milder than the rate requirement $\m=o(n^{1/3})$ imposed by \cite{Breunig2012} in case of nonparametric instrumental mean regression. This is due to the fact that in the latter case we do not have a lower bound for the sieve standard deviation in general, while in case of quantile regression the sieve standard deviation is $\sqrt\m$ within a positive constant. This can be exploited to weaken rate restrictions on $\m$. Further, note that restriction \eqref{prop:chen:est:cond} implies $\k=o(\sqrt\m)$ (by using that $\l\leq \k$). This requirement  essentially determines the degree of overidentification required for inference.

The rate restriction  $\tau_\k\omega_n=o\big(m_n^{-(1+\epsilon)/\kappa}\big)$ imposed in condition \eqref{prop:chen:est:cond} implies that the dimension parameter $\m$ dominates the effect of estimation of the structural function.
Consequently, the asymptotic behavior of our test statistic is  not affected by the estimation of $\sol$, regardless of the underlying degree of ill-posedness. Note that this rate restriction can be ensured by choosing $\k$ relative to decay of the sieve measure of local ill-posedness, which is described in more detail in Example \ref{examp:class:smooth} below.
We illustrate below that condition \eqref{prop:chen:est:cond} is satisfied under common smoothness restrictions on $\sol$ and mapping requirements of the Fr\'echet derivative $T_q$.
\begin{rem}\label{rem:chen}
Consider the Hilbert space case $\varPhi=L_Z^2$ and let $\{e_j\}_{j\geq 1}$ be an orthonormal basis in $L_Z^2$. In this case, $\varPi_\k\phi=\sum_{j=1}^\k\Ex[\phi(Z)e_j(Z)]e_j$. Let us assume the following two conditions.
 \begin{enumerate}
  \item [(i)] \textit{Sieve approximation error:} $\|\varPi_\k\phi-\phi\|_Z=O(k_n^{-\alpha/d_z})$ for all $\phi\in\cB$.
  \item [(ii)] \textit{Link condition:} $\int_0^1\|T_q (\varPi_\k\phi-\phi)\|_W^2dq\leq \sum_{j\geq1}\opw_j\Ex[(\varPi_\k\phi-\phi)(Z)e_j(Z)]^2$  for all $\phi\in\cB$ and some positive nonincreasing sequence $(\opw_j)_{j\geq 1}$.
 \end{enumerate}
If the p.d.f. $p_Z$ of $Z\in[0,1]^{d_z}$ is bounded then it is well known that the sieve approximation error condition holds for splines, wavelets, and Fourier series bases. Due to Assumption \ref{ass:A0} $(v)$ the link condition is always satisfied with $\opw_j=1$ for all $j\geq 1$. The link condition implies an upper bound for the sieve measure of ill-posedness; that is, $\tau_\k\leq C\opw_\k$ for some constant $C>0$ and all $n\geq 1$ (cf. Lemma B.2 of \cite{chen08}). 
Consequently, the first part of condition \eqref{prop:chen:est:cond} simplifies to 
\begin{equation*}
  \max\big(\l,n\,l_n^{-2\beta/d_w}, n\opw_\k k_n^{-2\alpha/d_z}\big)=o(\sqrt{\m})
\end{equation*}
if $\{\cT\phi:\,\phi\in\cB_\k\}$ belongs to a H\"older space with H\"older parameter $\beta$. In addition, in the setting of Example \ref{exmp:loc:cont}, the second part of condition \eqref{prop:chen:est:cond} simplifies to 
\begin{equation*}
  m_n^{1+\epsilon}\max\big(n^{-1}\l,\,l_n^{-2\beta/d_w},k_n^{-2\alpha/d_z}\big)=o(1)
\end{equation*}
for some $\epsilon>0$. 
$\hfill\square$
\end{rem}
In the next example, we illustrate different mapping properties of the operator $T_q$ which are usually studied in the literature.
\begin{example}\label{examp:class:smooth}
Consider the Hilbert space setting of Remark \ref{rem:chen} with conditions $(i)$ and $(ii)$. In addition assume that the reverse link condition $\int_0^1\|T_q \phi\|_W^2dq\geq c\sum_{j\geq1}\opw_j\Ex[\phi(Z)e_j(Z)]^2$ for $\phi\in\cB$ and some constant $c>0$ is satisfied. In the setting of Example \ref{exmp:ident}, we have $\int_0^1s_{qj}^2dq>\opw_j$ for all $j\geq 1$ implying that $T_q$ is nonsingular for almost all $0<q<1$ (since any countable union of null sets is null).
For simplicity, let $Z$ and $W$ be scalars. Further, let $\max\big(n^{-1}\l, l_n^{-2\beta}\big)\sim n^{-1}\k$ and $\k\sim n^\chi$ for some constant $\chi>0$ which is specified in the following two cases.
\begin{enumerate}
 \item[(i)] \textit{Mildly ill-posed case:} If $\opw_\k\sim k_n^{-2\zeta}$ for some $\zeta\geq 0$ then in order for \eqref{prop:chen:est:cond} to hold we require $\m\sim n^\iota$ with $0<\iota<1/3$ and
\begin{equation*}
 (1-\iota/2)/(2\alpha+2\zeta)<\chi<\iota/2.
\end{equation*}
Further, $\int_0^1\|\varPi_\k\solq-\solq\|_{Z}^2dq+\tau_\k\omega_n=O(k_n^{-2\alpha}+k_n^{2\zeta+1}n^{-1})$ which is $o(m_n^{-2/\kappa})$ if
$\iota/(\alpha\kappa)<\chi<(1-2\iota/\kappa)/(2\zeta+1)$.
Thus, condition \eqref{prop:chen:est:cond} is satisfied if
\begin{equation*}
 \max\Big((1-\iota/2)/(2\alpha+2\zeta),\iota/(\alpha\kappa)\Big) <\chi<\min\Big(\iota/2, (1-2\iota/\kappa)/(2\zeta+1)\Big).
\end{equation*}
 \item[(ii)] \textit{Severely ill-posed case:} If $\opw_\k\sim \exp\big(-k_n^{2\zeta}\big)$ for some $\zeta>0$ then $\int_0^1\|\varPi_\k\solq-\solq\|_{Z}^2dq+\tau_\k\omega_n=O(k_n^{-2\alpha}+\exp(k_n^{2\zeta})k_nn^{-1})$. Thereby, 
 condition \eqref{prop:chen:est:cond} is satisfied if, for example, $\m=o\big((\log n)^{\alpha\kappa/\zeta}\big)$ and $\k\sim (\log n)^{1/\zeta}$.
\end{enumerate}
In both situations we conclude that the dimension parameter $\m$ is required to be larger than the dimension $\k$ of the sieve space for $n$ sufficiently large. Roughly speaking we require more moment restrictions implied by the instrument than the number of parameters we want to estimate. This corresponds to the test of overidentification in the parametric framework.
\hfill$\square$ 
\end{example}

In contrast to a test integrated over all quantiles, one might be interested to check model \eqref{model:NP} for one specific quantile. In this case, we consider the test statistic 
\begin{equation}\label{sn(q)}
 S_n(q)=\Big(\sum_{i=1}^n(\1\set{Y_i\leq \hsolq(Z_i)}-q)f_\umn(W_i)\Big)^t(\bold W_\m^t\bold W_\m)^-\sum_{i=1}^n(\1\set{Y_i\leq \hsolq(Z_i)}-q)f_\umn(W_i)
\end{equation}
If $S_n(q)$ becomes too large then we reject the null hypothesis $H_0$. 
The derivation of the asymptotic behavior of $S_n(q)$ is similar as in Theorem \ref{prop:main}. Indeed, only the Lebesgue measure over $(0,1)$ has to be replaced by the Dirac measure which has its mass at the quantile of interest.
\begin{coro}\label{coro:main}
Let  Assumptions \ref{ass:A1:q} and \ref{ass:A0} be satisfied.  For a fixed quantile $q\in(0,1)$, let Assumptions \ref{ass:A3:q}, \ref{ass:A5:q}, and conditions \eqref{cond:theo:norm} and \eqref{prop:chen:est:cond} hold.
If there exists a function $\sol_q\in\cB$ with $\cT\solq=q$ then
\begin{equation*}
 (2\m)^{-1/2}\Big(\frac{1}{q(1-q)}\, S_n(q)-\m\Big)\stackrel{d}{\rightarrow} \cN(0,1).
\end{equation*}
\end{coro}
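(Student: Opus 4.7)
The plan is to mirror the proof of Theorem \ref{prop:main}, with the Lebesgue measure on $(0,1)$ replaced by a unit mass at the fixed quantile $q$. Write $\varepsilon_i := \1\{Y_i\leq \solq(Z_i)\}-q$ and $\hat\varepsilon_i := \1\{Y_i\leq \hsolq(Z_i)\}-q$, and set $A_\m := (\bold W_\m^t\bold W_\m)^-$. First I would decompose
\begin{equation*}
 S_n(q) = S_n^{(0)}(q) + 2 S_n^{(1)}(q) + S_n^{(2)}(q),
\end{equation*}
where $S_n^{(0)}(q) = \big(\sum_i \varepsilon_i f_\umn(W_i)\big)^t A_\m \big(\sum_i \varepsilon_i f_\umn(W_i)\big)$ is the oracle statistic, $S_n^{(1)}(q)$ is the mixed term involving both $\varepsilon_i$ and $\hat\varepsilon_i-\varepsilon_i$, and $S_n^{(2)}(q)$ involves only $\hat\varepsilon_i-\varepsilon_i$.

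Second, I would split $S_n^{(0)}(q)$ into its diagonal contribution $D_n(q) = \sum_i \varepsilon_i^2 f_\umn(W_i)^t A_\m f_\umn(W_i)$ and the off-diagonal quadratic form $Q_n(q) = \sum_{i\neq j} \varepsilon_i \varepsilon_j f_\umn(W_i)^t A_\m f_\umn(W_j)$. Because $\Ex[\varepsilon_i^2\mid W_i] = q(1-q)$ and Assumption \ref{ass:A1:q} gives $n A_\m \approx (\Ex[f_\umn f_\umn^t])^{-1}$ uniformly, a law-of-large-numbers argument yields $D_n(q) = q(1-q)\,\m + o_P(\sqrt{\m})$. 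For $Q_n(q)$ I would invoke a central limit theorem for degenerate quadratic forms in independent random variables (e.g.\ de Jong). The second-moment computation,
\begin{equation*}
 \Ex[Q_n(q)^2] = 2\,q^2(1-q)^2\,n(n-1)\,\Ex\big[(f_\umn(W_1)^t A_\m f_\umn(W_2))^2\big] = 2\,q^2(1-q)^2\,\m\,(1+o(1)),
\end{equation*}
uses the tower property, independence of $(\varepsilon_i)_i$, and $\mathrm{tr}((A_\m\Ex[f_\umn f_\umn^t])^2) = \m/n^2(1+o(1))$; a Lindeberg-type condition under $\m = o(n^{1/2})$ is verified exactly as in Lemma \ref{normal:lem:1}. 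This delivers $Q_n(q)/\sqrt{2\m\,q^2(1-q)^2} \stackrel{d}{\to} \cN(0,1)$, and combining with the diagonal bound gives $S_n^{(0)}(q)/(q(1-q)) - \m = \sqrt{2\m}\cdot\cN(0,1) + o_P(\sqrt{\m})$.

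Third, I would show the estimation-error pieces are asymptotically negligible. The single-quantile analogue of the consistency result used in Theorem \ref{prop:main}, combined with the tangential cone condition in Assumption \ref{ass:A3:q}, yields $\|\hsolq - \solq\|_{Z,p}^2 = O_P(\tau_\k \omega_n)$. The single-$q$ form of Assumption \ref{ass:A5:q}, whose primitive version follows from the Lipschitz argument of Example \ref{exmp:loc:cont}, then controls the maximal contribution of the discontinuous increment $\hat\varepsilon_i - \varepsilon_i$ evaluated against the basis, giving $S_n^{(2)}(q) = O_P(n\omega_n^{\kappa})$ and, by Cauchy--Schwarz, $S_n^{(1)}(q) = O_P\big(\sqrt{\m\,n\omega_n^{\kappa}}\big)$. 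Condition \eqref{prop:chen:est:cond} forces both bounds to be $o_P(\sqrt{\m})$, so $S_n(q) = S_n^{(0)}(q) + o_P(q(1-q)\sqrt{\m})$ and the limit in the corollary follows.

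The main obstacle is the third step. Because $\hsolq$ enters $S_n(q)$ through indicator functions, a direct Taylor expansion of $\hat\varepsilon_i - \varepsilon_i$ is unavailable; one must combine the $L^{p}$ convergence rate of the sieve estimator with the local $L^2_W$ continuity of the class of generalized residuals (Assumption \ref{ass:A5:q}) via an empirical-process bracketing bound, and then convert the weak-norm bound through the tangential cone inequality \eqref{tang:cone:cond}. This is essentially the same chain of arguments used in the proof of Theorem \ref{prop:main}, specialized to a point mass at $q$; the remaining steps are routine modifications of those arguments.
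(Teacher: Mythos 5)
Your outline is the right one---specialize the proof of Theorem~\ref{prop:main} to a point mass at $q$---and your treatment of the oracle statistic $S_n^{(0)}(q)$ via the diagonal/off-diagonal split and Lemma~\ref{normal:lem:1} is exactly what the paper intends. The gap is in the third step, and it is genuine.

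First, the bound $S_n^{(2)}(q)=O_P(n\omega_n^{\kappa})$ is not what the proof of Theorem~\ref{prop:main} delivers. There, $n\,III_n$ is split into two pieces of different character: a bias-type piece bounded by $2\eta_p\,n\,\interleave\cT\hsol_{\cdot n}-\cT\sol\interleave_W^2 = O_p(n\omega_n)$ (controlled by the first half of \eqref{prop:chen:est:cond}), and an empirical-process piece of order $\m\,(\tau_\k\omega_n)^{\kappa}$ coming from the bracketing bound on $\sum_j\big|n^{-1}\sum_i h_{qj}(\cdot,\hsolq)-\Ex h_{qj}\big|^2$ with envelope size controlled by Assumption~\ref{ass:A5:q} (controlled by the second half of \eqref{prop:chen:est:cond}). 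Neither piece is $n\omega_n^{\kappa}$; conflating them hides the role the two distinct rate conditions in \eqref{prop:chen:est:cond} actually play.

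Second, and more importantly, Cauchy--Schwarz on the cross term is too crude. Since $S_n^{(0)}(q)=q(1-q)\m\,(1+o_P(1))$, the Cauchy--Schwarz inequality gives at best $|S_n^{(1)}(q)|=O_P(\sqrt{\m\cdot S_n^{(2)}(q)})$. Even granting $S_n^{(2)}(q)=o_P(\sqrt\m)$, this yields $|S_n^{(1)}(q)|=o_P(\m^{3/4})$, which does \emph{not} imply $o_P(\sqrt\m)$, and that is what the corollary needs. The paper's proof of Theorem~\ref{prop:main} handles the analogous term $II_n$ by a finer argument: it splits it as $C_{n1}+C_{n2}$, where $C_{n1}$ pairs the centered oracle residual with the \emph{centered} increment $n^{-1}\sum_i h_{qj}-\Ex h_{qj}$ (a mean-zero empirical-process object, bounded by bracketing so that $\m$ appears to the power one, not one half), and $C_{n2}$ pairs the oracle residual with the deterministic drift $F_\m(\cT\hsolq-\cT\solq)$, another degenerate mean-zero quadratic form which is controlled by viewing $\{t_{qn}(\cdot,\phi_q)\}$ as a Lipschitz-indexed class and invoking Theorem~2.7.11 of \cite{Vaart2000}. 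You need this structure-exploiting decomposition (or an equivalent martingale/degenerate-U-statistic argument); a naive Cauchy--Schwarz throws away the centering and does not close the bound.
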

In addition, one might be interested in certain regions of quantile functions. Let $\mu$ denote any measure on $(0,1)$. Again, the next result is a direct implication of Theorem \ref{prop:main} and hence we omit its proof. 
\begin{coro}\label{coro:main:int}
Let  Assumptions \ref{ass:A1:q} and \ref{ass:A0} be satisfied.   For all $q$ in the support of $\mu$, let Assumptions \ref{ass:A3:q}, \ref{ass:A5:q}, and conditions \eqref{cond:theo:norm} and \eqref{prop:chen:est:cond} hold.
If there exists a function $\sol\in\cB$ with $\int|\cT\solq-q|d\mu(q)=0$ then
\begin{equation*}
 \Big(2\m\int_0^1 (\min(q,q')-qq')^2d\mu(q)d\mu(q')\Big)^{-1/2}\Big( \int_0^1 S_n(q)d\mu(q)-\m\int_0^1 q(1-q)d\mu(q)\Big)\stackrel{d}{\rightarrow} \cN(0,1).
\end{equation*}
\end{coro}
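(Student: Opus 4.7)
The plan is to replicate the argument used for Theorem \ref{prop:main} with Lebesgue measure $dq$ replaced by the generic measure $d\mu(q)$ throughout. First I would write $\int_0^1 S_n(q)\,d\mu(q)$ as a quadratic form in the vector of empirical moments
\begin{equation*}
 \widehat M_n(q) = \frac{1}{\sqrt n}\sum_{i=1}^n\bigl(\mathbf 1\{Y_i\le \hsol_q(Z_i)\}-q\bigr)f_{\underline{m_n}}(W_i),
\end{equation*}
so that $\int_0^1 S_n(q)\,d\mu(q) = \int_0^1 \widehat M_n(q)^t \bigl(n^{-1}\bold W_{\m}^t\bold W_{\m}\bigr)^{-} \widehat M_n(q)\,d\mu(q)\cdot n$. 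The first step is to replace $\hsol_q$ by the true $\solq$ inside $\widehat M_n$ and show that the resulting error is $o_P(\sqrt{m_n})$ uniformly over the support of $\mu$. This is exactly the linearization argument from the proof of Theorem \ref{prop:main}: the tangential cone condition in Assumption \ref{ass:A3:q} (ii) plus the local $L_W^2$ continuity in Assumption \ref{ass:A5:q} control the indicator discontinuities, while the rate conditions \eqref{cond:theo:norm} and \eqref{prop:chen:est:cond} together with the sieve measure of ill-posedness $\tau_\k$ ensure that both the variance and the Taylor-remainder terms vanish fast enough. Since all of these inputs are assumed to hold for every $q$ in the support of $\mu$, restricting the integration domain from $(0,1)$ to $\mathrm{supp}\,\mu$ only makes the bounds no worse.

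Next I would study the oracle quadratic form $L_n := n\int_0^1 M_n(q)^t (n^{-1}\bold W_\m^t \bold W_\m)^{-} M_n(q)\,d\mu(q)$ built from the true structural function. The two elementary identities that justify the centering and scaling are the standard facts
\begin{equation*}
 \EE\bigl[(\mathbf 1\{Y\le \sol(Z,q)\}-q)^2\bigm|W\bigr]=q(1-q),
\end{equation*}
\begin{equation*}
 \EE\bigl[(\mathbf 1\{Y\le \sol(Z,q)\}-q)(\mathbf 1\{Y\le \sol(Z,q')\}-q')\bigm|W\bigr]=\min(q,q')-qq',
\end{equation*}
which hold under $H_0$ because $\mathcal T \solq = q$. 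Integrating the first against $d\mu(q)$ and using Assumption \ref{ass:A1:q} to approximate $f_{\underline{m_n}}(W)^t \Ex[f_{\underline{m_n}}(W) f_{\underline{m_n}}(W)^t]^{-1} f_{\underline{m_n}}(W)$ by its trace $m_n$ produces the sieve mean $m_n\int_0^1 q(1-q)\,d\mu(q)$. Squaring the second identity and integrating against the product measure $d\mu(q)\,d\mu(q')$ produces the sieve variance $2 m_n \int_0^1 (\min(q,q')-qq')^2\,d\mu(q)\,d\mu(q')$, which is strictly positive whenever $\mu$ is nondegenerate.

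Finally, I would apply a CLT for degenerate $U$-statistics (or equivalently for quadratic forms in i.i.d.\ vectors), along the same lines as in Lemma \ref{normal:lem:1}, to the centered oracle statistic $L_n - m_n\int q(1-q)\,d\mu(q)$. The $Y$-conditional-on-$W$ martingale structure used in the proof of Theorem \ref{prop:main} carries over without change because the integration against $d\mu$ is only in the parameter $q$ and commutes with the conditional expectation $\EE[\,\cdot\,|W]$. Combining the CLT for $L_n$ with the $o_P(\sqrt{m_n})$ bound on the estimation error gives the stated convergence in distribution by Slutsky's lemma. The main obstacle to anticipate is checking that the $L^2(d\mu)$-continuity analogue of Assumption \ref{ass:A5:q} still delivers the $o_P(\sqrt{m_n})$ remainder when $\mu$ is, for instance, discrete with finitely many atoms; but since the original assumption is uniform in $q$ through the $\cB_{n}$ envelope, Fubini and the bound \eqref{eq:kappa} transfer directly to $d\mu$ once $\mu$ has finite total mass, so no further work is needed.
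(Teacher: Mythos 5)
Your proposal matches the paper's intent exactly: the paper itself omits the proof, stating that the result is a direct implication of Theorem \ref{prop:main} after replacing Lebesgue measure with $\mu$, and your plan — linearize via Assumptions \ref{ass:A3:q} and \ref{ass:A5:q}, recompute the sieve mean and variance constants by integrating the two conditional-moment identities against $d\mu$, and rerun the martingale CLT of Lemma \ref{normal:lem:1} with $dq$ replaced by $d\mu(q)$ — is precisely that argument. Your closing observation that the $\cB_n$ envelope and bound \eqref{eq:kappa} transfer to $d\mu$ under finite total mass (via Fubini) correctly addresses the only point that needs checking.
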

As mentioned in the introduction, our test is a joint test of instrument validity and monotonicity of $\sol$ in its second entry. The following remark illustrates how the test statistic $S_n(q)$ integrated over a subset of $(0,1)$ can be useful to detect which kind of deviation exists. 
\begin{rem}[Detecting the kind of deviation]
Suppose that the structural function is strictly monotonically increasing in its second entry for values $q\in(0,q')$ given some $q'\in(0,1)$ (can be checked using Corollary \ref{coro:main:int}). Further, let $q\mapsto\sol(\cdot,q)$ be either nonincreasing or decreasing on $(q',q'')$. This can be assured by letting $q''$ close to $q'$ and assuming that $\sol$ does not oscillate for $q\geq q'$.
If $W$ is a valid instrument, employing model equation \eqref{model:NP:nonsep} and $V\sim\cU(0,1)$ yields
\begin{align*}
\PP(Y\leq \sol(Z,q)|W)&=\PP(\sol(Z,V)\leq \sol(Z,q)|W)\\
&\leq\PP(V\leq q|W)\\
&=q
\end{align*}
for all $q\leq q''$ and $q''$ sufficiently close to $q'$. 
The last inequality holds regardless whether the function $q\mapsto\sol(\cdot,q)$ is strictly monotone or not.
 Consequently, if $\inf_{w\in\cW}\PP(Y\leq \sol(Z,q)|W=w)>q$ for some $q\in(q',q'')$ we may conclude that $W$ is not a valid instrument. The analysis of a one sided test based on this inequality is beyond the scope of this paper. On the other hand, we can check the kind of deviation by using the estimator $\inf_{w\in\cW} f_\umn(w)^t\big[n^{-1}\sum_{i=1}^n(\1\set{Y_i\leq \hsolq(Z_i)}-q)f_\umn(W_i)\big]$. Further, confidence statements can be achieved by using resampling methods.
\hfill$\square$
\end{rem}

\begin{rem}[Implementation of the test statistic]
This remark provides some details on the implementation of our test. First, discretize the  $(0,1)$--integral by using the grid $1/N,2/N,\dots,(N-1)/N$ for some integer $N$. In different simulations, we found that a grid size of $N=20$ was sufficiently large. Also note that by the choice of the grid we avoid evaluation at boundary points zero or one. Second, for any integer $\m\leq n^{1/2}$ estimate the structural effect $\sol_q$ given in \eqref{def:est} for each grid point $q$, each parameter $\k$ with $k_n^2\leq \m$ and  $\l=2\k$. Third, compute the standardized test statistic $S_n$ such that it is maximized w.r.t.  $\m$ and minimized w.r.t. $\k$. That is, we choose $\k$ to provide a good model fit and $\m$ to increase the power of the test. 
The choice of the dimension  parameters capture essential rate requirements imposed to achieve asymptotic normality and is also motivated by simulation results. This leads to a so-called minimum-maximum principle, see also Subsection \ref{MC:sec:spec} for more details. 
\hfill$\square$
\end{rem}

\subsection{Consistency against a fixed alternative}
Let us first establish consistency when $H_0$ does not hold, that is, there exists no function $\sol$ belonging to $\cB^{(0,1)}$ which solves $\cT\solq=q$ for all $0<q<1$. The following proposition shows that our test has the ability to reject a false null hypothesis with probability $1$ as the sample size grows to infinity.
In the following analysis of the asymptotic power of our testing procedure we let $\solq=\argmin_{\phi\in\cB}\|\cT\phi-q\|_W$. So if $H_0$ is false then $\int_0^1\|\cT\solq-q\|_W^2dq>0$ since $p_W$ is uniformly bounded from below.
\begin{prop}\label{prop:cons:q}Assume that $H_0$ does not hold. Let  Assumptions \ref{ass:A1:q}--\ref{ass:A0} be satisfied. Consider a sequence $(\gamma_n)_{n\geq 1}$ satisfying $\gamma_n=o(n/\sqrt\m)$.
If conditions \eqref{cond:theo:norm} and \eqref{prop:chen:est:cond} hold  we have
\begin{gather*}
\PP\Big(3\sqrt{5/\m}\big(S_n-\m/6\big)>\gamma_n\Big)=1+o(1)\label{prop:cons:1}.
\end{gather*}
\end{prop}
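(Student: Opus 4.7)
The plan is to show that under the alternative the unstandardized statistic $S_n$ grows at the rate $n$ (a much faster rate than either the centering $m_n/6$ or the scaling $\sqrt{m_n}$), so that the left-hand side inside the probability is of exact order $n/\sqrt{m_n}$, which dominates $\gamma_n=o(n/\sqrt{m_n})$.

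First, I would introduce the pseudo-true parameter $\sol_q := \argmin_{\phi\in\cB}\|\cT\phi-q\|_W$ and show that the sieve estimator $\hsol_q$ defined in \eqref{def:est} is consistent for it in the $\interleave\cdot\interleave_{Z,p}$ norm, uniformly in $q\in(0,1)$ up to an integrated sense. The argument parallels the one used under $H_0$: the sieve criterion converges uniformly on $\cB_n$ to $\interleave \cT\phi-q\interleave_W^2$ (by Assumptions \ref{ass:A1:q}, \ref{ass:A5:q} and a standard empirical process argument for $L_W^2$-type sieve M-estimators, cf.\ \cite{chen08}), the minimizer of the limit is unique by Assumption \ref{ass:A3:q}(i), and compactness of $\cB_\k$ under $\|\cdot\|_Z$ combined with Assumption \ref{ass:A0} gives the consistency statement $\interleave\hsol-\sol\interleave_{Z,p}=o_P(1)$.

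Second, rewrite the integrand in $S_n$ as $n$ times the squared $L_W^2$--norm of the sieve projection of the empirical residual. More precisely, for each $q$, letting $\Pi_m$ denote the $L_W^2$-projection onto $\mathrm{span}(f_1,\ldots,f_{m_n})$, standard arguments give
\begin{equation*}
\tfrac{1}{n}\,\Big(\textstyle\sum_i(\1\{Y_i\le\hsolq(Z_i)\}-q)f_{\underline{m}_n}(W_i)\Big)^t(\bold W_m^t\bold W_m)^-\sum_i(\1\{Y_i\le\hsolq(Z_i)\}-q)f_{\underline{m}_n}(W_i)=\|\Pi_m(\cT\hsol_q-q)\|_W^2+o_P(1),
\end{equation*}
uniformly in $q$, using that $n^{-1}\bold W_m^t\bold W_m\to \Ex[f_{\underline m}f_{\underline m}^t]$ (whose spectrum is bounded away from zero by Assumption \ref{ass:A1:q}(ii)) and applying Assumption \ref{ass:A5:q} together with the consistency of $\hsol_q$ to replace $\hsol_q$ by $\sol_q$ in the $\1\{\cdot\}$-indicator.

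Third, integrate over $q$ and let $m_n\to\infty$. Because $\Pi_m$ converges strongly to the identity on $L_W^2$ as $m_n\to\infty$, by dominated convergence
\begin{equation*}
\int_0^1 \|\Pi_m(\cT\sol_q-q)\|_W^2\,dq \longrightarrow \int_0^1\|\cT\sol_q-q\|_W^2\,dq =: c>0,
\end{equation*}
where positivity follows because $H_0$ fails and $p_W$ is bounded away from zero. Combining the previous step with this limit yields $S_n = nc(1+o_P(1))$. Therefore
\begin{equation*}
3\sqrt{5/m_n}\,\big(S_n-m_n/6\big) = 3\sqrt{5/m_n}\,\big(nc(1+o_P(1))-m_n/6\big)=\frac{3\sqrt{5}\,c\,n}{\sqrt{m_n}}(1+o_P(1)),
\end{equation*}
since $m_n=o(\sqrt n)$ forces $m_n/n\to 0$. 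As $\gamma_n=o(n/\sqrt{m_n})$, the probability in the statement tends to one.

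The main obstacle is step two, namely controlling the linearization error arising from replacing $\hsol_q$ by $\sol_q$ inside the indicator function uniformly in $q\in(0,1)$: this is the step that genuinely uses the $L_W^2$-continuity in Assumption \ref{ass:A5:q} and the rate conditions in \eqref{prop:chen:est:cond}, and it is where the integration over $q$ makes the analysis more delicate than in the pointwise case. Everything else is a direct adaptation of the consistency analysis for sieve minimum distance estimators of \cite{chen08} combined with the Rayleigh--Ritz type identity used in the proof of Theorem \ref{prop:main}.
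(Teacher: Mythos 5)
Your proposal follows essentially the same route as the paper's proof: introduce the pseudo-true $\sol_q=\argmin_{\phi\in\cB}\|\cT\phi-q\|_W$, establish consistency of $\hsolq$ under misspecification (the paper does this in Lemma~\ref{lem:1}), replace the indicator with $\hsolq$ by the indicator with $\solq$ via the uniform $L_W^2$-continuity in Assumption~\ref{ass:A5:q}, apply a law of large numbers to pass to $\int\|F_\m(\cT\solq-q)\|_W^2\,dq$, and conclude from the positivity of $\int\|\cT\solq-q\|_W^2\,dq$.

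The one substantive difference is a matter of taste rather than substance: you aim for the exact asymptotic $S_n=nc(1+o_P(1))$ with $c=\int\|\cT\solq-q\|_W^2\,dq$, whereas the paper only proves the one-sided bound $n^{-1}S_n\ge c/2+o_p(1)$, which already suffices because $\gamma_n\sqrt{\m}+\m$ is $o(n)$. The lower-bound route is slightly more economical: to get your exact limit you must additionally argue that $\interleave\cT\hsol_{\cdot n}-\cT\sol\interleave_W=o_p(1)$ (so that $\int\|F_\m(\cT\hsol_q-q)\|_W^2dq$ can be replaced by $\int\|F_\m(\cT\solq-q)\|_W^2dq$); this is true, via $\interleave\hsol_{\cdot n}-\sol\interleave_{Z,p}=o_p(1)$ and the Lipschitz continuity of $\phi\mapsto\cT\phi$ implied by Assumption~\ref{ass:A0}~$(v)$, but it is an extra step the lower bound avoids. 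Note that Lemma~\ref{lem:1}$(ii)$ on its own would \emph{not} give $\interleave\cT\hsol_{\cdot n}-\cT\sol\interleave_W=o_p(1)$ under the alternative, since the right-hand side there includes $\int\|\cT\solq-q\|_W^2dq>0$; so be explicit that your convergence $\cT\hsol_{qn}\to\cT\solq$ comes from the Lipschitz bound, not from that inequality. Aside from this, your identification of the main obstacle (step two, the uniform-in-$q$ replacement of $\hsolq$ by $\solq$ inside the indicator) matches exactly what the paper delegates to Lemma~\ref{lem:1} and the proof of Theorem~\ref{prop:main}, and your final reduction using $\gamma_n=o(n/\sqrt{\m})$ and $\m=o(\sqrt n)$ is the same arithmetic as the paper's.
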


\subsection{Limiting behavior under local alternatives}
In the following, we study the power of the test, that is, the probability to reject a false hypothesis against a sequence of linear local alternatives that tends to zero as the sample size tends to infinity.
We proceed similarly as \cite{Ait2001} (Section 3.3). More precisely, let $(\sol_{qn})_{n\geq 1}$ be a sequence of (nonstochastic) functions satisfying $n\int_0^1\|\cT\sol_{qn}-\cT\solq\|_W^2dq=o(\sqrt\m)$ where $\solq=\argmin_{\phi\in\cB}\|\cT\phi-q\|_W$. Then we consider alternative models defined by $\sol_{qn}$ with
\begin{equation}\label{loc:alt:ind:q}
\int_0^1\big\|\cT\sol_{qn}-q- \delta_n\xi_q\big\|_W^2dq=o(\delta_n^2)\quad  \text{ where } \quad \delta_n^2=\sqrt\m/(3\sqrt5 \, n).
\end{equation}
Here, $\xi_q\in L_W^2$ is a function satisfying $\int_0^1\|\xi_q\|_W^2dq>0$.
The next result establishes asymptotic normality for the standardized test statistic $S_n$.
\begin{prop}\label{coro:norm:q}
Let  Assumptions \ref{ass:A1:q}--\ref{ass:A0} be satisfied. Assume that $(\sol_{qn})_{n\geq 1}$ satisfies \eqref{loc:alt:ind:q} and $n\int_0^1\|\cT\sol_{qn}-\cT\solq\|_W^2dq=o(\sqrt\m)$.
If conditions \eqref{cond:theo:norm} and \eqref{prop:chen:est:cond} hold we have
\begin{equation*}
3\sqrt{5/\m}\big( S_n-\m/6\big)\stackrel{d}{\rightarrow}\cN\Big(\sum_{j= 1}^\infty\int_0^1\Ex[\xi_q(W)f_j(W)]^2dq,1\Big).
\end{equation*}
\end{prop}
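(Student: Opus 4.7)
I would decompose $S_n$ into the conditionally centered quadratic form that drives the null limit, a deterministic drift produced by the local alternative, and remainders that vanish after standardization. Setting $P=(\bold W_\m^t\bold W_\m)^-$ and $f=f_{\umn}$, the starting decomposition is
\begin{equation*}
\1\{Y_i\leq\hsolq(Z_i)\}-q=\varepsilon_i(q)+\big(\cT\sol_{qn}(W_i)-q\big)+\rho_i(q),
\end{equation*}
with $\varepsilon_i(q):=\1\{Y_i\leq\sol_{qn}(Z_i)\}-\cT\sol_{qn}(W_i)$ conditionally mean zero given $W_i$ and $\rho_i(q):=\1\{Y_i\leq\hsolq(Z_i)\}-\1\{Y_i\leq\sol_{qn}(Z_i)\}$ the sieve estimation remainder. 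Substituting into $S_n$ and expanding the quadratic form produces six pieces; the dominant three are
\begin{equation*}
S_n^{\varepsilon\varepsilon}=\int_0^1 A(q)^tPA(q)\,dq,\qquad S_n^{BB}=\int_0^1 B(q)^tPB(q)\,dq,\qquad 2S_n^{\varepsilon B}=2\int_0^1 A(q)^tPB(q)\,dq,
\end{equation*}
with $A(q)=\sum_i\varepsilon_i(q)f(W_i)$ and $B(q)=\sum_i(\cT\sol_{qn}(W_i)-q)f(W_i)$.

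For the stochastic main piece $S_n^{\varepsilon\varepsilon}$ I would simply rerun the CLT of Theorem \ref{prop:main}. Conditionally on $W$, the $\varepsilon_i(q)$ are independent with $\Ex[\varepsilon_i(q)\varepsilon_i(q')|W_i]=\min\{\cT\sol_{qn}(W_i),\cT\sol_{q'n}(W_i)\}-\cT\sol_{qn}(W_i)\cT\sol_{q'n}(W_i)$, which converges uniformly in $W_i$ and $(q,q')$ to $\min(q,q')-qq'$ since \eqref{loc:alt:ind:q} gives $\cT\sol_{qn}-q=O(\delta_n)=o(1)$. The martingale CLT proving Lemma \ref{normal:lem:1} therefore applies essentially unchanged and yields $3\sqrt{5/\m}\bigl(S_n^{\varepsilon\varepsilon}-\m/6\bigr)\stackrel{d}{\rightarrow}\cN(0,1)$.

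For the drift $S_n^{BB}$, Assumption \ref{ass:A1:q} with $\m=o(n^{1/2})$ gives $\|n^{-1}\bold W_\m^t\bold W_\m-\Sigma_\m\|=o_p(1)$ in operator norm, where $\Sigma_\m:=\Ex[ff^t]$ has smallest eigenvalue bounded below, so $P$ can be replaced by $n^{-1}\Sigma_\m^{-1}$ up to a $1+o_p(1)$ factor. A law of large numbers then yields
\begin{equation*}
S_n^{BB}=n\sum_{j=1}^{\m}\int_0^1\Ex\bigl[(\cT\sol_{qn}(W)-q)f_j(W)\bigr]^2\,dq\,\bigl(1+o_p(1)\bigr)+o_p(\sqrt{\m}),
\end{equation*}
and \eqref{loc:alt:ind:q} converts the double sum into $\delta_n^2\sum_j\int_0^1\Ex[\xi_q(W)f_j(W)]^2dq+o(\delta_n^2)$. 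Multiplying by $3\sqrt{5/\m}$ and inserting $\delta_n^2=\sqrt{\m}/(3\sqrt{5}\,n)$ produces the advertised mean shift $\sum_{j\geq 1}\int_0^1\Ex[\xi_q(W)f_j(W)]^2dq$.

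It remains to show that the cross term and the three $\rho$--pieces are $o_p(\sqrt{\m})$. For $S_n^{\varepsilon B}$, conditioning on $W$ and using independence, boundedness and conditional mean-zeroness of the $\varepsilon_i$ together with $P\bold W_\m^t\bold W_\m P\preceq P$ on the range bounds $\text{Var}(S_n^{\varepsilon B}|W)$ by a constant times $\int_0^1 B(q)^tPB(q)\,dq$, so that $S_n^{\varepsilon B}=O_p\bigl(\sqrt{\Ex S_n^{BB}}\bigr)=O_p(\m^{1/4})=o_p(\sqrt{\m})$. The three pieces involving $R_i(q)=\rho_i(q)f(W_i)$ are the main technical hurdle, since they require a sieve consistency rate for $\hsolq$ around the moving target $\sol_{qn}$; the added hypothesis $n\int_0^1\|\cT\sol_{qn}-\cT\solq\|_W^2\,dq=o(\sqrt{\m})$ is exactly what is needed to transport the rate analysis of Theorem \ref{prop:main} to the drifting target, so that Assumption \ref{ass:A5:q} together with \eqref{prop:chen:est:cond} forces these contributions to be $o_p(\sqrt{\m})$ exactly as in the null case. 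Combining the three limits via Slutsky's theorem yields the stated normal limit.
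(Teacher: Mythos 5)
Your proposal is essentially the same argument as the paper's, differing only in where the decomposition is centered. The paper first uses Lemma \ref{lem:1} to replace $\hsolq$ by the pseudo-true minimizer $\solq=\argmin_{\phi\in\cB}\|\cT\phi-q\|_W$, then splits the resulting quadratic form as $I_n+2II_n+III_n$ by centering $g_{qj}(\bold X_i,\solq)$ at its population mean $\Ex g_{qj}(\bold X,\solq)=\skalarV{\cT\solq-q,f_j}_W$: the CLT comes from $I_n$, the drift from $III_n$, and $II_n$ is bounded by the same variance inequality you use for $S_n^{\varepsilon B}$. You instead center the indicator at $\cT\sol_{qn}(W_i)$, the drifting DGP, which places the drift directly at $\cT\sol_{qn}-q=\delta_n\xi_q+o(\delta_n)$; the extra hypothesis $n\int\|\cT\sol_{qn}-\cT\solq\|_W^2dq=o(\sqrt{\m})$ is precisely what reconciles the two centers, and indeed the paper uses it to pass from $III_n$ (involving $\cT\solq-q$) to the $\xi_q$ expression. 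Two small caveats: (i) the conditional covariance is $\cT\sol_{\min(q,q')n}(W)-\cT\sol_{qn}(W)\cT\sol_{q'n}(W)$ (under monotonicity of $q\mapsto\sol_{qn}(\cdot,q)$) rather than $\min\{\cT\sol_{qn},\cT\sol_{q'n}\}$ in general, though either form converges to $\min(q,q')-qq'$ and the martingale CLT survives; (ii) Lemma \ref{lem:1} gives a rate for $\hsolq-\solq$, not $\hsolq-\sol_{qn}$, so in your $\rho$-piece analysis you should first absorb the $\sol_{qn}$-to-$\solq$ gap using the added hypothesis (the paper's ordering of steps avoids this issue by construction). With these adjustments your route gives exactly the paper's conclusion.
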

  From Proposition \ref{coro:norm:q} we see that our test  can detect local linear alternatives at the rate $\delta_n$. If $\{f_j\}_{j\geq 1}$ forms an orthonormal basis in $L_W^2$ then $\delta_n$ coincides with $m_n^{1/4}n^{-1/2}$ within a constant. Hence, our test has the same power against local linear alternatives as the test of \cite{Hong95} who consider parametric specification testing.

\subsection{Inference based on bootstrap}\label{sub:sec:boot}
Nonparametric tests that rely on the asymptotic normal approximation may perform poorly in finite samples. An alternative approach is to use bootstrap approximation. It is known that bootstrap based procedures could approximate finite sample distributions more accurately. In the following, we propose a bootstrap version of our test statistic $S_n$. 

 The bootstrap procedure is based on a sequence of independent and identically distributed random variables $\varepsilon_i$, $1\leq i\leq n$, drawn independently of the original data $(Y_i,X_i,W_i)$, $1\leq i\leq n$.
Following \cite{chen2015} we then consider the bootstrap residual function
\begin{align*}
\varepsilon_{i}\big(\1{\{Y_i\leq \solq(Z_i)\}}-q\big).
\end{align*}
Let $\widehat \sol_{qn}^*$ be the bootstrap version of  the sieve least squares estimator \eqref{def:est}, which is computed in the same way but where only $\big(\1{\{Y_i\leq \phi(Z_i)\}}-q\big)$ is replaced by $\varepsilon_{i}\big(\1{\{Y_i\leq \phi(Z_i)\}}-q\big)$. The bootstrap version $S_n^*$ of our test statistic $S_n$ given in  \eqref{sn} builds on $\widehat \sol_{qn}^*$. 
More precisely, $S_n^*$ is computed as the test statistic $S_n$  but where only $\big(\1{\{Y_i\leq \hsolq(Z_i)\}}-q\big)$ is replaced by $\varepsilon_{i}\big(\1{\{Y_i\leq \widehat \sol_{qn}^*(Z_i)\}}-q\big)$. 
 
\begin{assA}\label{ass:B}
Let $(\varepsilon_i)_{i\geq 1}$ be an independent and identically distributed  sequence of random variables drawn independently of $(Y,Z,W)$ such that $\Ex[\varepsilon]=1$, $\Var(\varepsilon)=:\sigma_\varepsilon^2\in(0,\infty)$ and $\Ex[|\varepsilon-1|^4]<\infty$
\end{assA}
Assumption \ref{ass:B} corresponds to Assumption Boot.1 of \cite{chen2015}. We slightly strengthen their assumption by imposing a fourth moment restriction, which we require  to derive asymptotic validity of the bootstrap procedure. 
Due to the bootstrap innovations $\varepsilon_i$ the constants in the sieve mean and sieve standard deviation change.
 For the bootstrap test $S_n^*$ we obtain the sieve mean constant
\begin{equation*}
 \int_0^1\Ex[\varepsilon^2(\1\{Y\leq \sol(Z,q)\}-q)^2|W]dq=(\sigma_\varepsilon^2+1)/6
\end{equation*}
and the sieve standard deviation constant
\begin{equation*}
 \Big(\int_0^1 \big(\Ex[\varepsilon^2(\1\set{Y\leq \sol(Z,q)}-q)(\1\set{Y\leq \sol(Z,q')}-q')|W]\big)^2d(q,q')\Big)^{1/2}
 =(\sigma_\varepsilon^2+1)/(3\sqrt{5}).
\end{equation*}
\cite{chen2015} show that the bootstrap version of the sieve estimator $\widehat \sol_{qn}^*$ converges at the same rate as $\hsolq$. Thus, following line by line the proof of Theorem \ref{prop:main} and using the imposed restrictions on the weights $\varepsilon_i$ we obtain the following result.
 \begin{coro}\label{prop:main:B}
Let  the assumptions of Theorem \ref{prop:main:B} be satisfied.  
Under Assumption \ref{ass:B} and null hypothesis $H_0$ we have
\begin{equation*}
 3\sqrt{5/(\m(\sigma_\varepsilon^2+1))}\big(S_n^*-\m(\sigma_\varepsilon^2+1)/6\big)\stackrel{d}{\rightarrow} \cN(0,1).
\end{equation*}
\end{coro}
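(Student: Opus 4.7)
The plan is to follow the proof of Theorem \ref{prop:main} step by step, replacing the empirical residuals $\1\{Y_i\le \hsolq(Z_i)\}-q$ by the bootstrap residuals $\varepsilon_i(\1\{Y_i\le \widehat\sol_{qn}^*(Z_i)\}-q)$, and tracking how the independent multiplicative weights $\varepsilon_i$ enter the sieve mean and variance. Conditional on the data, the $\varepsilon_i$ inject a second source of randomness, but because they are i.i.d.\ with $\Ex[\varepsilon]=1$, $\Var(\varepsilon)=\sigma_\varepsilon^2$, and finite fourth moment, the arguments used for the unbootstrapped statistic carry through verbatim once we account for the scale factor $(\sigma_\varepsilon^2+1)$ entering both the centering and the variance, as the displayed computations preceding the corollary already record.

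First I would perform the standard decomposition
\begin{equation*}
S_n^*=S_n^{*,0}+R_n^*,
\end{equation*}
where $S_n^{*,0}$ is the version of $S_n^*$ obtained by substituting the true $\sol_q$ in place of $\widehat\sol_{qn}^*$, and $R_n^*$ collects the estimation remainder. For $S_n^{*,0}$, conditional on $(Y_i,Z_i,W_i)_i$ the summands $\varepsilon_i(\1\{Y_i\le \sol(Z_i,q)\}-q)f_{\underline m_n}(W_i)$ are independent across $i$ with mean zero, and after plugging in and integrating over $q$ the calculations from Lemma \ref{normal:lem:1} in the paper's appendix apply almost verbatim, except that each conditional second moment is scaled by $\Ex[\varepsilon^2]=\sigma_\varepsilon^2+1$. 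This yields conditional mean $\m(\sigma_\varepsilon^2+1)/6$ and conditional variance factor $(\sigma_\varepsilon^2+1)^2/(3\sqrt 5)^{-2}$, exactly as displayed just before the corollary. Asymptotic normality of the standardized $S_n^{*,0}$ then follows from the same martingale/CLT argument used for $S_n$ under Assumption \ref{ass:A1:q} and condition \eqref{cond:theo:norm}; the additional fourth moment on $\varepsilon$ in Assumption \ref{ass:B} ensures that the Lyapunov/Lindeberg-type condition on the $\varepsilon_i$-weighted summands is satisfied.

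Second, to control $R_n^*$ I would invoke that, by Chen--Pouzo (2015), the bootstrap estimator $\widehat\sol_{qn}^*$ converges to $\sol_q$ at the same rate as $\hsolq$, uniformly in $q$; i.e.\ the bounds on $\interleave \widehat\sol_{\cdot n}^*-\sol\interleave_{Z,p}^2$ and on $\interleave T_\cdot(\widehat\sol_{\cdot n}^*-\sol)\interleave_W^2$ coincide (up to a factor involving $\sigma_\varepsilon^2$) with those used in the proof of Theorem \ref{prop:main}. Under Assumption \ref{ass:A5:q} and the rate condition \eqref{prop:chen:est:cond}, the identical chaining/uniform-continuity arguments show that $R_n^*=o_P(\sqrt{\m})$: the tangential cone condition \eqref{tang:cone:cond} gives $\|\cT\widehat\sol_{qn}^*-\cT\sol_q\|_W\le (1+\eta)\|T_q(\widehat\sol_{qn}^*-\sol_q)\|_W$, the $L^2$--continuity of the residual class in Assumption \ref{ass:A5:q} absorbs the discontinuity in $\phi\mapsto \1\{Y\le\phi(Z)\}$, and the fourth moment of $\varepsilon$ handles the extra $\varepsilon_i$ factors when bounding second moments of the cross terms between $S_n^{*,0}$ and $R_n^*$.

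Combining the two pieces yields the stated CLT after the natural standardization by the bootstrap mean $\m(\sigma_\varepsilon^2+1)/6$ and standard deviation $\sqrt{\m}(\sigma_\varepsilon^2+1)/(3\sqrt 5)$. The main obstacle is the control of the cross term between the bootstrap linear part and the estimation remainder: the $\varepsilon_i$'s are independent of the data but not of $\widehat\sol_{qn}^*$, so one has to carefully condition on $(Y_i,Z_i,W_i)_i$, use the known rate of $\widehat\sol_{qn}^*$ from Chen--Pouzo, and invoke $\Ex[|\varepsilon-1|^4]<\infty$ to bound the conditional variance of these cross terms. Once this is done, the proof structure of Theorem \ref{prop:main} transfers without further modification.
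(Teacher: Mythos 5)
Your proposal is correct and follows essentially the same route as the paper, which dispatches the corollary by invoking the Chen--Pouzo convergence-rate result for the bootstrap estimator $\widehat\sol_{qn}^*$ and noting that one can then follow the proof of Theorem \ref{prop:main} line by line, with the weights $\varepsilon_i$ simply rescaling the sieve mean and standard deviation by $\sigma_\varepsilon^2+1$. Your elaboration of the decomposition into $S_n^{*,0}$ plus remainder, the scaled second-moment bookkeeping in Lemma \ref{normal:lem:1}, and the role of $\Ex|\varepsilon-1|^4<\infty$ in the Lindeberg-type condition and cross-term bounds merely fills in details the paper leaves implicit.
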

It should be emphasized the asymptotic validity of the bootstrap procedure is, in particular, due to the rate condition \eqref{prop:chen:est:cond}, which ensures that the asymptotic distribution of $S_n^*$ is not affected by the estimation of the structural function. The next result establishes consistency of the bootstrap test against fixed alternatives. 
 \begin{coro}\label{prop:main:B:power}
Assume that $H_0$ does not hold and that  the assumptions of Proposition \ref{prop:cons:q} are satisfied.  
Under Assumption \ref{ass:B} we have
\begin{gather*}
\PP\Big(3\sqrt{5/(\m(\sigma_\varepsilon^2+1))}\big(S_n^*-\m(\sigma_\varepsilon^2+1)/6\big)>\gamma_n\Big)=1+o(1).
\end{gather*}
\end{coro}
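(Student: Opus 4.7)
The plan is to mirror the proof of Proposition \ref{prop:cons:q}, with the bootstrap weights $\varepsilon_i$ handled by exploiting $\mathbb{E}[\varepsilon] = 1$ and $\mathbb{E}[\varepsilon^2] = \sigma_\varepsilon^2 + 1$. Under the fixed alternative, let $\solq = \argmin_{\phi \in \cB} \|\cT\phi - q\|_W$, and note that, since $p_W$ is bounded away from zero and $H_0$ fails, there exists $c_0 > 0$ with $\int_0^1 \|\cT\solq - q\|_W^2\, dq \ge c_0$. The goal is to show that $S_n^*$ diverges at the rate $n$ in probability, so that $3\sqrt{5/(\m(\sigma_\varepsilon^2+1))}(S_n^* - \m(\sigma_\varepsilon^2+1)/6)$ is of order $n/\sqrt{\m}$, which dominates any $\gamma_n = o(n/\sqrt{\m})$.

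First I would invoke the bootstrap consistency result of \cite{chen2015} to conclude that $\widehat\sol_{qn}^*$ converges to $\solq$ at the same rate that $\widehat\sol_{qn}$ does, uniformly in $q$ in the sense measured by $\interleave\cdot\interleave_{Z,p}$; this uses Assumption \ref{ass:B} and the fact that, conditional on the sample, the bootstrapped criterion has expectation equal (up to the centering $\Ex[\varepsilon]=1$) to the original criterion. Second, I would decompose
\begin{equation*}
\varepsilon_i \big(\1\{Y_i \le \widehat\sol_{qn}^*(Z_i)\} - q\big) = \big(\1\{Y_i \le \widehat\sol_{qn}^*(Z_i)\} - q\big) + (\varepsilon_i - 1)\big(\1\{Y_i \le \widehat\sol_{qn}^*(Z_i)\} - q\big),
\end{equation*}
and expand $S_n^*$ into a leading quadratic in the first summand plus cross and quadratic terms involving $(\varepsilon_i-1)$. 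For the leading term, the argument of Proposition \ref{prop:cons:q} applied to $\widehat\sol_{qn}^*$ (in place of $\widehat\sol_{qn}$) and combined with Assumption \ref{ass:A1:q} gives
\begin{equation*}
\frac{1}{n} \int_0^1 \Big(\sum_{i=1}^n(\1\{Y_i \le \widehat\sol_{qn}^*(Z_i)\} - q) f_{\umn}(W_i)\Big)^t (\bold W_\m^t \bold W_\m)^{-} \sum_{i=1}^n(\1\{Y_i \le \widehat\sol_{qn}^*(Z_i)\} - q) f_{\umn}(W_i)\, dq \xrightarrow{\PP} \int_0^1 \|\cT\solq - q\|_W^2\, dq,
\end{equation*}
so this piece contributes $n c_0 + o_P(n)$.

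For the cross and quadratic $(\varepsilon_i - 1)$ terms, conditional on the data the weights $\varepsilon_i - 1$ are i.i.d. mean zero with finite fourth moment; using Chebyshev together with Assumption \ref{ass:A1:q}\,$(i)$ and the boundedness of indicators, their contributions to $S_n^*$ are $O_P(n\sqrt{\m})$ for the cross term and $O_P(n)$ (with the right sign constant converging to $(\sigma_\varepsilon^2+1)\m/6$ after centering) for the diagonal piece. Combining, $S_n^*/n \to (\sigma_\varepsilon^2 + 1) \int_0^1 \|\cT\solq - q\|_W^2\, dq$ in probability, and the standardized statistic grows like $n/\sqrt{\m}$, yielding the claim. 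The main obstacle will be the uniform control over $q\in(0,1)$ of the bootstrap sieve estimator together with the cross terms involving $(\varepsilon_i-1)$, since these mix random bootstrap innovations with estimation error in $\widehat\sol_{qn}^*$; this is where Assumption \ref{ass:A5:q} (local $L^2$ continuity) and the fourth-moment restriction in Assumption \ref{ass:B} are crucial, analogous to the role they play under $H_0$ in Corollary \ref{prop:main:B}.
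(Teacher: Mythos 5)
Your high-level strategy matches the paper's: mirror Proposition \ref{prop:cons:q}, invoke the \cite{chen2015} bootstrap consistency result for $\widehat\sol_{qn}^*$, and exploit $\Ex[\varepsilon]=1$ so that the bootstrap-weighted empirical moment has the same population mean as the original one. However, your claimed intermediate limit is wrong and would not survive scrutiny. The factor of $(\sigma_\varepsilon^2+1)$ should \emph{not} appear in the limit of $S_n^*/n$: by the LLN and independence of $\varepsilon$ from $(Y,Z,W)$,
\begin{equation*}
n^{-1}\sum_{i}\varepsilon_i\big(\1\{Y_i\leq\phi(Z_i)\}-q\big)f_j(W_i)
\ \stackrel{p}{\longrightarrow}\
\Ex[\varepsilon]\cdot\Ex\big[(\cT\phi(W)-q)f_j(W)\big]
=\Ex\big[(\cT\phi(W)-q)f_j(W)\big],
\end{equation*}
so $n^{-1}S_n^*$ converges (or is bounded below, which is all that is needed) to a quantity of the form $\sum_j\int|\Ex[(\cT\solq-q)f_j(W)]|^2dq$, with no $\sigma_\varepsilon^2$. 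The $(\sigma_\varepsilon^2+1)$ appears only in the recentering $\m(\sigma_\varepsilon^2+1)/6$ and the scaling $\sqrt{\m(\sigma_\varepsilon^2+1)}$ — lower-order $O_p(\m)$ effects — not in the $O(n)$ leading term that drives consistency.

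Second, your claimed bound $O_P(n\sqrt{\m})$ for the cross term is itself a red flag: if taken literally it would \emph{dominate} the leading $O(n)$ term (since $\m\to\infty$), making the argument collapse. Working conditionally on the data, the cross term $\sum_j\int A_j B_j\,dq$ with $A_j=n^{-1/2}\sum_i(\cdots)f_j(W_i)$, $B_j=n^{-1/2}\sum_i(\varepsilon_i-1)(\cdots)f_j(W_i)$ has conditional mean zero and conditional variance $O_p(n)$, hence is $O_p(\sqrt n)$, genuinely lower order than $n$. The paper avoids all of this by not decomposing $\varepsilon=1+(\varepsilon-1)$: it simply shows
\begin{equation*}
\int\Big\|n^{-1}\sum_i\varepsilon_i\big(\1\{Y_i\leq\solq(Z_i)\}-q\big)f_\umn(W_i)\Big\|^2dq
=\int\big\|\Ex[(\cT\solq-q)f_\umn(W)]\big\|^2dq+o_p(1)
\geq\tfrac12\int\|\cT\solq-q\|_W^2dq+o_p(1)
\end{equation*}
directly from $\Ex[\varepsilon]=1$ and $\Var(\varepsilon)<\infty$, then combines with the bootstrap convergence of $\widehat\sol_{qn}^*$ exactly as in Proposition \ref{prop:cons:q}. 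Your route reaches the same destination but introduces spurious constants and an incorrect rate along the way; the decomposition buys nothing here.
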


\section{Extensions}\label{sec:extension}
As we see in this section, our testing procedure can potentially be applied to a much wider range of situations. We now discuss corollaries that
generalize the previous results in different ways. For the following analysis we focus on a fixed quantile $q\in(0,1)$.

\subsection{Testing exogeneity}\label{sec:exog}
Falsely assuming exogeneity of the regressors leads to inconsistent estimators while on the other hand treating exogenous regressors as if they were endogenous can lower the rate of convergence dramatically. 
In this subsection, we develop a nonparametric test of exogeneity that is robust against possible nonseparability of unobservables. The test statistic is similar to the statistic $S_n(q)$ given in \eqref{sn(q)} but where $\hsolq$ is replaced by an estimator of the conditional quantile function. 

In contrast to the previous section, we assume here that there exists a unique function $\sol_q$ satisfying $Y= \solq(Z)+U_q$ with $\PP(U_q\leq 0|W)=q$ and for some $q\in(0,1)$. The relation between $Z$ and $W$ is thus restricted through this maintained hypothesis. 
Under the maintained hypothesis, we propose a test whether the vector of regressors $Z$ is exogenous at a quantile $q\in(0,1)$, that is, 
\begin{equation*}
H_0^{\textsl e}:\,\PP(U_q\leq 0|Z)=q.
\end{equation*}
In the following, we denote the conditional quantile function by $\sol^{\textsl e}_q$ which satisfies $\PP(Y\leq \sol^{\textsl e}_q(Z)|Z)=q$.
The null hypothesis $H_0^{\textsl e}$ is satisfied if and only if the structural function $\solq$ coincides with the conditional quantile function $\sol_q^{\textsl e}$. 
 Further, under nonsingularity of the operator $\cT$, hypothesis $H_0^{\textsl e}$ is equivalent to
\begin{equation}
 \cT\sol^{\textsl e}_q=q.
\end{equation}
Our test of exogeneity, which we propose below, is based on this equation or equivalently on $\PP(Y\leq \sol^{\textsl e}_q(Z)|W)=q$. More precisely, to test exogeneity we replace in the statistic $S_n(q)$ given in \eqref{sn(q)} the estimator of $\solq$ by an estimator of $\sol^{\textsl e}_q$.

In the following, $\hsol^{\textsl e}_{qn}$ denotes an estimator for the conditional quantile function $\sol^{\textsl e}_q$. For instance,  an estimator of $\sol^{\textsl e}_q$ is given by
\begin{equation}\label{est:ex}
 \hsol^{\textsl e}_{qn}
 =\argmin_{\phi\in\cB_\k}\sum_{i=1}^n\varrho_q\big(Y_i-\phi(Z_i)\big)
\end{equation}
where $\varrho_q(u)=|u|-(2q-1)u$ is the check function and here, $\cB_\k=\big\{\phi\in\cB:\,\phi(\cdot)=\sum_{j=1}^\k\beta_je_j(\cdot)\big\}$.
 For B-spline basis functions and an additional penalty this estimator was proposed by \cite{koenker1994}. In the following, let $p_Z$ and $p_{Z|W}$ denote the marginal density of $Z$ and the conditional density of $Z$ given $W$, respectively.
\begin{assA}\label{ass:ex}
 (i) There exists a function $\solq\in\cB$ such that $\cT\solq=q$. 
(ii) $p_{Y|Z,W}(\cdot,Z,W)$ is continuously differentiable, $|\partial p_{Y|Z,W}(\cdot,Z,W)/\partial y|\leq C$ and $p_{Z|W}(\cdot,W)\leq Cp_{Z}(\cdot)$ for some constant $C>0$. 
(iii) There exists a sequence $(R_n^{\textsl e})_{n\geq 1}$ with $R_n^{\textsl e}=o(1)$ such that $\|\hsol_{qn}^{\textsl e}-\sol_q^e\|_Z^2=O_p(R_n^{\textsl e})$.
\end{assA}
Assumption \ref{ass:ex} $(i)$ formalizes the maintained hypothesis of a correctly specified nonparametric instrumental quantile moment equation. Section \ref{sec:main} provides a test for it.
Due to Assumption \ref{ass:ex} $(ii) $ we do not require Assumption \ref{ass:A3:q} $(ii)$ but can rather rely on an upper bound of the Taylor reminder of $\cT$ obtained by \cite{chen2012}. In this sense, the test of exogeneity presented below requires weaker restrictions on the local curvature of $\cT$ than in the case of specification testing. Assumption \ref{ass:ex} specifies a rate requirement for the $L_Z^2$ distance of the estimator $\hsol_{qn}^{\textsl e}$. For instance, under $H_0^{\textsl e}$,  Assumption \ref{ass:ex} $(iii)$ is satisfied with $R_n^{\textsl e}=\k/n+k_n^{-2r}$ when $\hsol_{qn}^{\textsl e}$ is given by the estimator \eqref{est:ex} with the B-splines basis functions $\{e_j\}_{j\geq 1}$ and $Z$ is scalar, see \cite{he1994}. The same rate is obtained by \cite{horowitz2005} in the case of multivariate $Z$ in an additive quantile regression model.

For a test of the null hypothesis  $H_0^{\textsl e}$ we replace in the definition of $S_n(q)$ given in \eqref{sn(q)} the estimator $\hsolq$ by $\hsol^{\textsl e}_{qn}$. That is,
\begin{equation*}
 S_n^{\textsl e}(q)=\Big(\sum_{i=1}^n(\1\{Y_i\leq \hsolq^{\textsl e}(Z_i)\}-q)f_\umn(W_i)\Big)^t(\bold W_\m^t\bold W_\m)^-\sum_{i=1}^n(\1\{Y_i\leq \hsolq^{\textsl e}(Z_i)\}-q)f_\umn(W_i)
\end{equation*}
We reject the hypothesis $H_0^{\textsl e}$ if $S_n^{\textsl e}(q)$ becomes too large.
The next result establishes asymptotic normality of our test statistic $S_n^{\textsl e}(q)$ under the null hypothesis.
\begin{coro}\label{coro:int}
Let Assumptions \ref{ass:A1:q}, \ref{ass:A3:q} $(i)$, \ref{ass:A5:q}, \ref{ass:A0} and \ref{ass:ex} hold. Let $\m$ satisfy condition \eqref{cond:theo:norm}. Consider the estimator $\hsolq^{\textsl e}$ given in \eqref{est:ex} where $\k$ satisfies
\begin{equation}\label{cond:ex}
n R_n^{\textsl e}=o(\sqrt{\m})\quad 
\text{and}\quad R_n^{\textsl e}=o\big(m_n^{-(1+\epsilon)/\kappa}\big)
\end{equation}
for some $\epsilon>0$.
 Then we have under $H_0^{\textsl e}$
\begin{equation*}
 \big(2\m\big)^{-1/2}\Big(\frac{1}{q(1-q)}\,S_n^{\textsl e}(q)-\m\Big)\stackrel{d}{\rightarrow} \cN(0,1).
\end{equation*}
\end{coro}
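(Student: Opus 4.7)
The plan is to mimic, at a single quantile, the argument behind Theorem \ref{prop:main} (equivalently Corollary \ref{coro:main}), but with the sieve minimum distance estimator $\hsolq$ replaced by the conditional quantile estimator $\hsol_{qn}^{\textsl e}$. Under $H_0^{\textsl e}$ together with Assumption \ref{ass:ex}$(i)$, the conditional quantile function $\sol_q^{\textsl e}$ coincides with the (unique, by Assumption \ref{ass:A3:q}$(i)$) solution $\sol_q\in\cB$ of $\cT\sol_q=q$. Consequently $\Ex[(\1\{Y\leq \sol_q^{\textsl e}(Z)\}-q)\,|\,W]=0$, and the infeasible statistic $\widetilde S_n^{\textsl e}(q)$ obtained from $S_n^{\textsl e}(q)$ by plugging $\sol_q^{\textsl e}$ in place of $\hsol_{qn}^{\textsl e}$ is a sieve quadratic form in the conditionally centered residuals $\1\{Y_i\leq \sol_q^{\textsl e}(Z_i)\}-q$.

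\textbf{Step 1: CLT for the infeasible statistic.} First I would establish
\begin{equation*}
 (2\m)^{-1/2}\Big(\tfrac{1}{q(1-q)}\,\widetilde S_n^{\textsl e}(q)-\m\Big)\stackrel{d}{\rightarrow}\cN(0,1).
\end{equation*}
This is the single-quantile analogue of the CLT for sieve quadratic forms used in the proof of Corollary \ref{coro:main}. The sieve mean $\m\,q(1-q)$ and variance $2\m(q(1-q))^2$ arise from the elementary identities $\Ex[(\1\{Y\leq \sol_q^{\textsl e}(Z)\}-q)^2|W]=q(1-q)$ and $(\Ex[(\1\{Y\leq \sol_q^{\textsl e}(Z)\}-q)^2|W])^2=(q(1-q))^2$, combined with Assumption \ref{ass:A1:q}$(ii)$. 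The CLT itself is the standard Lindeberg-type statement for $m_n$-dimensional quadratic forms in conditionally centered residuals, for which the rate requirement $\m=o(n^{1/2})$ in \eqref{cond:theo:norm} together with Assumption \ref{ass:A1:q}$(i)$ is sufficient.

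\textbf{Step 2: Plug-in error is negligible.} Next I would show $S_n^{\textsl e}(q)-\widetilde S_n^{\textsl e}(q)=o_p(\sqrt{\m})$, so that Slutsky's lemma transfers the conclusion of Step 1 to $S_n^{\textsl e}(q)$. The difference splits into a ``drift'' piece and a ``stochastic equicontinuity'' piece. For the drift, Assumption \ref{ass:ex}$(ii)$ delivers the quadratic Taylor bound \eqref{weak:tang:cone:cond}, so that $\|\cT\hsol_{qn}^{\textsl e}-\cT\sol_q^{\textsl e}\|_W\leq c_1 c_2\,\|\hsol_{qn}^{\textsl e}-\sol_q^{\textsl e}\|_Z^2=O_p(R_n^{\textsl e})$ by Assumption \ref{ass:ex}$(iii)$; multiplied by the sample size this contributes at most $O_p(n R_n^{\textsl e})$ to the unstandardized statistic, which is $o_p(\sqrt{\m})$ by the first part of condition \eqref{cond:ex}. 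For the equicontinuity piece, one controls
\begin{equation*}
 \max_{1\leq j\leq\m}\Big|\tfrac{1}{\sqrt n}\sum_{i=1}^n\big(\1\{Y_i\leq\hsol_{qn}^{\textsl e}(Z_i)\}-\1\{Y_i\leq\sol_q^{\textsl e}(Z_i)\}\big)f_j(W_i)\Big|
\end{equation*}
by applying Assumption \ref{ass:A5:q} with $r_n^2\sim R_n^{\textsl e}$ (admissible by Assumption \ref{ass:ex}$(iii)$), combined with a chaining/maximal inequality over the $\m$ coordinates and the envelope bound $\sup_w\|f_{\umn}(w)\|^2\leq C\m$ from Assumption \ref{ass:A1:q}$(i)$; the second part of \eqref{cond:ex}, $R_n^{\textsl e}=o(m_n^{-(1+\epsilon)/\kappa})$, provides exactly the slack needed for this contribution to be $o_p(\sqrt{\m})$ after the $(\bold W_\m^t\bold W_\m)^-$ projection.

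\textbf{Main obstacle.} The delicate step is controlling the stochastic equicontinuity piece in Step 2, since $\1\{Y\leq\phi(Z)\}$ is discontinuous in $\phi$ and $\hsol_{qn}^{\textsl e}$ is data-dependent. Handling this uniformly over a random neighborhood of $\sol_q^{\textsl e}$ and simultaneously over all $\m$ basis-weighted sums is precisely the technical content that Assumption \ref{ass:A5:q} is imported to deliver, and is the reason for the exponent $(1+\epsilon)/\kappa$ in \eqref{cond:ex}. Once this is in hand, combining Steps 1 and 2 via Slutsky's lemma yields the stated normal limit.
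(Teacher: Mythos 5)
Your plan follows the same route as the paper: reduce to the decomposition behind Theorem \ref{prop:main}, show the drift $n\|\cT\hsol_{qn}^{\textsl e}-\cT\sol_q^{\textsl e}\|_W^2 = o_p(\sqrt\m)$, and handle stochastic equicontinuity via Assumption \ref{ass:A5:q}. There is, however, a genuine error in how you bound the drift. You write $\|\cT\hsol_{qn}^{\textsl e}-\cT\sol_q^{\textsl e}\|_W\leq c_1 c_2\,\|\hsol_{qn}^{\textsl e}-\sol_q^{\textsl e}\|_Z^2$, invoking the quadratic bound \eqref{weak:tang:cone:cond}. But \eqref{weak:tang:cone:cond} only controls the Taylor \emph{remainder} $\|\cT\phi-\cT\solq-T_q(\phi-\solq)\|_W$; it says nothing about the linear Fr\'echet derivative term $T_q(\hsol_{qn}^{\textsl e}-\sol_q^{\textsl e})$, which in fact dominates. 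The correct decomposition is
$\|\cT\hsol_{qn}^{\textsl e}-\cT\sol_q^{\textsl e}\|_W\leq C\big(\|T_q(\hsol_{qn}^{\textsl e}-\sol_q^{\textsl e})\|_W+\|\hsol_{qn}^{\textsl e}-\sol_q^{\textsl e}\|_Z^2\big)$,
and one then uses Assumption \ref{ass:A0}$(v)$ (boundedness of $p_{Y|Z,W}$) together with Jensen's inequality to get the \emph{linear} bound $\|T_q(\hsol_{qn}^{\textsl e}-\sol_q^{\textsl e})\|_W^2\leq C\,\|\hsol_{qn}^{\textsl e}-\sol_q^{\textsl e}\|_Z^2 = O_p(R_n^{\textsl e})$. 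Note the exponents: your inequality would give $\|\cT\hsol_{qn}^{\textsl e}-\cT\sol_q^{\textsl e}\|_W = O_p(R_n^{\textsl e})$ and hence the square would be $O_p\big((R_n^{\textsl e})^2\big)$, whereas the true order is $\|\cT\hsol_{qn}^{\textsl e}-\cT\sol_q^{\textsl e}\|_W^2 = O_p(R_n^{\textsl e})$; your own intermediate statement ``contributes at most $O_p(nR_n^{\textsl e})$'' is actually consistent with the correct (linear) bound rather than with the bound you wrote down, so there is an internal inconsistency. Fortunately, once the drift is correctly put at $O_p(nR_n^{\textsl e})$, condition \eqref{cond:ex} delivers $o_p(\sqrt\m)$ and the rest of your argument---Step 1 and the equicontinuity control in Step 2---does match the paper's.
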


\begin{example}
Let us illustrate when condition \eqref{cond:ex} holds true.
Let $\m\sim n^\iota$ with $0<\iota<1/3$. Then for \eqref{cond:ex} to hold let $\k\sim n^\chi$ where $\chi>0$ satisfies
\begin{equation*}
 \max\Big(\frac{1-\iota/2}{2r},\,\frac{\iota}{r\kappa}\Big)<\chi<\min\Big(\frac{\iota}{2}, 1-\frac{2\iota}{\kappa}\Big).
\end{equation*}
Hence, we require $r>2/\kappa$ which is a slightly stronger restriction than Assumption \ref{ass:A0} $(i)$. 
$\hfill\square$
\end{example}

In the following, we study the power of the test, that is, the probability to reject a false hypothesis against a sequence of linear local alternatives that tends to zero as the sample size tends to infinity.
More precisely, let $(\sol_{qn}^{\textsl e})_{n\geq 1}$ be a sequence of (nonstochastic) functions satisfying
\begin{equation}\label{loc:alt:ind:q:ex}
\big\|\cT\sol_{qn}^{\textsl e}-q- \delta_n\xi_q^{\textsl e}\big\|_W^2=o(\delta_n^2)\quad  \text{ where } \quad \delta_n^2=\sqrt{2\m}.
\end{equation}
Here, $\xi_q^{\textsl e}\in L_W^2$ is a function satisfying $\|\xi_q^{\textsl e}\|_W^2>0$.
The next result establishes asymptotic normality for the standardized test statistic $S_n^{\textsl e}(q)$.
\begin{coro}\label{loc:power:ex}
Let Assumptions \ref{ass:A1:q}, \ref{ass:A3:q} $(i)$, \ref{ass:A5:q}, \ref{ass:A0}, and \ref{ass:ex} be satisfied. Assume that $(\sol_{qn}^e)_{n\geq 1}$ satisfies \eqref{loc:alt:ind:q:ex}.
If condition \eqref{cond:ex} holds true we have
\begin{equation*}
\big(2\m\big)^{-1/2}\Big(\frac{1}{q(1-q)}\,S_n^{\textsl e}(q)-\m\Big)\stackrel{d}{\rightarrow}\cN\Big(\sum_{j= 1}^\infty\int_0^1\Ex[\xi_q^{\textsl e}(W)f_j(W)]^2dq,1\Big).
\end{equation*}
\end{coro}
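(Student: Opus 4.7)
The plan is to adapt the proof of Proposition \ref{coro:norm:q} to the single-quantile exogeneity setting: the integration over $q$ is replaced by the Dirac measure at the fixed $q$, and the sieve structural estimator $\hsolq$ is replaced by the conditional-quantile estimator $\hsolq^{\textsl e}$. The centering $\m$ and scale $q(1-q)\sqrt{2\m}$ arise exactly as in Corollary \ref{coro:main}, via the identity $\Ex[(\1\{Y\le \sol_q^{\textsl e}(Z)\}-q)^2|W]=q(1-q)$ and the single-quantile analog of Lemma \ref{normal:lem:1}. The local alternative contributes only an additional deterministic drift through $\cT\sol_{qn}^{\textsl e}-q\approx \delta_n\xi_q^{\textsl e}$.

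First I would decompose the score
\begin{equation*}
\Psi_n \;:=\; n^{-1/2}\sum_{i=1}^n\bigl(\1\{Y_i\le\hsolq^{\textsl e}(Z_i)\}-q\bigr)f_\umn(W_i) \;=\; V_n + A_n + D_n,
\end{equation*}
where $V_n=n^{-1/2}\sum_i[\1\{Y_i\le\sol_{qn}^{\textsl e}(Z_i)\}-\cT\sol_{qn}^{\textsl e}(W_i)]f_\umn(W_i)$ is a centered i.i.d.\ sum at the drifting DGP; $A_n$ is the empirical-process residual obtained by swapping $\sol_{qn}^{\textsl e}$ for $\hsolq^{\textsl e}$ inside the indicator and recentering by the corresponding conditional expectations; and $D_n=n^{1/2}(\cT\hsolq^{\textsl e}(W_\cdot)-q)f_\umn(W_\cdot)$ collects the conditional mean, which itself splits as $n^{1/2}(\cT\sol_{qn}^{\textsl e}-q)f_\umn$ (the drift) plus the Fr\'echet linearization $n^{1/2}T_q(\hsolq^{\textsl e}-\sol_{qn}^{\textsl e})f_\umn$ plus a Taylor remainder. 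After substitution into $S_n^{\textsl e}(q)=n\Psi_n^t(\bold W_\m^t\bold W_\m)^-\Psi_n$ and using $(\bold W_\m^t\bold W_\m)^-\approx n^{-1}\Sigma^{-1}$ with $\Sigma=\Ex[f_\umn f_\umn^t]$ (Assumption \ref{ass:A1:q}), one obtains $S_n^{\textsl e}(q)\approx \Psi_n^t\Sigma^{-1}\Psi_n$ and the task reduces to bounding every cross-product but $V_n^t\Sigma^{-1}V_n$.

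For $A_n$, Assumption \ref{ass:A5:q} applied with $r_n^2\sim R_n^{\textsl e}$ (using $\|\hsolq^{\textsl e}-\sol_{qn}^{\textsl e}\|_Z^2=O_p(R_n^{\textsl e})$ from Assumption \ref{ass:ex}(iii)) bounds each coordinate variance by $(R_n^{\textsl e})^\kappa$, so $\Ex\|A_n\|^2\le C\m(R_n^{\textsl e})^\kappa=o(1)$ by the second half of \eqref{cond:ex}; Cauchy--Schwarz then makes the cross-term $V_n^t\Sigma^{-1}A_n$ of order $\|V_n\|\cdot\|A_n\|=O_p(\sqrt\m)\cdot o_p(1)=o_p(\sqrt\m)$. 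For the Taylor remainder inside $D_n$, inequality \eqref{weak:tang:cone:cond} (available under Assumption \ref{ass:ex}(ii)) gives $\|\cT\hsolq^{\textsl e}-\cT\sol_{qn}^{\textsl e}-T_q(\hsolq^{\textsl e}-\sol_{qn}^{\textsl e})\|_W=O_p(R_n^{\textsl e})$, and the first half of \eqref{cond:ex} ($nR_n^{\textsl e}=o(\sqrt\m)$) makes both the remainder and the linear piece $n^{1/2}T_q(\hsolq^{\textsl e}-\sol_{qn}^{\textsl e})$ contribute $o_p(\sqrt\m)$ to the quadratic form. The drift part yields $n\delta_n^2\sum_{j\le\m}\Ex[\xi_q^{\textsl e}(W)f_j(W)]^2+o(\sqrt\m)$, which after the $(2\m)^{-1/2}/(q(1-q))$ standardization converges to the stated mean shift.

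The leading quadratic form $V_n^t\Sigma^{-1}V_n$ satisfies the single-quantile analog of Lemma \ref{normal:lem:1}, yielding the $\cN(\,\cdot\,,1)$ limit. The main obstacle is the control of $A_n$ and of the cross-term $V_n^t\Sigma^{-1}A_n$: the indicator is non-smooth, so no Taylor expansion is available and one must invoke the $L_W^2$ modulus-of-continuity in Assumption \ref{ass:A5:q}. The two halves of condition \eqref{cond:ex} are tuned precisely so that the $\sqrt{n}$ scaling, the $\m$-dimensional sieve, and the conditional-quantile estimation rate $R_n^{\textsl e}$ all sit comfortably below the $\sqrt\m$ scale of the quadratic form, with the $V_n^t\Sigma^{-1}A_n$ cross-term providing the tightest constraint.
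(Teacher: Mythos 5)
Your proposal is correct and follows essentially the same route as the paper: reduce $S_n^{\textsl e}(q)$ to the quadratic form at the drifting pseudo-true function plus an $o_p(\sqrt{\m})$ remainder (controlled via Assumption 3, the weak tangential-cone bound available under Assumption 6$(ii)$, and the two rate conditions in \eqref{cond:ex}), then split off a centered part feeding the single-quantile CLT and a deterministic drift giving the noncentrality. Your presentation is merely organized at the level of the score vector $\Psi_n$ with a conditional-expectation centering, whereas the paper decomposes the quadratic form directly and centers by unconditional expectations (as in the proofs of Theorem \ref{prop:main}, Corollary \ref{coro:int}, and Proposition \ref{coro:norm:q}); this is a cosmetic reorganization, and the bound $\Ex\|A_n\|^2\lesssim\m (R_n^{\textsl e})^\kappa$ you invoke does require the bracketing/maximal-inequality machinery underlying Assumption 3, which you correctly flag as the crux but do not spell out.
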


\subsection{Testing additivity}\label{sec:add}
The test statistic given in \eqref{sn} is also convenient to check additional restrictions on the structural effect $\solq$ for $0<q<1$. These additional restrictions can be easily imposed by constraints on the functions of the sieve space $\cB_\k$. For instance, one may impose an additive structure of the quantile structural effects.

By assuming an additive structure of $\solq$ one might reduce the effect of dimensionality of the regressors on the convergence rate of an estimator (cf. \cite{chen08} in case of instrumental quantile regression).
Applying this structure leads, however, to inconsistent estimators in general if the function $\solq$ does not obey an additive form. Our aim in the following is to test whether
\begin{equation*}\label{H0:add}
 H_0^{add}:\text{there exist functions }\sol^1_q,\sol^2_q\in\cB\text{ such that }\,\PP(Y\leq \sol^1_q(Z')+\sol^2_q(Z'')|W)=q.
\end{equation*}
Similarly as above we obtain the test statistic
\begin{equation*}
  S_n^{add}(q)=\Big(\sum_{i=1}^n(\1\{Y_i\leq \hsolq^{\textsl add}(Z_i)\}-q)f_\umn(W_i)\Big)^t(\bold W_\m^t\bold W_\m)^-\sum_{i=1}^n(\1\{Y_i\leq \hsolq^{\textsl add}(Z_i)\}-q)f_\umn(W_i)
\end{equation*}
Here the estimator $\hsolq^{\textsl add}=(\hsol_{qn}^1,\hsol_{qn}^2)$ of $\solq=(\sol^1_q,\sol^2_q)$ is given by \eqref{def:est} where the sieve basis is a tensor product of basis functions that depend either on $Z'$ or $Z''$.  For a more detailed discussion we refer to Section 6 of \cite{chen08}. The next asymptotic normality result is a direct consequence of Corollary \ref{coro:main} and hence its proof is omitted.
\begin{coro}Given the conditions of Corollary \ref{coro:main} we have under $H_0^{add}$
\begin{equation*}
 \big(2\m\big)^{-1/2}\Big(\frac{1}{q(1-q)}\,S_n^{add}(q)-\m\Big)\stackrel{d}{\rightarrow} \cN(0,1).
\end{equation*}
\end{coro}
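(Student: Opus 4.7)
The plan is to reduce the claim directly to Corollary \ref{coro:main} by observing that $S_n^{add}(q)$ is structurally identical to $S_n(q)$ once the unrestricted sieve $\cB_\k$ is replaced by the additive sieve $\cB_\k^{add}$ spanned by the tensor product of univariate bases in $Z'$ and $Z''$. Under $H_0^{add}$, there is an additive representation $\sol_q(z',z'') = \sol_q^1(z') + \sol_q^2(z'')$ solving $\cT\sol_q = q$, so $\sol_q \in \cB^{add}$ and the estimator $\hsolq^{add}$ is exactly the sieve least squares estimator \eqref{def:est} taken over $\cB_\k^{add}$. Thus the only thing to check is that Assumptions \ref{ass:A1:q}--\ref{ass:A0} together with the rate restrictions \eqref{cond:theo:norm}--\eqref{prop:chen:est:cond} continue to hold when the sieve is additive.

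Next I would verify these conditions coordinate by coordinate. The basis conditions on $f_{\umn}$ are unaffected since they depend on $W$, not on $\cB_\k$. For Assumption \ref{ass:A3:q}, identifiability of the additive $\sol_q$ and the tangential cone inequality restricted to $\cB^{add}$ follow from the same sufficient conditions discussed in Example \ref{exmp:ident} applied to the additive subspace (as developed in Section 6 of \cite{chen08}). The projection $\varPi_\k \sol_q$ now lies in $\cB_\k^{add}$ and standard approximation results for tensor-product additive sieves give
\[
 \|\varPi_\k \sol_q - \sol_q\|_{Z,p} = O\big(k_n^{-\alpha/\max(d_{z'},d_{z''})}\big),
\]
which is no worse (and typically sharper) than the full-dimensional rate $O(k_n^{-\alpha/d_z})$ used in Remark \ref{rem:chen}. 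The sieve measure of ill-posedness $\tau_\k$ based on $\cB_\k^{add}$ is dominated by its unrestricted analogue since $\cB_\k^{add}\subset \cB_\k$, so the link-condition bounds of Remark \ref{rem:chen} carry over. Hence conditions \eqref{cond:theo:norm}--\eqref{prop:chen:est:cond} continue to be implied by the conditions of Corollary \ref{coro:main}.

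Given these verifications, the proof of Corollary \ref{coro:main} applies verbatim. The key decomposition
\[
 \sum_{i=1}^n \big(\1\{Y_i \le \hsolq^{add}(Z_i)\}-q\big)f_{\umn}(W_i)
 \;=\; \sum_{i=1}^n \big(\1\{Y_i \le \sol_q(Z_i)\}-q\big)f_{\umn}(W_i) + R_n,
\]
holds exactly as before, with $R_n$ negligible by \eqref{prop:chen:est:cond}; the leading stochastic term is identical to the one controlling the asymptotics of $S_n(q)$, so Lemma \ref{normal:lem:1}-type CLT arguments for quadratic forms in $f_{\umn}$ deliver the normal limit with sieve mean $\m q(1-q)$ and sieve variance $2\m(q(1-q))^2$.

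The main obstacle is therefore not the limit theorem itself but checking that the additive restriction does not degrade identification, the tangential cone inequality, or the rate of $\hsolq^{add}$. All three are established in the additive instrumental quantile regression framework of \cite{chen08}, from which the improved dimension dependence in the bias and unchanged ill-posedness bounds can be inherited. Once these ingredients are in place, no new arguments beyond those of Theorem \ref{prop:main} and Corollary \ref{coro:main} are needed, which is why the author omits the proof.
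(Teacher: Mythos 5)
Your proposal is correct and follows the same route the paper intends: the paper explicitly omits the proof as a direct consequence of Corollary \ref{coro:main}, and you supply the implicit verification that replacing $\cB_\k$ with the additive tensor-product sieve $\cB_\k^{add}$ preserves Assumptions \ref{ass:A1:q}--\ref{ass:A0} and the rate conditions \eqref{cond:theo:norm}--\eqref{prop:chen:est:cond}, after which the argument of Theorem \ref{prop:main} (restricted to a single quantile) goes through unchanged. The observations you make---that the $f_\umn$-based conditions are sieve-independent, that the additive approximation error is no worse than the full-dimensional rate, and that the sieve ill-posedness measure is no larger over the additive subspace---are exactly the points one would need to check, and they match the cited treatment of the additive case in Section 6 of \cite{chen08}.
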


\section{Monte Carlo simulation}\label{sec:MC}
In this section, we study the finite sample performance of our test by presenting the results of a Monte Carlo investigation. There are $1000 $ Monte Carlo replications in each experiment. Results are presented for the nominal levels $0.05$. Let $\Phi$ denote the cumulative standard normal distribution function.
Throughout this simulation study, realizations $(Z,W)$ were generated by $Z = \Phi\big(\zeta\omega +\sqrt{1-\zeta^2}\,\varepsilon\big)$ and $W =\Phi(\omega)$ where  $\omega$ is independent of $\varepsilon$ and $\omega,\,\varepsilon\sim \cN(0, 1)$. Here, the constant $\zeta>0$ determines the degree of correlation between $Z$ and $W$ and is varied in the experiments. 

\subsection{Testing a Nonparametric Specification}\label{MC:sec:spec}
We begin with the finite sample analysis of our test statistics in case of nonparametric specification testing.
To analyze the finite sample power we distinguish in the following between a failure of the null hypothesis caused either by a lack of instrument validity or by non-monotonicity of the structural function in unobservables.
\paragraph{Failure of instrument validity.}
We first generate realizations of $Y$ under the null hypothesis $H_0$. Recall that under $H_0$ there exists a function $\sol\in\cB^{(0,1)} \text{ such that } \PP(Y\leq \sol(Z,q)|W)=q \text{ for all }q\in(0,1)$. In the following finite sample analysis, we restrict $\cB^{(0,1)}$ to contain continuously differentiable functions only. 
Under $H_0$ we generate realizations of $Y$ from the nonseparable model
\begin{equation}\label{sim:model}
Y = \phi(Z)(1+V/6) + V/2
\end{equation}
where $V = \vartheta\,\varepsilon + \sqrt{1-\vartheta^2}\,\epsilon$ with $\epsilon\sim \cN(0, 1)$ independent of $(\omega,\varepsilon)$ and  $\vartheta=0.7$.
We consider the function  $\phi(z)=\sum_{j=1}^\infty \,j^{-4} \cos(j\pi z)$.   For computational reasons we truncate the infinite sum at $100$.
 The resulting function is displayed in Figure \ref{sol.pic}. 
Since $\phi$ is continuously differentiable the null hypothesis $H_0$ is satisfied with $\sol(z,q)=\phi(z)\big(1+F_V^{-1}(q)/6\big) + F_V^{-1}(q)/2$, where $F_V^{-1}$ denotes the quantile function of $V$. 
\begin{figure}[h]
	\centering
	\caption{Graphs of $\phi$ and $\sol^{\textsl e}$}
		\includegraphics[width=10cm, height=6cm]{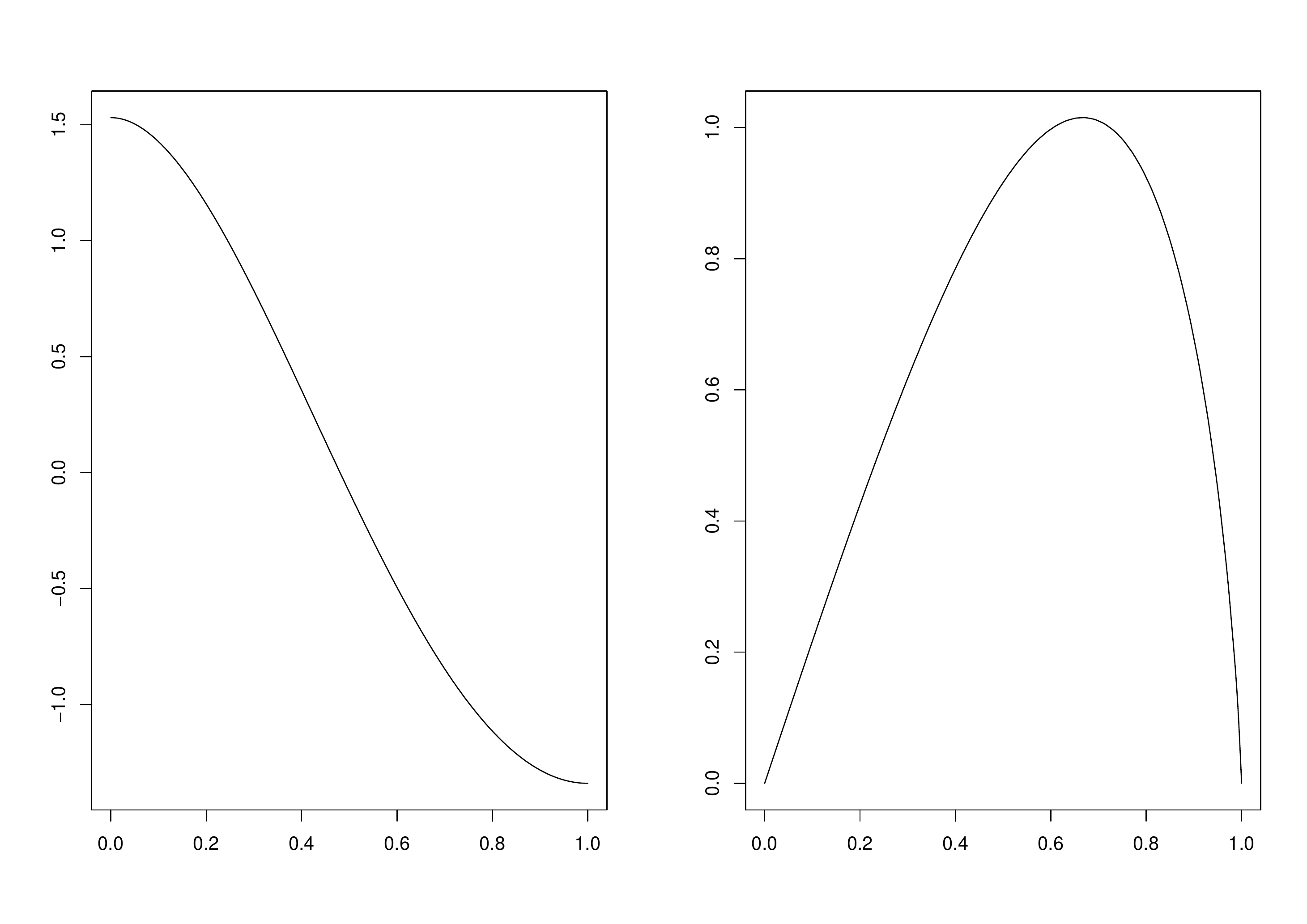}
	\label{sol.pic}
\end{figure}

\begin{table}[h]
\caption{Empirical rejection probabilities for the standardized test statistic $3\sqrt{5/\m}\big(S_n-\m/6\big)$ and its bootstrap version $3\sqrt{5/(\m(\sigma_\varepsilon^2+1))}\big(S_n^*-\m(\sigma_\varepsilon^2+1)/6\big)$ with varying dimension parameters $\k$ and $\m$ with $\l=2\k$.}
\begin{center}
\renewcommand{\arraystretch}{1.1}
\begin{tabular}{c|c|| c | c | c}
\textit{Sample} &\textit{Model}& &\textit{Emp. rejection prob.}& \textit{Emp. rejection prob.}\\
\textit{Size} && & \textit{ \quad using $S_n$ }& \textit{ \quad using $S_n^*$ }\\\hline	\hline
  &  &\diaghead{AAAAAA}{$k_n$}{$\m$} & 	\quad	 \textit{$20$\quad\quad$25$\quad\quad$30$}& 	\quad	 \textit{$20$\quad\quad$25$\quad\quad$30$}\\
\hline
500	&{\small $H_0$ true}&$4$ &$\mathbf{0.085}$\quad0.083\quad0.082 &$\mathbf{0.064}$\quad0.052\quad0.050\\
	  	&$\rho_1$&  						&$\mathbf{0.317}$\quad0.289\quad0.259 &$\mathbf{0.252}$\quad0.224\quad0.196\\
		&$\rho_2$&  						&$\mathbf{0.337}$\quad0.302\quad0.289 &$\mathbf{0.298}$\quad0.248\quad0.215\\
		&$\rho_3$&  						&$\mathbf{0.393}$\quad0.354\quad0.341 &$\mathbf{0.356}$\quad0.308\quad0.301\\
		&$\rho_3$&  						&$\mathbf{0.739}$\quad0.701\quad0.670 &$\mathbf{0.748}$\quad0.680\quad0.658\\
		&{\small $H_0$ true}&$5$	&0.076\quad0.076\quad0.080 &0.044\quad0.032\quad0.048\\
		&$\rho_1$&  						&0.195\quad0.179\quad0.169 &0.106\quad0.106\quad0.047\\
		&$\rho_2$&  						&0.200\quad0.194\quad0.174 &0.130\quad0.116\quad0.064\\
		&$\rho_3$&  						&0.171\quad0.153\quad0.152 &0.082\quad0.082\quad0.064\\
		&$\rho_3$&  						&0.270\quad0.257\quad0.228 &0.168\quad0.140\quad0.095\\
\hline
1000&{\small $H_0$ true}&$4$&0.077\quad$\mathbf{0.082}$\quad0.081 	&0.060\quad0.074\quad$\mathbf{0.076}$\\
	  	&$\rho_1$&  						&0.630\quad0.587\quad0.553  					&0.576\quad0.540\quad0.502\\
		&$\rho_2$&  						&0.636\quad0.582\quad0.549 						&0.576\quad0.544\quad0.492\\
		&$\rho_3$&  						&0.738\quad0.697\quad0.670 						&0.710\quad0.662\quad0.638\\
		&$\rho_3$&  						&0.905\quad0.882\quad0.864  					&0.938\quad0.924\quad0.896\\
		&{\small $H_0$ true}&$5$	&0.203\quad0.192\quad0.178 						&0.098\quad0.104\quad0.094\\
		&$\rho_1$&  						&0.554\quad$\mathbf{0.518}$\quad0.495 	&0.420\quad$\mathbf{0.396}$\quad0.380\\
		&$\rho_2$&  						&0.629\quad$\mathbf{0.596}$\quad0.549 	&0.532\quad$\mathbf{0.478}$\quad0.460\\
		&$\rho_3$&  						&0.423\quad$\mathbf{0.410}$\quad0.385 	&0.338\quad$\mathbf{0.314}$\quad0.272\\
		&$\rho_3$&  						&0.622\quad$\mathbf{0.593}$\quad0.574 	&0.576\quad$\mathbf{0.550}$\quad0.520\\
\end{tabular}
\end{center}
\label{table:instr}
\end{table}

When $H_0$ is false we generate realizations of $Y$ from
\begin{equation}\label{sim:model:alt}
Y = (\phi(Z)+\rho_j(Z))(1+V/6) + V/2
\end{equation}
 where $\rho_j(z)=10\,j\,(z\1\{z\leq 0.25\}+(z-1)\1\{z>0.25\})$
 for $j=1,2$ and  $\rho_j(z)=(z/2c_j)\1\{0.5-c_j\leq z<0.5+c_j\}$ for $j=3,4$, with $c_3=0.1$ and $c_4=0.05$. Here, the variable $V$ is generated as in \eqref{sim:model}. 
Under \eqref{sim:model:alt}, the structural function $\sol$ satisfying the
quantile restriction $\PP(Y\leq \sol(Z,q)|W)=q$ is given by $\sol(z,q)=(\phi(z)+\rho_j(z))(1+F_V^{-1}(q)/6) + F_V^{-1}(q)/2$. So $\sol(\cdot,q)$ is not continuously differentiable and thus, $H_0$ is false.
Due to the ill-posed inverse problem estimation of $\sol(\cdot,q)$ we cannot choose $\k$ sufficiently large to capture such irregularities which implies finite sample power of our test against those alternatives.
This corresponds  to the analysis of \cite{Horowitz2011} in the instrumental mean regression case.

 For each quantile $0<q<1$, we estimate the structural function using the estimator $\hsol_{qn}$ given in \eqref{def:est} with B-splines as approximation basis functions.
More precisely,  for the sieve space $\cB_\k$ we use B-splines of order 2 with 1 knot or 2 knots (hence $\k=4$ or $\k=5$) and for the criterion function we use B-splines of order 2 with 5 knots or 7 knots (hence $\l=2\k$), respectively. We thus follow \cite{chen2015optimal} and choose $\l$ to be a constant multiple of $\k$. Also for the vector of basis functions $f_\umn$, used to construct the test statistic, we use B-spline basis of order 2 with knots varying between 17, 22 or 27 (hence $\m=20$, $\m=25$ or $\m=30$).

The empirical rejection probabilities of our standardized test statistic $3\sqrt{5/\m}\big(S_n-\m/6\big)$ at nominal level $0.05$ are shown in Table \ref{table:instr}. 
We approximate the integral over the quantiles on $(0,1)$ by the mean of a random sample from the uniform $(0,1)$ distribution.
As we see from Table \ref{table:instr}, our test is less sensitive with respect to the choice of $\m$ than to the choice of $\k$, which is not surprising and well known from nonparametric instrumental variable estimation problems, see also \cite{chen2015}. Table \ref{table:instr} shows the empirical rejection probabilities for the sample sizes $500$ and $1000$. We see that as the sample size increases the finite sample rejection probabilities become larger in the alternative models. For $\k=4$ we see that the finite sample coverage improves slightly as the sample size increases. This is not the case for $\k=5$ which appears to be an inappropriate choice implying a large variance.
 
In Table \ref{table:instr} we also compare our testing procedure to a bootstrap version of it. We consider the generalized residual bootstrap as proposed in Subsection \ref{sub:sec:boot}. We generate the bootstrap weights by $\varepsilon\sim\mathcal N(1,\sigma_\varepsilon^2)$, independently of $(Y,X,W)$, where $\sigma_\varepsilon=0.5$.  We run $200$ bootstrap evaluations per Monte Carlo replication. We see from Table \ref{table:instr} that the bootstrap leads to an improvement in the finite sample coverage in the true model. In this sense, the bootstrap test statistic is less sensitive to the choice of $\k$ under the true model. Similar to \cite{chen2015} (see p. 1059), we see only a minor improvement of the bootstrap test in the alternative models but we expect that it improves further as the number of bootstrap runs is increased.

As we fix the dimension parameter $\l=2\k$, two dimension parameters remain to be chosen by the econometrician, namely, $\k$ and $\m$. 
While proposing an adaptive testing procedure is beyond the scope of this paper, we want to provide an heuristic argument for the parameter choice. 
Intuitively, we want to choose $\k$ such that we have a good model fit, i.e., a small value of the test statistic, and $\m$ to have good power properties, i.e., a large value of the test statistics. Moreover, the choice should reflect the rate requirement from our theory, that is,  $\k\leq \l =o(m_n^{1/2})$ and $m_n=o(n^{1/2})$.
We  implement such a heuristic  parameter choice criterion via the following minimum-maximum principle. That is, if $\set{s(\k,\m)}$ denotes the standardized value of our test $S_n$ with dimension parameters $\k$ and $\m$ then we choose these parameters such that
\begin{align*}
\min_{\k< n^{1/4}}\, \max_{k_n^2\leq m_n<n^{1/2}}\set{s(\k,\m)}.
\end{align*}
The values of this minimum-maximum principle (over the range $\m\in\set{20,25,30}$ and $\k\in\set{4,5}$) are shown in bold in Table  \ref{table:instr}.
Note that the requirement $\k<n^{1/4}$ implies  $\k\leq 4$ when $n=500$ and $\k\leq 5$ when $n=1000$.  Further, $m_n<n^{1/2}$ implies $m_n\leq 22$ for $n=500$ and $m_n\leq 31$ for $n=1000$. 
 We see that this criterion helps to avoid choosing the dimension parameter $\k$ too large which would yield inaccurate coverage.
Such a rule, however, does not account for ill-posedness of the estimation problem and hence, $\k$ might still be chosen too large. We thus could calculate the sieve measure of ill-posedness by estimating the first $\k$ minimal eigenvalues of $T_q$ (see also \cite{chen2015}). 
\paragraph{Failure of monotonicity in unobservables.}
We study the finite sample power of our test when $\sol$ is not strictly monotonic in the structural disturbance $V$. Realizations of $Y$ were generated from
\begin{equation}\label{mont:nonsep}
Y = \Phi(Z+V)V^2
\end{equation}
where $V = \Phi\big((\vartheta\,\varepsilon + \sqrt{1-\vartheta^2}\,\epsilon)/4\big)$ with $\epsilon\sim \cN(0, 1)$ and where $\vartheta=0.8$.
\begin{table}[h]
\caption{Empirical rejection probabilities for the standardized test statistic $3\sqrt{5/\m}\big(S_n-\m/6\big)$ and $3\sqrt{5/(\m(\sigma_\varepsilon^2+1))}\big(S_n^*-\m(\sigma_\varepsilon^2+1)/6\big)$ using varying dimension parameters $\k$ and $\m$ with $\l=2\k$.}
\begin{center}
\renewcommand{\arraystretch}{1.1}
\begin{tabular}{c|c|| c | c | c}
\textit{Sample} &\textit{Model}& &\textit{Emp. rejection prob.}& \textit{Emp. rejection prob.}\\
\textit{Size} && & \textit{ \quad using $S_n$ }& \textit{ \quad using $S_n^*$ }\\\hline	\hline
  &  &\diaghead{AAAAAA}{$k_n$}{$\m$} &\textit{$20$\quad\quad\,\,$25$\quad\quad\,\,$30$}& 	\textit{$20$\quad\quad\,$25$\quad\quad$30$}\\
\hline
500&{\small\eqref{mont:nonsep}}&$4$ 		&$\mathbf{0.043}$\quad0.066\quad0.079 	&$\mathbf{0.022}$\quad0.044\quad0.058\\
&{\small\eqref{mont:nonsep:alt:1} with j=1}&&$\mathbf{0.390}$\quad0.433\quad0.393 &$\mathbf{0.338}$\quad0.324\quad0.298\\
&{\small\eqref{mont:nonsep:alt:1} with j=2}&&$\mathbf{0.966}$\quad0.967\quad0.959 &$\mathbf{0.984}$\quad0.970\quad0.964\\
&{\small\eqref{mont:nonsep:alt:2} with j=1}&&$\mathbf{0.441}$\quad0.492\quad0.435 &$\mathbf{0.376}$\quad0.372\quad0.342\\
&{\small\eqref{mont:nonsep:alt:2} with j=2}&&$\mathbf{0.976}$\quad0.979\quad0.968 &$\mathbf{0.994}$\quad0.982\quad0.978\\
&{\small\eqref{mont:nonsep}}&$5$				&0.048\quad0.063\quad0.083 &0.024\quad0.030\quad0.036\\
&{\small\eqref{mont:nonsep:alt:1} with j=1}&&0.183\quad0.247\quad0.215 &0.132\quad0.126\quad0.110\\
&{\small\eqref{mont:nonsep:alt:1} with j=2}&&0.671\quad0.710\quad0.649 &0.722\quad0.662\quad0.602\\
&{\small\eqref{mont:nonsep:alt:2} with j=1}&&0.219\quad0.278\quad0.259 &0.154\quad0.144\quad0.112\\
&{\small\eqref{mont:nonsep:alt:2} with j=2}&&0.721\quad0.746\quad0.672 &0.766\quad0.704\quad0.650\\
\hline
1000&{\small\eqref{mont:nonsep}}&$4$		&0.042\quad0.080\quad0.082 &0.032\quad0.037\quad$\mathbf{ 0.038}$\\
&{\small\eqref{mont:nonsep:alt:1} with j=1}&&0.717\quad0.712\quad0.681 &0.696\quad0.677\quad0.636\\
&{\small\eqref{mont:nonsep:alt:1} with j=2}&&1.000\quad1.000\quad0.999 &1.000\quad1.000\quad1.000\\
&{\small\eqref{mont:nonsep:alt:2} with j=1}&&0.751\quad0.768\quad0.737 &0.752\quad0.733\quad0.694\\
&{\small\eqref{mont:nonsep:alt:2} with j=2}&&1.000\quad0.999\quad0.999 &1.000\quad1.000 \quad1.000\\
&{\small\eqref{mont:nonsep}}&$5$				 &0.044\quad0.055\quad$\mathbf{0.057 }$ &0.030\quad0.030\quad0.042\\
&{\small\eqref{mont:nonsep:alt:1} with j=1}&&0.452\quad$\mathbf{0.435}$\quad0.394 &0.414\quad$\mathbf{0.368}$\quad0.332\\
&{\small\eqref{mont:nonsep:alt:1} with j=2}&&0.966\quad$\mathbf{0.953}$\quad0.932 &0.982\quad$\mathbf{0.974}$\quad0.968\\
&{\small\eqref{mont:nonsep:alt:2} with j=1}&&0.515\quad$\mathbf{0.490}$\quad0.441 &0.490\quad$\mathbf{0.442}$\quad0.400\\
&{\small\eqref{mont:nonsep:alt:2} with j=2}&&0.971\quad$\mathbf{0.961}$\quad0.950 &0.984\quad$\mathbf{0.982}$\quad0.982\\
\end{tabular}
\end{center}
\label{table:mon}
\end{table} 
When $H_0$ is false we generate
\begin{equation}\label{mont:nonsep:alt:1}
Y = \Phi(Z+V)(V-0.5)^{2j}
\end{equation}
or
\begin{equation}\label{mont:nonsep:alt:2}
Y = \Phi(Z+V)\Phi^{-2j}(V)
\end{equation}
for $j=1,2$. In the alternative models, the structural disturbance enters the model in a nonmonotonic way. 
We construct the statistic $S_n$ and its bootstrap counterpart $S_n^*$ as described in the previous paragraph. 

Table \ref{table:mon} depicts the empirical rejection probabilities of our test against the alternative models \eqref{mont:nonsep:alt:1} and \eqref{mont:nonsep:alt:2}.
Again we observe that our test is not very sensitive to the choice of the dimension parameter $\m$. Our test becomes somewhat less powerful for large $\k$. But in contrast to the alternatives involving discontinuous functions in the previous paragraph,  the choice of $\k$ is not as sensitive. 
For each choice of parameter $\k$, our test becomes more powerful as the sample size increases from $500$ to $1000$. For $n=1000$ we see that the parameter choice $\k=5$ leads to a more accurate finite sample coverage. This is captured   by the minimum-maximum principle as introduced above. Again, the resulting values of the test statistic using this criterion  over the range $\m\in\set{20,25,30}$ and $\k\in\set{4,5}$ are shown in bold.
Again we observe that the boostrap version of the test statistic behaves similarly as the statistic $S_n$.

\subsection{Testing exogeneity}\label{MC:sec:exog}
Realizations $Y$ were generated by
\begin{equation*}
Y = \sol^{\textsl e}(Z)+ V/2
\end{equation*}
where $V$ is generated as described in model \eqref{sim:model}, that is, $V = \vartheta\,\varepsilon + \sqrt{1-\vartheta^2}\,\epsilon$ with $\epsilon\sim \cN(0, 1)$ independent of $(\omega,\varepsilon)$.
The function $\sol^{\textsl e}$ is given by $\sol^{\textsl e}(z)=\sum_{j=1}^\infty (-1)^{j+1}\,j^{-2} \sin(j\pi z)$. Again, for computational reasons we truncate the infinite sum at $100$. 
 The resulting function is displayed in Figure \ref{sol.pic}.
Note that $\vartheta$ determines the degree of endogeneity of $Z$ and is varied among the experiments. The null hypothesis $H_0:\PP(Y\leq \sol^{\textsl e}(Z)|Z)=q$ holds true if $\vartheta=0$ and is false otherwise. In the following, we perform a test at the median $q=0.5$. As our test relies on the equation $\PP(Y\leq \sol^{\textsl e}(Z)|W)=q$ we expect our test to have more power as the correlation between $W$ and $Z$ increases. 

The test statistic is implemented as described in Section \ref{MC:sec:exog}. To estimate the structural effect we make use of the estimator $\hsol^{\textsl e}_{qn}$ of \cite{he1994} given in 
\eqref{est:ex}. Here, we use B-splines of order 2 with 1 knot (hence $\k=4$) or 2 knots (hence $\k=5$). In contrast to the previous section, the choice of the dimension parameter $\k$ is not affected by the ill-posedness of the underlying inverse problem. As above, the vector of basis functions $f_\umn$ is also constructed with B-spline basis of order 2 with knots varying between 17, 22 or 27 (hence $\m=20$, $\m=25$ or $\m=30$). 

Table \ref{table:ex}	 depicts the empirical rejection probabilities with varying number of basis functions. As we see from Table \ref{table:ex}, our test becomes more powerful for larger $\zeta$; that is, for instruments with a stronger correlation to the covariates $Z$. From Table \ref{table:ex} we see that the test of exogeneity becomes somewhat less powerful for larger values of $\m$. On the other hand, the test seems not to be too sensitive with respect to the choice of the dimension parameters $\k$ and $\m$.  We also see from Table \ref{table:ex} that the finite sample coverage and power properties of the test improve as the sample size increases from $500$ to $1000$. 

Similarly as above, a guideline for smoothing parameter choice in practice is given by the following minimum-maximum principle. That is, if $\set{s_q^{\textsl e}(\k,\m)}$ denotes the standardized value of our test $S_n^{\textsl e}(q)$ with dimension parameters $\k$ and $\m$ then choose these parameters such that
\begin{align*}
\min_{\k< n^{1/4}}\, \max_{k_n^2\leq m_n<n^{1/2}}\set{s_q^{\textsl e}(\k,\m)}.
\end{align*}
Again this criterion takes the rate condition for the asymptotic theory into account. In Table \ref{table:ex} the resulting values of the test statistic using this criterion  over the range $\m\in\set{20,25,30}$ and $\k\in\set{4,5}$ are shown in bold.
\begin{table}[h]
\caption{Empirical rejection probabilities for the standardized test statistic $(2\m)^{-1/2}\big(4\,S_n^{\textsl e}(0.5)-\m\big)$ with varying dimension parameters $\k$ and $\m$.}
\begin{center}
\renewcommand{\arraystretch}{1.1}
\begin{tabular}{c|c|| c | c | c}
\textit{ $\zeta$}&\textit{ $\vartheta$ }& &\textit{ Emp. rejection prob.}& \textit{ Emp. rejection prob.}\\
& && \textit{ \quad using $S_n^{\textsl e}(0.5)$ with $n=500$}& \textit{ \quad using $S_n^{\textsl e}(0.5)$ with $n=1000$}\\\hline	\hline
  &&\diaghead{AAAAAA}{$k_n$}{$\m$} &\textit{$20$\quad\quad\,\,$25$\quad\quad\,\,$30$}& 	\textit{$20$\quad\quad\,$25$\quad\quad$30$}\\
\hline
$0.4$&$0.00$&$4$ 	&$\mathbf{0.064}$\quad0.064\quad0.064 &0.064\quad0.062\quad0.056\\
 &$0.30$ &  				&$\mathbf{0.172}$\quad0.161\quad0.139 &0.350\quad0.290\quad0.264\\
&$0.35$ &  					&$\mathbf{0.231}$\quad0.204\quad0.176 & 0.497\quad0.436\quad0.392\\
&$0.40$ &  					&$\mathbf{0.319}$\quad0.275\quad0.256 &0.659\quad0.605\quad0.546\\
&$0.45$ &  					&$\mathbf{0.425}$\quad0.389\quad0.334 &0.821\quad0.775\quad0.717\\
$0.7$&$0.00$ &			&$\mathbf{0.067}$\quad0.067\quad0.057 &0.054\quad0.059\quad0.049\\
 &$0.30$ &  				&$\mathbf{0.273}$\quad0.246\quad0.219 &0.664\quad0.584\quad0.542\\
&$0.35$ & 					&$\mathbf{0.393}$\quad0.363\quad0.321 						& 0.859\quad0.800\quad0.755\\
&$0.40$ &					&$\mathbf{0.571}$\quad0.515\quad0.465 						&0.970\quad0.947\quad0.908\\
&$0.45$ &					&$\mathbf{0.746}$\quad0.680\quad0.619 &0.997\quad0.990\quad0.982\\
\hline
$0.4$&$0.00$&$5$ 	&0.065\quad0.067\quad0.063 &0.059\quad$\mathbf{0.057}$\quad0.055\\
 &$0.30$ &  				&0.170\quad0.154\quad0.148 &0.335\quad$\mathbf{0.287}$\quad0.264\\
&$0.35$ &  					&0.227\quad0.202\quad0.179 &0.501\quad$\mathbf{0.428}$\quad0.388\\
&$0.40$ &  					&0.315\quad0.278\quad0.256 &0.667\quad$\mathbf{0.598}$\quad0.553\\
&$0.45$ &  					&0.429\quad0.386\quad0.355 &0.824\quad$\mathbf{0.775}$\quad0.715\\
$0.7$&$0.00$ &			&0.061\quad0.057\quad0.055 &0.049\quad0.041\quad$\mathbf{0.045}$\\
 &$0.30$ &  				&0.247\quad0.221\quad0.201 &0.647\quad$\mathbf{0.581}$\quad0.525\\
&$0.35$ & 					&0.393\quad0.353\quad0.318 &0.858\quad$\mathbf{0.797}$\quad0.727\\
&$0.40$ &					&0.571\quad0.495\quad0.438 &0.966\quad$\mathbf{0.940}$\quad0.905\\
&$0.45$ &					&0.725\quad0.658\quad0.598 &0.997\quad$\mathbf{0.990}$\quad0.983\\
\end{tabular}
\end{center}
\label{table:ex}
\end{table}

\section{An empirical illustration}\label{sec:emp}
To illustrate our testing procedure, we present an empirical  application concerning estimation of the effects of class size on students' performance on standardized tests. \cite{Angrist1999} studied the effects of class size on test scores of 4th and 5th grade students in Israel.
In this empirical illustration, we focus on 4th grade reading comprehension a feature that was also considered by \cite{Horowitz2011}.

In this empirical example we study the model
\begin{equation}
	\label{model:AL}
	Y_{sc} = \varphi(Z_{sc}, V_{sc})+D_{sc}\,\beta(V_{sc})
\end{equation}
where $Y_{sc}$ is the average reading comprehension test score of 4th grade students in class $c$ of school $s$, $Z_{sc}$ is the number of students in class $c$ of school $s$,  $D_{sc}$ is the fraction of disadvantaged students in class $c$ of school $s$ with unknown scalar function $\beta$,  $V_{sc}=U_{s}+\varepsilon_{sc}$ where $U_{s}$ is an unobserved school-specific random effect, and $\varepsilon_{sc}$ is an unobserved, independently over classes and schools distributed random variable.

The class size $Z_{sc}$ may be endogenous, for instance, due to the socioeconomic background of the students. To identify the causal effect of class size on scholar achievement \cite{Angrist1999} use \textit{Maimonides' rule} as instruments. According to this administrative rule, maximum class size is given by 40 pupils and will be split if the number of enrolled students exceeds this number. More precisely, assuming that cohorts are divided into classes of equal size, Maimonides’ rule is described by
\begin{equation*}
 W_{sc}=E_s / \lceil1 + (E_s-1) / 40\rceil 
\end{equation*}
where $E_s$ denotes enrollment in school $s$ and $\lceil x\rceil$ denotes the largest integer less or equal to $x$.
Note that \cite{Horowitz2011} could show that a linear relation between class size and scholar achievement as used by \cite{Angrist1999} is misspecified. 
To apply our tests, we consider a subsample where only one representative class per school is considered. By doing so, we avoid that rejection of a hypothesis may be caused by within class correlation. Moreover, only schools with at least two classes are considered which leads to a sample size of 707.

In the following, we want to test nonparametrically whether class size is endogenous at the $0.5$--quantile. 
The null hypothesis is that $\PP(Y_{sc}\leq \varphi(Z_{sc},q)+D_{sc}\,\beta(q)|Z_{sc})= q$ where $q=0.5$.
The value of our test statistic $S_n^{\textsl e}(0.5)=(2\m)^{-1/2}\big(4\, S_n^{\textsl e}(0.5)-\m\big)$ is given by $1.885$. For the choice of smoothing parameters $\k$ and $\m$ we applied the minimum-maximum principle as described in Section \ref{MC:sec:exog}. The resulting dimension parameters are $\k= 4$ and $\m=23$.\footnote{The value of the test for other  choices of $\k$ is  $2.254$ for $\k=3$ and $2.182$ for $\k=5$ where $\l=2\k$ and $\m$ is maximized over the range $k_n^2$ to $26$ (being the largest integer smaller than $\sqrt{707}$). }
  We thus reject the hypothesis of exogeneity at the $0.05$ nominal level. 
In particular, in model \eqref{model:AL} under conditions \textit{(a.1)}--\textit{(a.3)} we conclude that $Z_{sc}$ is not independent of $V_{sc}$.

We now test whether the model \eqref{model:AL} with conditions \textit{(a.1)}--\textit{(a.3)} is correctly specified. 
We construct our test statistic using B-splines as described in Section \ref{MC:sec:spec}. For the choice of smoothing parameters $\k$ and $\m$ we applied the minimum-maximum principle as described in Section \ref{MC:sec:exog}. As in the Monte Carlo section we choose $\l=2\k$. 
Our test statistic attains the value $ 1.4152$ and thus fails to reject the nonseparable model \eqref{model:AL} with conditions \textit{(a.1)}--\textit{(a.3)} at the $0.05$ nominal level. This value of the test statistic is obtained when $\k=4$ and $\m=26$. 
 For the fixed quantile $q=0.5$, we also performed a test of $\PP(Y_{sc}\leq \varphi(Z_{sc},q)+D_{sc}\,\beta(q)|W_{sc})= q$. In this case, our test statistic attains the value $0.981$ and again fails to reject the hypothesis.\footnote{This is not the case if $\k$ is chosen too small or too large. For instance if $\k=4$ or $\k=9$, respectively, then the value of the test statistic is $2.064$ or $3.420$ (as above maximized of $\m$ and $\l=2\k$). }

\begin{figure}[h]
	\centering
		\includegraphics[width=9.5cm]{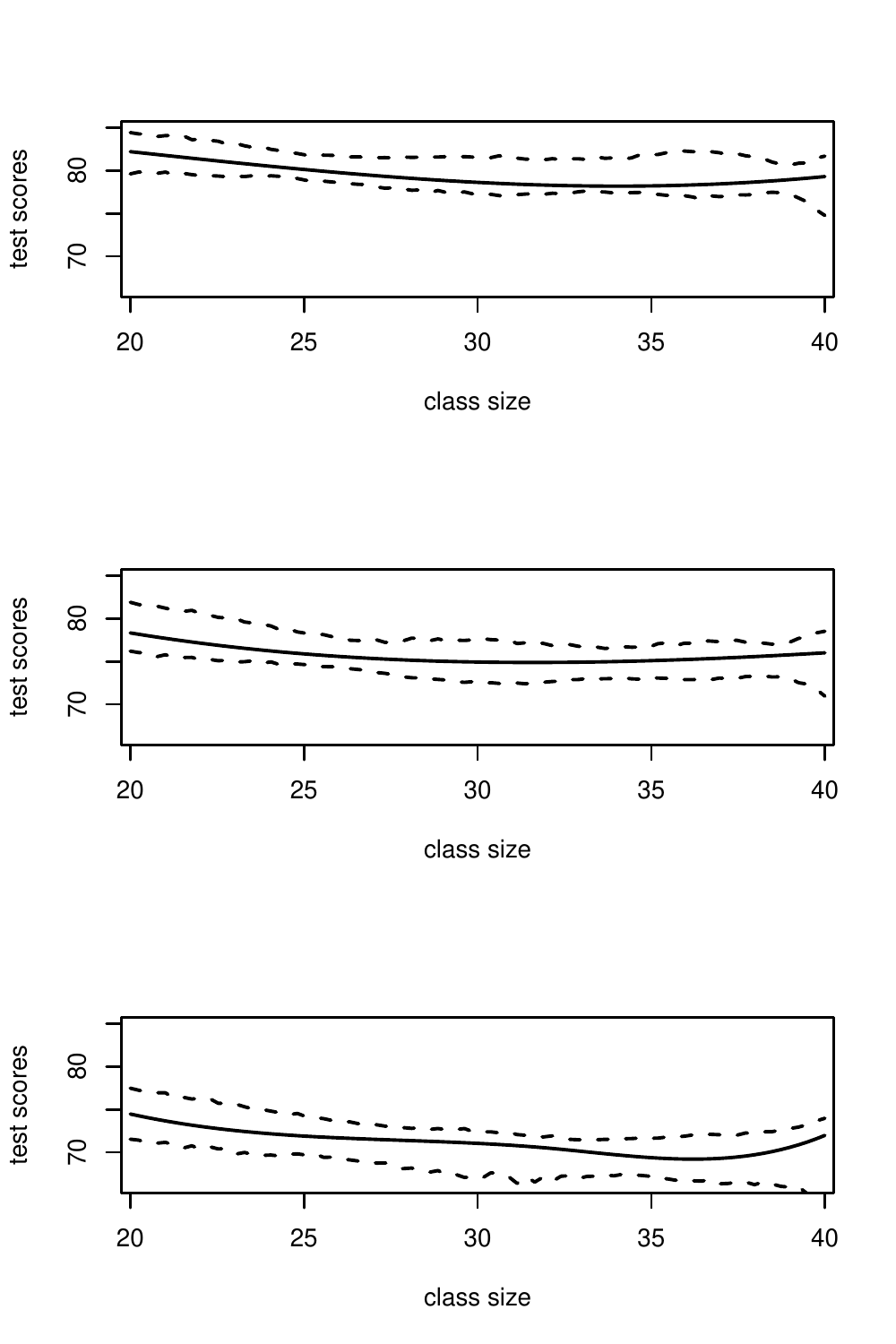}
	\caption{Estimated structural effects (solid lines) for $q\in\{0.75,\,0.5,\,0.25\}$ and $90$\% confidence intervals (dashed lines)}
	\label{quartiles6}
\end{figure}

 For the full sample, Figure \ref{quartiles6} depicts estimators of the structural effect $\solq$ for the quantiles  $q\in\{0.75,\,0.5,\,0.25\}$ where the number of disadvantaged students is restricted to be smaller than 15\% (which implies $n=688$). The solid lines are the estimators
and the dashed lines are the 90\% pointwise bootstrap confidence intervals using 1000 bootstrap iterations (we account for within school correlation by using schools as the bootstrap sampling units, see also \cite{Horowitz2011}). 
We can see that the confidence
intervals are tight enough to reject the hypothesis
that the quantile structural effects are overall upward sloping. In particular, we see that the effect of class size variation on test scores is more severe for lower performing classes.

\section{Conclusion}
In this paper, we develop a nonparametric specification test for the quantile regression model \eqref{model:NP}. The power of the test derives either from violations of regularity conditions imposed on the structural function, such as bounds or smoothness requirements, or a failure of monotonicity in the nonseparable unobservable variable. The test statistic is easy to implement and a natural extension of specification testing in a parametric framework. 
As the test builds on the sieve methodology, it allows to incorporate restrictions under the null hypothesis directly on the sieve space. As examples of tests of constraint hypotheses we consider in detail a test of exogeneity and a test of additivity of the structural function. We establish the large sample behavior of our test statistics and show that our tests work well in finite sample experiments. We also obtain reasonable results in an empirical illustration concerning the analysis of class size on students' performance. While we provide some heuristic guideline how to choose the sieve dimension in finite samples, an interesting future research area remains to provide asymptotic justification for it via adaptive testing.

 \bibliography{BiB}

\appendix
\section{Appendix}\label{app:proofs} 
 \subsection{Proofs of Section \ref{sec:main}.}
 In the appendix, $f_\umn$ denotes an $\m$ dimensional vector with entries $f_j$ for $1\leq j\leq\m$. Moreover, $\|\cdot\|$ is the usual Euclidean norm.
For ease of notation, let $\bold X_i=(Y_i,Z_i,W_i)$ for $1\leq i\leq n$ with realizations $\bold x=(y,z,w)\in\cY\times\cZ\times \cW$. Let $\cH$ be a class of measurable functions with a measurable envelope function $H$. Then $N(\varepsilon, \cH , L_X^2)$ and $N_{[\,]}(\varepsilon,\cH , L_X^2)$, respectively, denote the covering and bracketing numbers for the set $\cH$. In addition, let $J_{[\,]}(1, \cH , L_X^2)$ denote a bracketing integral of $\cH$, that is,
\begin{equation*}
 J_{[\,]}(1, \cH , L_X^2)=\int_0^1\sqrt{1+\log N_{\small [\,]}(\varepsilon \,\|H\|_X,\cH , L_X^2)}d\varepsilon.
\end{equation*}
Throughout the proofs, we will use $C > 0$ to denote a generic finite constant that may be different in different uses. Further, for ease of notation we write $\int$ for $\int_0^1$, $\sum_i$ for $\sum_{i=1}^n$, and $\sum_{i'<i}$ for $\sum_{i=1}^n\sum_{i'=1}^{i-1}$.
For any $\phi,\psi\in L_W^2$, the inner product in $L_W^2$ is denoted by $\skalarV{\phi,\psi}_W=\Ex[\phi(W)\psi(W)]$ and let  $F_\m\phi=\sum_{j=1}^\m\skalarV{\phi,f_j}_Wf_j$.
In the following, we denote $\widehat Q_n=n^{-1}\sum_i f_\umn(W_i)f_\umn(W_i)^t$.
By Assumption \ref{ass:A1:q}, the eigenvalues of $\Ex[ f_\umn(W)f_\umn(W)^t]$ are bounded away from zero and hence, it may be assumed  that $\Ex[ f_\umn(W)f_\umn(W)^t]=I_\m$ where $I_\m$ denotes the $\m$ dimensional identity matrix (cf. \cite{Newey1997}, p. 161).

In the following result, we establish continuity of the mapping $q\mapsto\sol(\cdot,q)$ under the tangential cone condition and a mild assumption on the sieve approximation error for $\solq$. 
\begin{lem}
Let Assumption \ref{ass:A3:q} be satisfied. Assume for almost all $q\in(0,1)$ there exists a function $\sol_q$ with $\cT\solq=q$, let $T_q$ be compact,  and $\|\sol_{q}-\varPi_k\sol_{q}\|_Z=o( 1)$ as $k\to\infty$. Then the mapping $q\mapsto\sol(\cdot,q)$ is continuous.
\end{lem}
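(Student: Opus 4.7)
The plan is a subsequence/compactness argument that exploits the Sobolev regularity of $\cB$, continuity of $\cT$, and the tangential cone condition in order to pass from the trivial convergence $\cT\sol_{q_n}=q_n\to q$ back to convergence of $\sol_{q_n}$ itself. Fix $q\in(0,1)$ at which $\sol_q$ is well defined, and let $q_n\to q$. By the subsequence principle it is enough to show that every subsequence of $(\sol_{q_n})$ admits a further subsequence converging to $\sol_q$ in $\|\cdot\|_{Z,p}$. Each $\sol_{q_n}$ lies in $\cB$, a bounded subset of the Sobolev space $W^{\alpha,p}$ (which controls derivatives up to order $\alpha+\alpha_0$), and Assumption~\ref{ass:A0}\,$(i)$--$(ii)$ ($\alpha_0>d_z/p$ together with boundedness and the uniform cone property of $\cZ$) combined with the Sobolev embedding and Rellich--Kondrachov theorems ensure that $\cB$ is relatively compact in the uniform norm on $\cZ$. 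Extracting a subsequence yields $\sol_{q_{n_k}}\to\phi$ in $\|\cdot\|_\infty$ for some $\phi\in\cB$.

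I would next show that $\phi$ solves the moment equation at level $q$. Using Assumption~\ref{ass:A0}\,$(v)$ (uniform boundedness of $p_{Y|Z,W}$), $\cT$ restricted to $\cB$ is Lipschitz from $\|\cdot\|_\infty$ into $\|\cdot\|_W$,
\begin{equation*}
  \|\cT\psi_1-\cT\psi_2\|_W\le C\,\|\psi_1-\psi_2\|_\infty,
\end{equation*}
since $|F_{Y|Z,W}(\psi_1(Z),Z,W)-F_{Y|Z,W}(\psi_2(Z),Z,W)|\le C\,|\psi_1(Z)-\psi_2(Z)|$. Therefore $\cT\sol_{q_{n_k}}\to\cT\phi$ in $L_W^2$, while $\cT\sol_{q_{n_k}}=q_{n_k}$ viewed as a constant function of $w$ converges to $q$; hence $\cT\phi=q=\cT\sol_q$. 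Combined with an identification step $\phi=\sol_q$ and the dominance $\|\cdot\|_{Z,p}\le |\cZ|^{1/p}\|\cdot\|_\infty$ on the bounded set $\cZ$, this yields $\|\sol_{q_{n_k}}-\sol_q\|_{Z,p}\to 0$ and closes the subsequence argument.

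The main obstacle is the identification step $\phi=\sol_q$, because the tangential cone inequality \eqref{tang:cone:cond} is only local: it requires $\|\phi-\sol_q\|_{Z,p}\le\varepsilon$. My plan is to combine Assumption~\ref{ass:A3:q}\,$(ii)$ with compactness and injectivity of $T_q$ as follows. In the local regime, \eqref{tang:cone:cond} together with $\cT\phi=\cT\sol_q$ gives $(1-\eta)\|T_q(\phi-\sol_q)\|_W\le\|\cT\phi-\cT\sol_q\|_W=0$, so $T_q(\phi-\sol_q)=0$, and injectivity of the compact operator $T_q$ (implied by the identification content of Assumption~\ref{ass:A3:q}\,$(i)$ read at the fixed $q$) then forces $\phi=\sol_q$. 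The sieve-approximation hypothesis $\|\sol_q-\varPi_k\sol_q\|_Z=o(1)$ enters when propagating this local injectivity to all of $\cB$: approximating $\phi$ and $\sol_q$ by their sieve projections allows one to bring the argument inside the neighborhood where the tangential cone applies, ruling out a rogue solution of $\cT\phi=q$ in $\cB$ outside that neighborhood. Once this identification is in hand, the remaining steps are routine consequences of the compactness and Lipschitz facts collected above.
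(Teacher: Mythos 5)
Your approach is genuinely different from the paper's, and the contrast is instructive, but the argument as written has a concrete gap in the identification step and does not actually use the hypotheses the way the lemma intends.

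The paper's proof is direct and quantitative. It never forms a limit point and never needs to show that the equation $\cT\phi=q$ has a unique solution in $\cB$. Instead it fixes $q$, writes out the singular system $\{s_{qj},e_j,f_j\}$ of the compact operator $T_q$, and uses the sieve projection $\varPi_k$ together with the elementary bound $\|T_q\varPi_k\psi\|_W\ge s_{qk}\|\varPi_k\psi\|_Z$ (and the dual bound $\|T_q(\varPi_k\sol_q-\sol_q)\|_W\le s_{qk}\|\varPi_k\sol_q-\sol_q\|_Z$, valid because the residual lies in the span of $\{e_j\}_{j>k}$). Combined with the tangential cone inequality at $q$ this yields
\begin{align*}
|q-q'| = \|\cT\sol_q-\cT\sol_{q'}\|_W \ge (1-\eta)\,s_{qk}\big(\|\sol_q-\sol_{q'}\|_Z - 2r_k(q)-2r_k(q')\big),
\end{align*}
and after choosing $k$ so large that the sieve errors $r_k(q),r_k(q')$ are below $\varepsilon/6$, one obtains $\|\sol_q-\sol_{q'}\|_Z\le\varepsilon$ for $|q-q'|<(1-\eta)\varepsilon s_{qk}/3$. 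Compactness of $T_q$ enters through the existence of the singular system and the decay of $s_{qj}$, and the hypothesis $\|\sol_q-\varPi_k\sol_q\|_Z=o(1)$ enters as a bias term that must be absorbed before the $s_{qk}$ factor can be inverted. None of the Sobolev embedding machinery (Assumption \ref{ass:A0}) is invoked.

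Your compactness route replaces the SVD argument with Rellich--Kondrachov and a subsequence extraction, which is conceptually clean, but two problems arise. First, the Sobolev embedding/compactness facts you invoke rest on Assumption \ref{ass:A0}, which is \emph{not} among the hypotheses of this lemma; the lemma's hypotheses (Assumption \ref{ass:A3:q}, existence of $\sol_q$, compactness of $T_q$, and the sieve-approximation condition) are exactly what the paper's proof consumes, and your proof consumes more. Second, and more seriously, the identification step is not closed. You want to conclude $\phi=\sol_q$ from $\cT\phi=\cT\sol_q$, but the tangential cone inequality only applies in a neighbourhood of $\sol_q$, and compactness of $T_q$ does not give you injectivity (a compact operator can easily have a nontrivial kernel). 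Reading Assumption \ref{ass:A3:q}\,$(i)$ ``at the fixed $q$'' is also not licensed: it is stated for the integrated norm $\interleave\cdot\interleave$ and yields uniqueness only for almost every $q$, not at an arbitrary prescribed $q$. The proposed fix --- approximating $\phi$ and $\sol_q$ by sieve projections to ``bring the argument inside the neighbourhood'' --- does not work: if $\phi$ and $\sol_q$ are genuinely far apart, their projections are also far apart for large $k$, so you never enter the regime where \eqref{tang:cone:cond} applies. In short, the sieve-approximation hypothesis is doing quantitative work in the paper's proof (killing a bias term against $s_{qk}$), not the global-uniqueness work you ask of it; without a separate global identification hypothesis your subsequence limit $\phi$ cannot be pinned down, and the argument does not close.
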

\begin{proof}
For some $q\in(0,1)$, since the linear operator $T_q$ is compact there exists singular value decomposition of it denoted by $\{ s_{qj}, e_j,f_j\}_{j\geq 1}$.
For any $\varepsilon>0$ and $k$ sufficiently large, let us define $\delta=(1-\eta)\,\varepsilon\, s_{qk}/3$.
We consider $q'\in(0,1)$ such that $|q-q'|<\delta$. Since $q,q'$ satisfy the quantile restriction we have $\|\cT \sol_{q}-\cT \sol_{q'}\|_W<\delta$. Let us further denote $r_k(q)=\|\varPi_k\sol_{q}-\sol_{q}\|_W$. We have $r_k(q)\leq \varepsilon/6$ by assumption for all $q$. 
By Assumption \ref{ass:A3:q} $(ii)$ and the triangular inequality it holds 
\begin{align*}
\|\cT \sol_{q}-&\cT \sol_{q'}\|_W\geq (1-\eta)\|T_{q} (\sol_{q}- \sol_{q'})\|_W \\
&=(1-\eta)\|T_{q}\varPi_k (\sol_{q}- \sol_{q'}) - T_{q}(\varPi_k\sol_{q}- \sol_{q})+T_{q}(\varPi_k\sol_{q'}- \sol_{q'})\|_W \\
&\geq (1-\eta)\Big(\|T_{q} \varPi_k(\sol_{q}- \sol_{q'})\|_W-\|T_{q}(\varPi_k\sol_{q}- \sol_{q})\|_W-\|T_{q}(\varPi_k\sol_{q'}- \sol_{q'})\|_W\Big)\\
&\geq (1-\eta)\,s_{qk}\Big(\|\varPi_k(\sol_{q}- \sol_{q'})\|_Z-r_k(q)-r_k(q')\Big)\\
&\geq (1-\eta)\,s_{qk}\Big(\|\sol_{q}- \sol_{q'}\|_Z-2r_k(q)-2r_k(q')\Big),
\end{align*}
using that $( s_{qj})_{j\geq 1}$ is a nonincreasing sequence. 
  This implies 
  \begin{align*}
  \|\sol_{q}- \sol_{q'}\|_Z&\leq (1-\eta)^{-1}\,s_{qk}^{-1}\,\delta + 2r_k(q)+2r_k(q')\\
  &\leq (1-\eta)^{-1}\,s_{qk}^{-1}\,\delta +2\, \varepsilon/3\\
  &\leq \varepsilon,
  \end{align*}
  which proves the result. 
\end{proof}
  
  \begin{proof}[\textsc{Proof of Proposition \ref{prop:overident}.}]
  Let $\|T_q\|_{o,2}$ be the operator norm of the Fr\'echet derivative  $T_q$ given by $\|T_q\|_{o,2}=\sup_{\{\phi\in L_Z^2:\, \|\phi\|_Z\leq 1\}} \|T_q\phi\|_W$.  From Assumption \ref{ass:A0} $(v)$ we infer that the operator $T_q$ is bounded since
  \begin{align*}
  \|T_q\|_{o,2}^2&\leq \sup_{\{\phi\in L_Z^2:\, \|\phi\|_Z\leq 1\}} \Ex|\Ex\big[ p_{Y|Z,W}\big(\sol(Z,q),Z,W\big)\phi(Z)\big|W\big]|^2\\
 & \leq C \sup_{\{\phi\in L_Z^2:\, \|\phi\|_Z\leq 1\}} \Ex|\phi(Z)|^2\\
  & \leq C.
  \end{align*}
  Since $\|\cdot\|_Z \leq  \|\cdot\|_{\alpha,p}$ for any integer $p>0$ (see e.g. Lemma A.2 of \cite{Santos12}) we have $\sup_{\phi\in\mathcal B} \|\phi\|_Z \leq \rho$ by the definition of $\mathcal B$. Consequently, for any $\phi\in\mathcal B$ we obtain 
\begin{align*}
\|T_q\phi\|_W\leq \|T_q\|_{o,2}\|\phi\|_Z\leq \rho\,\|T_q\|_{o,2}.
\end{align*}  
We conclude that the range  $\mathcal R_q(\mathcal B)$ is uniformly bounded by the constant $\rho\,\|T_q\|_{o,2}$ and hence,  $\mathcal R_q(\mathcal B)$ is a strict subset of $L_W^2$, which completes the proof. 
\end{proof}
  
\begin{proof}[\textsc{Proof of Theorem \ref{prop:main}.}]
Since we have $\|\widehat Q_n-I_\m\|^2=o_p(m_n^2/n)$ it is sufficient to prove that
$3\sqrt{5/\m}\big(\sum_{j=1}^\m \int |n^{-1/2}\sum_i  (\1{\{Y_i\leq\hsol_{qn}(Z_i)\}}-q)f_j(W_i)|^2dq-\m/6\big)\stackrel{d}{\rightarrow}\cN(0,1)$.
The proof is based on the decomposition
 \begin{multline}\label{proof:ex:dec}
 \sum_{j=1}^\m\int\big|n^{-1}\sum_i (\1{\{Y_i\leq\hsol_{qn}(Z_i)\}}-q)f_j(W_i)\big|^2dq\\
 =\sum_{j=1}^\m\int\big|n^{-1}\sum_i (\1{\{Y_i\leq\sol(Z_i,q)\}}-q)f_j(W_i)\big|^2dq\\
-\frac{2}{n^2}\sum_{j=1}^\m\int \Big(\sum_i (\1{\{Y_i\leq\sol(Z_i,q)\}}-q)f_j(W_i)\Big)\\\hfill
\times\Big(\sum_i \big(\1\{Y_i\leq \hsolq(Z_i)\}-\1\{Y_i\leq\sol(Z_i,q)\}\big)f_j(W_i)\Big)dq\\
+\sum_{j=1}^\m\int\big|n^{-1}\sum_i \big(\1\{Y_i\leq \hsolq(Z_i)\}-\1\{Y_i\leq\sol(Z_i,q)\}\big)f_j(W_i)\big|^2dq
=I_n-2II_n+III_n.
 \end{multline}
Consider $I_n$. We calculate further
\begin{multline*}
m_n^{-1/2}\big(nI_n-\m/6\big)
=\frac{1}{\sqrt\m n}\sum_i \sum_{j=1}^\m\Big(\int|(\1{\{Y_i\leq\sol(Z_i,q)\}}-q)f_j(W_i)|^2dq-1/6\Big)\\
+\frac{1}{\sqrt\m n}\sum_{i\neq i'}\sum_{j=1}^\m \int\big(\1{\{Y_i\leq\sol(Z_i,q)\}}-q\big)\big(\1{\{Y_{i'}\leq\sol(Z_{i'},q)\}}-q\big)f_j(W_i)f_j(W_{i'})dq
\end{multline*}
where the first summand tends in probability to zero as $n\to\infty$. Indeed,we have
\begin{equation*}
 \Ex \int|(\1\{Y\leq \sol(Z,q)\}-q)f_j(W)|^2dq=\Ex[f_j^2(W)]\int q(1-q)dq=1/6
\end{equation*}
for all $j\geq 1$ and hence, 
\begin{multline*}
 \Ex\Big|\frac{1}{\sqrt\m n}\sum_i \sum_{j=1}^\m\Big(\int|(\1{\{Y_i\leq\sol(Z_i,q)\}}-q)f_j(W_i)|^2dq-1/6\Big)\Big|^2\\\hfill
\leq \frac{1}{\m n}\int\Ex\Big|\sum_{j=1}^\m|(\1{\{Y\leq \sol(Z,q)\}}-q)f_j(W)|^2-\Ex|(\1\{Y\leq \sol(Z,q)\}-q)f_j(W)|^2\Big|^2dq\\\hfill
\leq \frac{1}{\m n}\sup_{w\in\cW}\|f_\umn(w)\|^4\int\Ex|\1{\{Y\leq \sol(Z,q)\}}-q|^4dq\\
\leq O( m_n/n)=o(1)
\end{multline*}
by using  $\sup_{w\in\cW}\|f_\umn(w)\|^2\leqslant C\m$.
Therefore, to establish $3\sqrt{5/\m}(nI_n-\m/6)\stackrel{d}{\rightarrow}\cN(0,1)$ it is sufficient to show
\begin{equation*}\label{norm:result:ind}
\frac{3\sqrt 5}{\sqrt\m n}\sum_{i\neq i'}\sum_{j=1}^\m \int\big(\1{\{Y_i\leq \sol(Z_i,q)\}}-q\big)\big(\1{\{Y_{i'}\leq \sol(Z_{i'},q)\}}-q\big)f_j(W_i)f_j(W_{i'})dq\stackrel{d}{\rightarrow} \cN(0,1).
\end{equation*}
This follows from Lemma \ref{normal:lem:1}.
Consider $III_n$. 
Let us denote $\cB_{n}:=\{\phi\in\cB^{(0,1)}:\,\interleave\phi-\sol\interleave_{Z,p}^2\leq m_n^{-(1+c)/\kappa}\}$ for some constant $c>0$ and $\cB_{qn}:=\{\phi_q:\,\phi\in\cB_n\}\subset\cB$.
Further, we denote for $1\leq j\leq \m$ and $1\leq i\leq n$ 
\begin{equation*}
 h_{qj}(\bold X_i,\phi_q)=\big(\1\{Y_i\leq \phi(Z_i,q)\}-\1{\{Y_i\leq\sol(Z_i,q)\}}\big)f_j(W_i)
\end{equation*}
and the classes $\cH_{qjn}=\{h_{qj}(\cdot,\phi_q):\,\phi_q\in\cB_{qn}\}$ and $\cH_{qj}=\{h_{qj}(\cdot,\phi_q):\,\phi_q\in\cB\}$.
We observe
\begin{multline*}
III_n=\sum_{j=1}^\m \int\big|n^{-1}\sum_i  h_{qj}(\bold X_i,\hsolq)\big|^2dq\\
\leq 2\eta_p \interleave\cT\hsol_{\cdot n}-\cT\sol\interleave_W^2
+2\sum_{j=1}^\m \int\big|n^{-1}\sum_i  h_{qj}(\bold X_i,\hsolq)-\skalarV{\cT\hsolq-\cT\solq,f_j}_W\big|^2dq.
\end{multline*}
%
From  \eqref{lem:eq:2} in Lemma \ref{lem:1} together with condition $n\tau_n=o(\sqrt{\m})$ we deduce $n\interleave\cT\hsol_{\cdot n}-\cT\sol\interleave_W^2=o_p(\sqrt\m)$.
Further, we observe for every $\phi_q\in\cB_{qn}$ that
\begin{equation*}
 \big|h_{qj}(\bold X_i,\phi_q)\big|^2\leq \max_{\phi_q\in\cB_{qn}}\big|\big(\1\{Y_i\leq \phi(Z_i,q)\}-\1{\{Y_i\leq\sol(Z_i,q)\}}\big)f_j(W_i)\big|^2=:H_{qj}^2(\bold X_i)
\end{equation*}
and hence, $H_{qj}$ is an envelope function of the class $\cH_{qjn}$ and due to Assumption \ref{ass:A5:q} we have $\Ex[\int H_{qj}^2(\bold X)dq]\leq Cm_n^{-(1+c)}$.
Moreover, \eqref{lem:eq:3} in Lemma \ref{lem:1} together with condition \eqref{prop:chen:est:cond} implies $\interleave\hsol_{\cdot n}-\sol\interleave_{Z,p}^2=o_p\big(m_n^{-(1+c)/\kappa}\big)$ and thereby
\begin{multline*}
 \PP\Big(\sum_{j=1}^\m\int\big|n^{-1}\sum_i  h_{qj}(\bold X_i,\hsolq)-\skalarV{\cT\hsolq-\cT\solq,f_j}_W\big|^2dq>\varepsilon\Big)\\\hfill
\leq \sum_{j=1}^\m \varepsilon^{-1}\Ex\sup_{\phi\in\cB_{n}}\int\Big|n^{-1/2}\sum_i h_{qj}(\bold X_i,\phi_q)-\Ex h_{qj}(\bold X,\phi_q)\Big|^2dq+o(1)\\\hfill
\leq \sum_{j=1}^\m \varepsilon^{-1}\int\Ex\max_{\phi_q\in\cB_{qn}}\Big|n^{-1/2}\sum_i h_{qj}(\bold X_i,\phi_q)-\Ex h_{qj}(\bold X,\phi_q)\Big|^2dq+o(1)\\\hfill
\leq \sum_{j=1}^\m \varepsilon^{-1}\int\Big(\Ex\max_{\phi_q\in\cB_{qn}}\Big|n^{-1/2}\sum_i h_{qj}(\bold X_i,\phi_q)-\Ex h_{qj}(\bold X,\phi_q)\Big|+\big(\Ex|H_{qj}(\bold X)|^2\big)^{1/2}\Big)^2dq+o(1)
\end{multline*}
where the last inequality is due to Theorem 2.14.5 of \cite{Vaart2000}. 
We further conclude by applying the last display of Theorem 2.14.2 of \cite{Vaart2000}
\begin{equation*}
\Ex\max_{\phi_q\in\cB_{qn}}\Big|n^{-1/2}\sum_i h_{qj}(\bold X_i,\phi_q)-\Ex h_{qj}(\bold X,\phi_q)\Big|\leq C J_{[\,]}(1,\cH_{qjn},L^2_{\bold X})\big(\Ex|H_{qj}(\bold X)|^2\big)^{1/2}
\end{equation*}
for all  $0<q<1$.
Now since $\max_{1\leq j\leq \m}\Ex\int|H_{qj}(\bold X)|^2dq\leq Cm_n^{-(1+c)}$ for $n$ sufficiently large it is sufficient to show that $\max_{1\leq j\leq \m}J_{[\,]}(1,\cH_{qjn},L^2_{\bold X})<C$ for all  $0<q<1$. 
From Lemma 4.2 (i) of \cite{Chen07} we deduce
\begin{align*}
 N_{[\,]}(\varepsilon\big(\Ex|H_{qj}(\bold X)|^2\big)^{1/2},\cH_{qjn},L^2_{\bold X})
&\leq N_{[\,]}\Big(\varepsilon,\big(\Ex|H_{qj}(\bold X)|^2\big)^{-1/2}\,\cH_{qjn},L^2_{\bold X}\Big)\\
&\leq N_{[\,]}\Big(\varepsilon,\cH_{qj},L^2_{\bold X}\Big)\\
&\leq N\Big(\Big(\frac{\varepsilon}{2 C}\Big)^{2/\kappa},\cB,\|\cdot\|_{Z,p}\Big)\\
&\leq N\Big(\Big(\frac{\varepsilon}{2 C}\Big)^{2/\kappa},\cB,\|\cdot\|_\infty\Big).
\end{align*}
Employing condition $\alpha_0>{d_z}/p$ and Theorem 6.2 Part II of \cite{2003Adams} yields that  $W^{\alpha,p}$ is compactly embedded in $W^{\alpha,\infty}$. Thereby, $\cB\subset  W^{\alpha,p}$ is totally bounded in $W^{\alpha,\infty}$ which implies $\|\phi\|_{\alpha,\infty}\leq C$ for all $\phi\in\cB$. Let $W^{\alpha,\infty}_C:=\{W^{\alpha,\infty}:\,\|\phi_q\|_{\alpha,\infty}\leq C\}$. Now Theorem 2.7.1 of \cite{Vaart2000} gives
\begin{equation*}
 \log N\big(\varepsilon^{2/\kappa},\cB,\|\cdot\|_\infty\big)\leq \log N\big(\varepsilon^{2/\kappa},W^{\alpha,\infty}_C,\|\cdot\|_\infty\big)
\leq C\varepsilon^{-2{d_z}/(\alpha\kappa)}
\end{equation*}
where $C$ depends on the diameter of $\cZ$.
%
%
Now due to Assumption \ref{ass:A0} (i) it is straightforward to see that $\max_{1\leq j\leq \m}J_{[\,]}(1,\cH_{qjn},L^2_{\bold X})<C$ and hence, $nIII_n=o_p(\sqrt\m)$.

Consider $II_n$. We observe
\begin{multline*}
 nII_n= \sum_{j=1}^\m \int\Big(\sum_i (\1{\{Y_i\leq\sol(Z_i,q)\}}-q)f_j(W_i)\Big)\Big(n^{-1}\sum_i h_{qj}(\bold X_i,\hsolq)\Big)dq\\\hfill
= \sum_{j=1}^\m \int\Big(\sum_i (\1{\{Y_i\leq\sol(Z_i,q)\}}-q)f_j(W_i)\Big)\Big(n^{-1}\sum_i h_{qj}(\bold X_i,\hsolq)-\skalarV{\cT\hsolq-\cT\solq,f_j}_W\Big)dq\\\hfill
+\sum_{j=1}^\m \int\Big(\sum_i (\1{\{Y_i\leq\sol(Z_i,q)\}}-q)f_j(W_i)\Big)\skalarV{\cT\hsolq-\cT\solq,f_j}_Wdq\\
=C_{n1}+C_{n2}.
\end{multline*}
The Cauchy Schwarz inequality implies for all $\varepsilon>0$
\begin{multline*}
 \PP(|C_{n1}|>\varepsilon\sqrt\m)
\leq (\varepsilon\sqrt\m)^{-1}\Big(\int q(1-q)dq\Big)^{1/2}\\\hfill
\times\sum_{j=1}^\m \Big(\int\Ex\max_{\phi_q\in\cB_{qn}}\big|n^{-1/2}\sum_i h_{qj}(\bold X_i,\phi_q)-\Ex h_{qj}(\bold X,\phi_q)\big|^2dq\Big)^{1/2}+o(1)\\
=o(1)
\end{multline*}
where the last equality follows similarly to the proof of $nIII_n=o_p(\sqrt\m)$. 
Consider $C_{n2}$. 
Let us introduce the function for $1\leq j\leq \m$ and $1\leq i\leq n$ 
\begin{equation*}
 t_{qn}(\bold X_i,\phi_q):=\big(\1{\{Y_i\leq\sol(Z_i,q)\}}-q\big)\big(F_\m\cT\phi_q-F_\m\cT\solq\big)(W_i)
\end{equation*}
and the sets $\cD_{n}:=\set{\phi\in\cB^{(0,1)}:n\interleave\cT\phi-\cT\sol\interleave_W^2\leq \sqrt{\m}}$, $\cD_{qn}:=\set{\phi_q:\phi\in\cD_{n}}\subset\cB$, $\cG_q:=\{t_{qn}:\,\phi\in\cB\}$, and $\cG_{qn}:=\{t_{qn}:\,\phi\in\cD_{qn}\}$. We calculate
\begin{equation*}
 \PP\big(|C_{n2}|>\varepsilon\sqrt\m\big)\leq \sqrt n(\varepsilon\sqrt\m)^{-1}\Ex\int\max_{\phi_q\in\cD_{qn}}\Big|\frac{1}{\sqrt n}\sum_i t_{qn}(\bold X_i,\phi_q)\Big|dq+o(1).
\end{equation*}
Since $p_W$ is uniformly bounded away from zero, $n\interleave\cT\phi-\cT\sol\interleave_W^2\leq \sqrt{\m}$,  and $\|F_\m(\cT\phi_q-\cT\solq)\|_W\leq C\|\cT\phi_q-\cT\solq\|_W$ for all $\phi\in\cD_{n}$ we have  $|F_\m(\cT\phi_q-\cT\solq)(w)|\leq C\,m_n^{1/4}n^{-1/2}$ for  almost all $0<q<1$ and $p_W$--almost all $w$.
Consequently, $t_{qn}(\bold x,\phi_q) \leq C\,m_n^{1/4}n^{-1/2}$ $p_W$--almost surely.
We conclude by again applying the last display of Theorem 2.14.2 of \cite{Vaart2000}
\begin{equation*}
 \Ex\max_{\phi_q\in\cD_{qn}}\Big|\frac{1}{\sqrt n}\sum_i t_{qn}(\bold X_i,\phi_q)\Big|
\leq C J_{[\,]}(1,\cG_{qn},L_{\bold X}^2)\,m_n^{1/4}n^{-1/2}.
\end{equation*}
As above it can be seen that $J_{[\,]}(1,\cG_{qn},L^2_{\bold X})<C$ for all $0<q<1$. 
%
Indeed, from Assumption \ref{ass:A3:q} $(ii)$ we conclude $\|\cT\phi-\cT\solq\|_W\leq (1+\eta)\|T_q(\phi-\solq)\|_W$ and further, Assumption \ref{ass:A0} $(v)$ yields $\|F_\m(\cT\phi-\cT\solq)\|_W\leq C(1+\eta)\eta_p\|\phi-\solq\|_Z$. 
Hence,  the mapping $\phi\mapsto F_\m\cT\phi$ is Lipschitz continuous at $\solq$ and we may apply Theorem 2.7.11 of \cite{Vaart2000} which yields
\begin{align*}
 N_{[\,]}(\varepsilon\big(n^{-1}\sqrt\m\big)^{1/2},\cG_{n},L^2_{\bold X})&
\leq N_{[\,]}(\varepsilon,\cG_q,L^2_{\bold X})\\
&\leq N_{[\,]}\big(\varepsilon,\{F_\m\cT\phi-F_\m\cT\solq:\,\phi\in\cB\},L^2_W\big)\\
&\leq N\Big(\frac{\varepsilon}{2C},\cB,\|\cdot\|_\infty\Big).
\end{align*}
Thereby, $C_{n2}=o_p(\sqrt\m)$, which completes the proof. 
\end{proof}

In the following we make use of  the notation $g_{qj}(\bold X_i,\phi):=(\1\{Y_i\leq \phi(Z_i)\}-q)f_j(W_i)$, $1\leq j\leq \m$, $1\leq i\leq n$, for any $\phi\in\cB$.
\begin{proof}[\textsc{Proof of Proposition \ref{prop:cons:q}.}] 
For the proof it is sufficient to show $n^{-1}S_n\geq\int\|\cT\solq-q\|_W^2dq/2+o_p(1)$.
Since $\int\|n^{-1}\sum_i (\1{\{Y_i\leq\hsolq(Z_i)\}}-\1{\{Y_i\leq\solq(Z_i)\}})f_\umn(W_i)\|^2dq=o_p(1)$  (cf. proof of Theorem \ref{prop:main} together with Lemma \ref{lem:1}) we obtain
\begin{align*}
 \int\big\|n^{-1}\sum_i (\1{\{Y_i\leq\solq(Z_i)\}}-q)&f_\umn(W_i)\big\|^2dq\\
 &=\int\big\|\Ex[((\cT\solq)(W)-q)f_\umn(W)]\big\|^2dq+o_p(1)\\
&\geq \int\|\cT\solq-q\|_W^2dq/2+o_p(1),
\end{align*}
which proves the result.
\end{proof}

\begin{proof}[\textsc{Proof of Proposition  \ref{coro:norm:q}.}]
Since $\solq=\argmin_{\phi\in\cB}\|\cT\phi-q\|_W$ we obtain as in the proof of Theorem \ref{prop:main} by employing the results of Lemma \ref{lem:1} that
\begin{equation*}
 S_n=\sum_{j=1}^\m\int\Big|n^{-1/2}\sum_i g_{qj}(\bold X_i,\solq)\Big|^2dq+o_p(\sqrt\m).
\end{equation*}
Further, we calculate
\begin{multline*}
\sum_{j=1}^\m\int\Big|n^{-1/2}\sum_i g_{qj}(\bold X_i,\solq)\Big|^2dq
=\sum_{j=1}^\m\int\Big|n^{-1/2}\sum_i \big(g_{qj}(\bold X_i,\solq)-\Ex g_{qj}(\bold X_i,\phi)\big)\Big|^2dq\\\hfill
+2\sum_{j=1}^\m\int\Big(n^{-1/2}\sum_i \big(g_{qj}(\bold X_i,\solq)-\Ex g_{qj}(\bold X,\solq)\big)\Big)\sqrt n\Ex g_{qj}(\bold X,\solq)dq\\\hfill
+n\sum_{j=1}^\m\int\Big|\Ex g_{qj}(\bold X,\solq)\Big|^2dq\\
=I_n+2II_n+III_n.
\end{multline*}
We have $ 3\sqrt{5/\m}\big(I_n-\m/6\big)\stackrel{d}{\rightarrow} \cN(0,1)$.
Further, since $\Ex[(\1\{Y\leq\sol(Z,q)\}-q)^2|W]\leq 1$ we obtain
\begin{multline}\label{ineq:II}
 \Ex|II_n|^2\leq n\Ex\int\big|(\1\{Y\leq\sol(Z,q)\}-q)\sum_{j=1}^\m\Ex g_{qj}(\bold X,\solq)\big|^2dq\\\hfill
 \leq n\int\Big|\sum_{j=1}^\m\Ex g_{qj}(\bold X,\solq)\Big|^2dq
 \leq n\int\|\cT\solq-q\|_W^2dq
\end{multline}
and hence $II_n=O_p( (n\int\|\cT\solq-q\|_W^2dq)^{1/2})$.
Moreover, since $n\int\|\cT\sol_{qn}-\cT\solq\|_W^2dq=o(\sqrt\m)$ and by employing relation  \eqref{loc:alt:ind:q} it is easily seen that
\begin{equation*}
 \frac{3\sqrt5}{\sqrt\m}III_n=\frac{3\sqrt5 \, n}{\sqrt\m}\int\|\cT\sol_{qn}-q-\delta_n\xi_q\|_W^2dq+\sum_{j=1}^\infty\int\Ex[\xi_q(W)f_j(W)]^2dq+o(1),
\end{equation*}
which proves the result.
\end{proof}

\begin{proof}[\textsc{Proof of Corollary \ref{prop:main:B:power}.}]
For the proof it is sufficient to show $n^{-1}S_n^*\geq\int\|\cT\solq-q\|_W^2dq/2+o_{p^*}(1)$ with probability approaching one. \cite{chen2015} show that the bootstrap version of the sieve estimator $\widehat \sol_{qn}^*$ converges at the same rate as $\hsolq$. In light of the proof of Proposition \ref{prop:cons:q}, it is sufficient to show
\begin{align*}
 \int\big\|n^{-1}\sum_i \varepsilon_i(\1{\{Y_i\leq\solq(Z_i)\}}-q)&f_\umn(W_i)\big\|^2dq\\
 &=\int\big\|\Ex[((\cT\solq)(W)-q)f_\umn(W)]\big\|^2dq+o_p(1)\\
&\geq \int\|\cT\solq-q\|_W^2dq/2+o_p(1),
\end{align*}
using that $\varepsilon$ is independent of $W$ and $\Ex[\varepsilon]=1$ as well as $\Var(\varepsilon)<\infty$,
which proves the result.
\end{proof}

\begin{proof}[\textsc{Proof of Corollary  \ref{coro:int}.}]
In light of the proof of Theorem \ref{prop:main} it is sufficient to prove $n\|\cT\hsol_{qn}^{\textsl e}-\cT\solq\|_W^2=o_p(\sqrt\m)$. Due to Assumption \ref{ass:ex} $(ii) $ we obtain as in the proof of Theorem 6 of \cite{chen2012} that
\begin{equation*}
 \|\cT\hsol_{qn}^{\textsl e}-\cT\solq-T_q(\hsol_{qn}^{\textsl e}-\solq)\|_W\leq C\|\hsol_{qn}^{\textsl e}-\solq\|_Z^2
\end{equation*}
and consequently,
\begin{equation*}
 \|\cT\hsol_{qn}^{\textsl e}-\cT\solq\|_W\leq C\big(\|T_q(\hsol_{qn}^{\textsl e}-\solq)\|_W+\|\hsol_{qn}^{\textsl e}-\solq\|_Z^2\big).
\end{equation*}
Moreover, by applying $\sup_{y}p_{Y|Z,W}(y,Z,W)\leq C$ and Jensen's inequality we have
\begin{align*}
 \|T_q(\hsol_{qn}^{\textsl e}-\solq)\|_W^2&=\int_\cW|\int_\cZ p_{Y|Z,W}(\sol(z,q),z,w)(\hsol_{qn}^{\textsl e}-\solq)(z)p_{Z|W}(z,w)dz|^2p_W(w)dw\\
 &\leq C\|\hsol_{qn}^{\textsl e}-\solq\|_Z^2\\
 &=O_p(R_n^{\textsl e})\\
 &=o_p(\sqrt\m/n),
\end{align*}
by employing the rate conditions \eqref{cond:ex} and Assumption \ref{ass:ex} $(iii)$.
\end{proof}

\begin{proof}[\textsc{Proof of Proposition  \ref{loc:power:ex}.}]
Due to the rate restriction \eqref{cond:ex} we may follow the proof of Theorem \ref{prop:main} and Corollary  \ref{coro:int} and hence obtain
\begin{equation*}
 S_n^{\textsl e}(q)=\sum_{j=1}^\m\Big|n^{-1/2}\sum_i (\1\{Y_i\leq\sol_q^{\textsl e}(Z_i)\}-q)f_j(W_i)\Big|^2+o_p(\sqrt\m).
\end{equation*}
Thus, by following line by line the Proposition \ref{coro:norm:q}, we obtain the result.
\end{proof}
 
\subsection{Technical assertions.}
We can not apply the consistency and rate of convergence results of \cite{chen08} when the null hypothesis $H_0$ fails. The following Lemma extends their results to possibly misspecified instrumental quantile regression. Recall that under misspecification $\solq=\argmin_{\phi\in\cB}\|\cT\phi-q\|_W$ does not satisfy $\cT\solq=q$.
\begin{lem}\label{lem:1}
 Let Assumptions \ref{ass:A1:q}--\ref{ass:A0} hold true. Then 
 \begin{gather}
   \interleave\hsol_{\cdot n}-\sol\interleave_{Z,p}^2=o_p(1)\label{lem:eq:1},\\
  \interleave\cT\hsol_{\cdot n}-\cT\sol\interleave_W^2=O_p\Big(\omega_n+\int\|\cT\solq-q\|_W^2dq\Big),\label{lem:eq:2}\\
    \interleave\hsol_{\cdot n}-\sol\interleave_{Z,p}^2=O_p\Big(\interleave\varPi_\k\sol-\sol\interleave_{Z,p}^2+\tau_\k\big(\omega_n+\int\|\cT\solq-q\|_W^2dq\big)\Big).\label{lem:eq:3}
 \end{gather}
\end{lem}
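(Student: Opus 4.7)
The argument adapts the sieve minimum distance framework of \cite{chen08} to possibly misspecified nonlinear conditional moment models, and establishes \eqref{lem:eq:1}--\eqref{lem:eq:3} sequentially.

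\textbf{Step 1 (Consistency).} After orthonormalizing $f_{\underline{l}_n}$ (permissible since by Assumption \ref{ass:A1:q} the eigenvalues of $\Ex[f_{\underline{l}_n}(W)f_{\underline{l}_n}(W)^t]$ are bounded away from zero), the sample criterion is
\begin{equation*}
\widetilde Q_n(\phi) = \int_0^1\sum_{j=1}^{\l}\Big|n^{-1}\sum_i (\1\{Y_i\leq \phi_q(Z_i)\}-q)f_j(W_i)\Big|^2 dq,
\end{equation*}
with population counterpart $Q_\l(\phi)=\int\sum_{j=1}^{\l}\skalarV{\cT\phi_q-q,f_j}_W^2 dq$, and $Q_\l(\phi)\to Q(\phi):=\int\|\cT\phi_q-q\|_W^2 dq$ pointwise in $\phi$ as $\l\to\infty$. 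As in the bracketing-entropy calculations carried out in the proof of Theorem \ref{prop:main} (relying on the embedding of $W^{\alpha,p}$ into $W^{\alpha,\infty}$ via Assumption \ref{ass:A0} (i) and Theorem 2.7.1 of \cite{Vaart2000}), the class $\{(\1\{Y\leq\phi_q(Z)\}-q)f_j(W):\phi_q\in\cB\}$ has finite bracketing integral uniformly in $j\leq\l$ and $q$, which yields a uniform law $\sup_{\phi\in\cB_\k^{(0,1)}}|\widetilde Q_n(\phi)-Q_\l(\phi)|=o_p(1)$. Combined with the identification assumption \ref{ass:A3:q} (i) and the density of $\cB_\k$ in $\cB$ (implied by $\interleave\varPi_\k\sol-\sol\interleave_{Z,p}=o(1)$), standard sieve extremum-estimator arguments (e.g.\ \cite{Chen07}, Theorem 3.1) give \eqref{lem:eq:1}.

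\textbf{Step 2 (Image rate).} From $\widetilde Q_n(\hsol_{\cdot n})\leq \widetilde Q_n(\varPi_\k\sol)$, a linearization $\widetilde Q_n(\phi)=\int\|F_\l(\cT\phi_q-q)\|_W^2 dq + E_n(\phi)$ with a stochastic remainder $E_n(\phi)$ of order $\l/n$ uniformly over $\cB_\k^{(0,1)}$ (controlled by the same bracketing argument used for the term $C_{n1}$ in the proof of Theorem \ref{prop:main}, with $\l$ replacing $\m$), gives
\begin{equation*}
\int\|F_\l(\cT\hsol_{qn}-q)\|_W^2 dq \leq \int\|F_\l(\cT\varPi_\k\sol_q-q)\|_W^2 dq + O_p(\l/n).
\end{equation*}
Adding the truncation bias $\int\|(I-F_\l)(\cT\phi_q-q)\|_W^2 dq$, bounded uniformly over $\phi\in\cB_\k^{(0,1)}$ by the second term in $\omega_n$, yields $\int\|\cT\hsol_{qn}-q\|_W^2 dq \leq \int\|\cT\varPi_\k\sol_q-q\|_W^2 dq + O_p(\omega_n)$. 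Splitting $\cT\varPi_\k\sol_q-q = (\cT\varPi_\k\sol_q-\cT\sol_q)+(\cT\sol_q-q)$ and applying the tangential cone condition (Assumption \ref{ass:A3:q} (ii)) to bound $\|\cT\varPi_\k\sol_q-\cT\sol_q\|_W\leq(1+\eta)\|T_q(\varPi_\k\sol_q-\sol_q)\|_W$, then expanding $\int\|\cT\hsol_{qn}-q\|_W^2 dq$ around $\solq$ and absorbing the cross term via the first-order optimality of $\solq=\argmin_{\phi\in\cB}\|\cT\phi-q\|_W$, yields \eqref{lem:eq:2}.

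\textbf{Step 3 (Function-space rate).} Since $\hsol_{\cdot n}\in\cB_\k^{(0,1)}$, the definition of the sieve measure of local ill-posedness gives
\begin{equation*}
\interleave\hsol_{\cdot n}-\sol\interleave_{Z,p}^2 \leq \tau_\k\,\interleave T_\cdot(\hsol_{\cdot n}-\sol)\interleave_W^2
\end{equation*}
when $\hsol_{\cdot n}\in\cA_\k$; otherwise both sides vanish and the bound reduces to $\interleave\varPi_\k\sol-\sol\interleave_{Z,p}^2$. By consistency (Step 1), $\hsol_{qn}$ eventually lies in the neighborhood of $\solq$ in which Assumption \ref{ass:A3:q} (ii) applies, so the triangle inequality applied to $\cT\hsol_{qn}-\cT\solq=T_q(\hsol_{qn}-\solq)+\op_q(\hsol_{qn}-\solq)$ yields $(1-\eta)\|T_q(\hsol_{qn}-\solq)\|_W\leq\|\cT\hsol_{qn}-\cT\solq\|_W$. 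Integrating over $q$ and combining with \eqref{lem:eq:2} delivers \eqref{lem:eq:3}.

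\textbf{Principal obstacle.} The main technical difficulty is the non-smoothness of $\phi\mapsto\1\{Y\leq\phi(Z)\}$, which prevents a direct Taylor expansion of $\widetilde Q_n$. The remedy, used throughout, is to exploit the Sobolev-ball structure of $\cB$ and the resulting finite bracketing integral of the indicator classes, so that the empirical-process fluctuations in Steps 1 and 2 can be controlled via the maximal inequality of Theorem 2.14.2 in \cite{Vaart2000}, exactly as in the treatment of $nIII_n$ and $C_{n1}$ in the proof of Theorem \ref{prop:main}.
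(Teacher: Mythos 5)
Your proposal follows essentially the same route as the paper: adapt the Chen--Pouzo sieve-minimum-distance machinery to the possibly misspecified case, establish consistency via a uniform law of large numbers over the Sobolev ball (controlled by bracketing entropy of the indicator class), derive the image-space rate from the basic inequality $\widetilde Q_n(\hsol_{\cdot n})\leq \widetilde Q_n(\varPi_\k\sol)$ together with the stochastic/truncation error captured by $\omega_n$, and then transfer to the $\interleave\cdot\interleave_{Z,p}$ rate via the sieve measure of ill-posedness and the reverse tangential cone bound $(1-\eta)\|T_q(\phi-\solq)\|_W\leq\|\cT\phi-\cT\solq\|_W$. Steps 1 and 2 are sound in outline; note that in Step 2 the paper avoids any Taylor expansion of $\phi\mapsto\int\|\cT\phi_q-q\|_W^2dq$ and does not invoke ``first-order optimality.'' Instead it exploits only the zeroth-order optimality $\|\cT\phi_q-q\|_W^2\geq\|\cT\solq-q\|_W^2$ together with the elementary inequality $(a-b)^2\geq a^2/2-b^2$ to get $\|\cT\phi_q-q\|_W^2\geq\|\cT\phi_q-\cT\solq\|_W^2/4$ for any competitor $\phi$; this sidesteps the need to control a cross term at all.

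Step 3 contains a genuine gap. You write that when $\hsol_{\cdot n}\notin\cA_\k$ (i.e.\ $\interleave T_\cdot(\hsol_{\cdot n}-\sol)\interleave_W^2=0$) ``both sides vanish and the bound reduces to $\interleave\varPi_\k\sol-\sol\interleave_{Z,p}^2$.'' But the right-hand side vanishing does \emph{not} imply the left-hand side vanishes: $T_q$ is not injective in general --- indeed the entire ill-posedness issue arises precisely because $T_q$ may have small or vanishing singular values --- so $T_q(\hsol_{qn}-\solq)=0$ does not force $\hsol_{qn}=\solq$. Moreover, even if both sides did vanish, it would not follow that the bound ``reduces to'' the sieve approximation error; nothing connects the two. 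The approximation term $\interleave\varPi_\k\sol-\sol\interleave_{Z,p}^2$ should enter the bound additively through a triangle-inequality decomposition of $\hsol_{\cdot n}-\sol$ into a sieve part and the approximation bias $\varPi_\k\sol-\sol$, not through a case distinction on whether $\hsol_{\cdot n}\in\cA_\k$. Once consistency from Step 1 places $\hsol_{qn}$ in the neighborhood where Assumption \ref{ass:A3:q} $(ii)$ applies, the bound $\interleave\hsol_{\cdot n}-\sol\interleave_{Z,p}^2\lesssim\tau_\k\interleave\cT\hsol_{\cdot n}-\cT\sol\interleave_W^2$ follows for the sieve component, and the approximation error term is carried along additively; your case-analysis argument should be replaced by this decomposition.
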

\begin{proof}
 Proof of \eqref{lem:eq:1}. We define $\cR_n:=\max\big(n^{-1}\l,\max_{\phi\in\cB_\k}\sum_{j>\l}\Ex[(\cT\phi(W)-q)f_j(W)]^2\big)$.
From the proof of Proposition \ref{prop:cons:q} we have that
\begin{equation}\label{lem:eq}
\sum_{j=1}^\l \Ex\max_{\phi\in\cB_\k}\Big|n^{-1}\sum_i\1\{Y_i\leq \phi(Z_i)\}f_j(W_i)-\Ex[\1\{Y\leq \phi(Z)\}f_j(W)]\Big|^2=O(n^{-1}\l).
\end{equation}
Consequently, we observe
 \begin{equation*}
  \int\big\|n^{-1}\sum_i (\1\{Y_i\leq \varPi_\k\solq(Z_i)\}-q)f_\uln(W_i)\big\|^2dq
  \leq 2\int\|\cT\varPi_\k\solq-q\|_W^2dq +O_p(\cR_n).
 \end{equation*} 
Further, using the elementary inequality $(a-b)^2\geq a^2/2-b^2$ and again applying relation \eqref{lem:eq} gives
\begin{multline*}
  \int\big\|n^{-1}\sum_i (\1\{Y_i\leq \phi_q(Z_i)\}-q)f_\uln(W_i)\big\|^2dq
  \geq \int\|F_\l(\cT\phi_q-q)\|_W^2dq/2\\\hfill-\sum_{j=1}^\l \max_{\phi\in\cB_\k}\Big|n^{-1}\sum_i\1\{Y_i\leq \phi(Z_i)\}f_j(W_i)-\Ex\1\{Y\leq \phi(Z)\}f_j(W)\Big|^2\\
  \geq C\int\|\cT\phi_q-q\|_W^2dq- O_p(\cR_n).
 \end{multline*}
 Let us denote $\cA_\k=\{\phi\in\cB_\k^{(0,1)}:\,\interleave\phi-\sol\interleave_{Z,p}^2\geq\varepsilon\}$ for some $\varepsilon>0$. Since $\cT$ is continuous and $\solq=\argmin_{\phi\in\cB}\|\cT\phi-q\|_W$ is unique  we have that $\min_{\phi\in\cA_\k}\int\|\cT\phi_q-q\|_W^2dq$ is strictly positive for all $n\geq 1$. 
 Therefore, we obtain
 \begin{multline*}
  \PP\Big(\interleave\hsol_{\cdot n}-\sol\interleave_{Z,p}^2\geq \varepsilon\Big)\\
  \leq \PP\Big(\min_{\phi\in\cA_\k} \int\big\|\sum_i (\1\{Y_i\leq \phi(Z_i,q)\}-q)f_\uln(W_i)\big\|^2dq\\\hfill
  \leq \int\big\|\sum_i (\1\{Y_i\leq \varPi_\k\sol(Z_i,q)\}-q)f_\uln(W_i)\big\|^2dq\Big)\\\hfill
  \leq \PP\Big(\min_{\phi\in\cA_\k} \int\|\cT\phi_q-q\|_W^2dq
  \leq \int\|\cT\varPi_\k\solq-q\|_W^2dq+O_p(\cR_n)\Big)
  =o(1)
 \end{multline*}
 since $\int\|\cT\varPi_\k\solq-q\|_W^2dq=\int\|\cT\solq-q\|_W^2dq+o(1)$, $\cR_n=o(1)$, and making use of $\min_{\phi\in\cA_\k} \int\|\cT\phi_q-q\|_W^2dq>\int\|\cT\solq-q\|_W^2dq+o(1)$.
 Proof of \eqref{lem:eq:2}. 
 For some $\varepsilon>0$ let us denote $\cD_{\k}=\{\phi\in\cB_\k^{(0,1)}:\,\interleave\cT\phi-\cT\sol\interleave_W^2\geq\varepsilon\omega_n\}$.
 Therefore, we obtain as above
 \begin{multline*}
  \PP\Big(\interleave\cT\hsol_{\cdot n}-\cT\sol\interleave_W^2\geq \varepsilon \omega_n\Big)\\\hfill
  \leq \PP\Big(\min_{\phi\in\cD_{\k}} \int\|\cT\phi_q-q\|_W^2dq\leq \int\|\cT\varPi_\k\solq-q\|_W^2dq+O_p(\cR_n)\Big).
 \end{multline*}
 Further, it holds $\int\|\cT\varPi_\k\solq-q\|_W^2dq\leq 2\interleave\cT\varPi_\k\sol-\cT\sol\interleave_W+2\int\|\cT\solq-q\|_W^2dq$.
   We thus obtain
 \begin{multline*}
  \PP\Big(\interleave\cT\hsol_{\cdot n}-\cT\sol\interleave_W^2\geq \varepsilon \,\omega_n\Big)\\
  \leq \PP\Big(\min_{\phi\in\cD_{\k}}\int\|\cT\phi_q-q\|_W^2dq\leq 2\interleave\cT\varPi_\k\sol-\cT\sol\interleave_W^2+2\int\|\cT\solq-q\|_W^2dq+O_p(\cR_n)\Big).
\end{multline*}
For all $\phi\in\cD_{\k}$ and $0<q<1$ we have
 \begin{equation*}
  \|\cT\phi_q-q\|_W^2\geq \|\cT\solq-q\|_W^2\geq \|\cT\phi_q-\cT\solq\|_W^2/2-\|\cT\phi_q-q\|_W^2
 \end{equation*}
and hence, $\|\cT\phi_q-q\|_W^2\geq \|\cT\phi_q-\cT\solq\|_W^2/4$. Thereby, we obtain
 \begin{multline*}
  \PP\Big(\interleave\cT\hsol_{\cdot n}-\cT\sol\interleave_W^2\geq \varepsilon\, \omega_n\Big)\\\hfill
  \leq \PP\Big(\frac{1}{4}\min_{\phi\in\cD_{\k}}\interleave\cT\phi-\cT\sol\interleave_W^2\leq 2\interleave\cT\varPi_\k\sol-\cT\sol\interleave_W^2+2\int\|\cT\solq-q\|_W^2dq + O_p(\cR_n)\Big)\\
  \leq \PP\Big(\frac{\varepsilon}{4}\omega_n \leq 2\eta\int\|T_q(\varPi_\k\solq-\solq)\|_W^2dq+2\int\|\cT\solq-q\|_W^2dq + O_p(\cR_n)\Big)
\end{multline*}
which goes to zero for all $n\geq 1$ as $\varepsilon\to\infty$.
Proof of \eqref{lem:eq:3}.
 Note that $\|T_q(\phi-\solq)\|_W\leq (1-\eta)^{-1}\|\cT\phi-\cT\solq\|_W$ for all $\phi$ in a sufficiently small neighborhood around $\solq$. Thereby, due to \eqref{lem:eq:1} we obtain
\begin{equation*}
 \interleave\hsol_{\cdot n}-\sol\interleave_{Z,p}^2=O_p\Big(\interleave\varPi_\k\sol-\sol\interleave_{Z,p}^2+\tau_\k\interleave\cT\hsol_{\cdot n}-\cT\sol\interleave_W^2\Big).
\end{equation*}
Hence, the result follows by applying \eqref{lem:eq:2}.
\end{proof}

The following lemma is similar to Lemma A.2 of \cite{Breunig2012}. In the following, however, we provide the proof for the sake of completeness.
For all $\phi\in\cB$ recall the definition $g_j(\bold X_i,\phi)=(\1\{Y_i\leq \phi(Z_i)\}-q)f_j(W_i)$ for all $1\leq j\leq \m$ and $1\leq i\leq n$.
Let us introduce $\cX_{ii'}:=6\sqrt5/ (\sqrt\m n)\sum_{j=1}^\m \int g_j(\bold X_i,\solq)g_j(\bold X_{i'},\solq)dq$ and 
\begin{equation}\label{def:Q}
Q_{ni}:=
\left\{\begin{array}{lcl} 
\sum_{l=1}^{i-1}\cX_{li}, && \mbox{ for } i=2,\dots, n,\\
0,&&\mbox{ for } i=1 \mbox{ and } i>n.
\end{array}\right.
\end{equation}
Then clearly
\begin{multline*}
3\sqrt5/ (\sqrt\m n)\sum_{i\neq i'}\sum_{j=1}^\m \int g_j(\bold X_i,\solq)g_j(\bold X_{i'},\solq)dq\\\hfill
=6\sqrt5/ (\sqrt\m n)\sum_{i< i'}\sum_{j=1}^\m \int g_j(\bold X_i,\solq)g_j(\bold X_{i'},\solq)dq
=\sum_{i<i'}\cX_{ii'}=\sum_{i=1}^nQ_{ni}.
\end{multline*}
Let $\cB_{ni}:=\cB((Z_1,Y_1,W_1),\dots, (Z_i,Y_i,W_i))$, $1\leq i\leq n$, $n\geq 1$, be the $\sigma$-algebra generated by $(Z_1,Y_1,W_1),\dots, (Z_i,Y_i,W_i)$. Since $g_j(\bold X_i,\solq)$, $1\leq i\leq n$, are centered random variables it follows that $\{(\sum_{i'=1}^iQ_{ni'},\cB_{ni}), \,\,i\geq 1\}$ is a Martingale for each $n\geq 1$ and hence $\{(Q_{ni},\cB_{ni}), \,i\geq 1\}$ is a Martingale difference array for each $n\geq 1$. 
\begin{lem}\label{normal:lem:1}
 Let $Q_{ni}$ be defined as in \eqref{def:Q}. Let  Assumption \ref{ass:A1:q} and condition \eqref{cond:theo:norm} be satisfied. Then, we have $\sum_{i=1}^\infty Q_{ni}\stackrel{d}{\rightarrow} \cN(0,1)$. 
\end{lem}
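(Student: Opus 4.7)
The plan is to recognize the statistic $\sum_i Q_{ni}$ as a normalized degenerate U-statistic whose summands form a martingale difference array (as already verified in the lead-up to the lemma), and apply a martingale central limit theorem — specifically, Brown's theorem (Theorem 3.2 in Hall and Heyde, 1980). This requires verifying two conditions: (a) the conditional variance $V_n^2 := \sum_i \Ex[Q_{ni}^2\mid \cB_{n,i-1}]$ converges in probability to $1$, and (b) a Lyapunov-type condition, which I would take as $\sum_i \Ex[Q_{ni}^4] = o(1)$.

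First I would set $c_n := 6\sqrt 5/(\sqrt{\m}\,n)$ and rewrite the kernel as
$h(\bold X_l, \bold X_i) = f_{\underline m_n}(W_l)^t f_{\underline m_n}(W_i)\int G_{l i}(q,q')\,d(q,q')$, where $G_{li}(q,q')=(\1\{Y_l\le \sol_q(Z_l)\}-q)(\1\{Y_i\le \sol_q(Z_i)\}-q')$ is suitably bounded. Using $\Ex[(\1\{Y\le \solq(Z)\}-q)\mid W]=0$ under $H_0$, one checks that $\Ex[h(\bold X_l,\bold X_i)\mid \bold X_i]=0$, so that $h$ is a degenerate kernel. Direct computation of $\Ex[h(\bold X_1,\bold X_2)^2]$ uses $\Ex[f_j(W)f_{j'}(W)]=\delta_{jj'}$ and the identity $\int\int(\min(q,q')-qq')^2 d(q,q') = 1/90$ displayed in the paper, giving $\Ex[h(\bold X_1,\bold X_2)^2]= \m/90$ and hence $\Ex[V_n^2]= c_n^2 \binom{n}{2}\m/90 + o(1) = (n-1)/n \to 1$.

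To prove $V_n^2 \xrightarrow{p} 1$, I would expand $V_n^2 = c_n^2\sum_{l<i\le n} h(\bold X_l,\bold X_i)^2 + c_n^2\sum_{l\neq l',\,l,l'<i\le n} K(\bold X_l,\bold X_{l'})$, where $K(x,y):=\Ex[h(x,\bold X)h(y,\bold X)]$, and bound the variance of each piece. For the diagonal piece, the envelope bound $\sup_w\|f_{\underline m_n}(w)\|^2\le C\m$ from Assumption~\ref{ass:A1:q}(i) gives $\Ex[h(\bold X,\bold X)^2]\le C\m^2$, so the variance is at most $c_n^4 n^3 \cdot C\m^2 = O(1/n)=o(1)$. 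For the off-diagonal piece, the degeneracy $\Ex[K(x,\bold X)]=0$ kills covariances between terms with at most one shared index, leaving only the matched pairs; using $\Ex[K(\bold X_1,\bold X_2)^2]\le C\m$, one obtains variance at most $c_n^4 n^4 \cdot C\m = O(1/\m)=o(1)$, where we use $\m\to\infty$ from \eqref{cond:theo:norm}. For the Lyapunov condition, expanding $\Ex[Q_{ni}^4]$ into four-fold sums and keeping only the non-vanishing index patterns (the pair+pair and the all-equal patterns, by degeneracy) gives $\Ex[Q_{ni}^4]\le C c_n^4 n^2 \m^2$, so $\sum_i\Ex[Q_{ni}^4]\le C c_n^4 n^3 \m^2 = O(1/n) = o(1)$.

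The main obstacle is the degenerate U-statistic bookkeeping: systematically identifying which of the $O(n^4)$ covariance or moment terms vanish by the $\Ex[\cdot\mid \bold X_i]=0$ property and which survive, then bounding each surviving term sharply enough that the scaling by $c_n^4$ absorbs the factor $n^3$ or $n^4$. The crucial inputs are the orthonormality of $f_j$ under the probability measure induced by $W$ (giving the clean $\m/90$ computation), the envelope $\sup_w\|f_{\underline m_n}(w)\|^2\le C\m$, and the rate $\m=o(n^{1/2})$; the former two control second and fourth moments, and the latter two together ensure the appropriate combinations tend to zero. Given these two conditions, Brown's martingale CLT yields $\sum_{i=1}^\infty Q_{ni}\stackrel{d}{\to}\cN(0,1)$.
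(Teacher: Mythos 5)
Your proposal is correct, but it follows a genuinely different route than the paper's. The paper invokes the martingale CLT of \cite{Awad}, which requires verifying $\sum_i\Ex|Q_{ni}|^2\leq 1$, the \emph{unconditional} sum of squares $\sum_i Q_{ni}^2 = 1+o_p(1)$, and $\sup_i|Q_{ni}|=o_p(1)$; the bulk of the work is an explicit expansion of $\Ex|\sum_i Q_{ni}^2-1|^2$, with $\Ex[Q_{ni}^2Q_{ni'}^2]$ split into four combinatorial types $A_{ii'},\dots,D_{ii'}$. You instead use Brown's theorem with the \emph{conditional} variance $V_n^2=\sum_i\Ex[Q_{ni}^2\mid\cB_{n,i-1}]$ plus a Lyapunov fourth-moment condition. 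These are essentially interchangeable: under Lyapunov/Lindeberg, $\sum_i Q_{ni}^2 - V_n^2\to 0$ in probability, so both norming conditions are valid. Your route reorganizes the bookkeeping, exploiting the degeneracy $\Ex[K(x,\bold X)]=0$ to kill covariances in $\Var(V_n^2)$ with at most one shared index, which is arguably more systematic than the paper's case-by-case treatment, and the Lyapunov condition replaces the paper's $\sup_i|Q_{ni}|$ step. The underlying inputs are identical: $\Ex[f_j(W)f_{j'}(W)]=\delta_{jj'}$, the envelope $\sup_w\|f_\umn(w)\|^2\leq C\m$, the constant $\int\!\!\int(\min(q,q')-qq')^2\,d(q,q')=1/90$, and the rate $\m=o(n^{1/2})$.

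Two details should be tightened. First, your displayed expansion of $V_n^2$ has a diagonal term $c_n^2\sum_{l<i}h(\bold X_l,\bold X_i)^2$; since $V_n^2$ already integrates out $\bold X_i$, this should be $c_n^2\sum_i\sum_{l<i}K(\bold X_l,\bold X_l)$ with $K(x,x)=\Ex[h(x,\bold X)^2]$. As written you have instead the diagonal of $\sum_i Q_{ni}^2$, so the formula mixes the two decompositions; the expectation and variance bounds go through unchanged after the fix, but the identity itself is wrong. Second, the claimed bound $\Ex Q_{ni}^4\leq Cc_n^4 n^2\m^2$ for the pair--pair pattern implicitly uses the sharpened estimate $\Ex[h(\bold X_l,\bold X_i)^2\mid\bold X_i]\leq C\m$, which requires carrying out the orthonormality reduction $\Ex[g_j(\bold X,\solq)g_{j'}(\bold X,\sol_{q'})]=(\min(q,q')-qq')\delta_{jj'}$; the crude envelope bound alone gives only $\Ex[h^2\mid\bold X_i]\leq C\m^2$ and hence $Cc_n^4 n^2\m^4$, which still suffices under $\m=o(n^{1/2})$ but yields a weaker numeric statement than the one you asserted.
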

\begin{proof}
For the proof we have to show that the Martingale difference array $\{(Q_{ni},\cB_{ni}),\,i\geq 1\}$, $n\geq 1$,  satisfies the conditions 
\begin{gather}
 \sum_{i= 1}^\infty\Ex|Q_{ni}|^2\leq 1 \quad \text{ for all }n\geq 1\label{normal:lem:1:eq1},\\
 \sum_{i=1}^\infty Q_{ni}^2=1+o_p(1)\label{normal:lem:1:eq2},\\
 \sup_{i\geq 1}|Q_{ni}|=o_p(1). \label{normal:lem:1:eq3}
\end{gather}
Then the result follows by \cite{Awad}.
Proof of \eqref{normal:lem:1:eq1}.
Since $\Ex[(\1\set{Y\leq \sol(Z,q)}-q)(\1\set{Y\leq \sol(Z,q')}-q')|W]=\min(q,q')-qq'$ we have
\begin{equation*}
\int \Big(\Ex\big[g_j(\bold X,\solq)g_{j'}(\bold X,\solqd)\big]\Big)^2d(q,q')
 =\int  (\min(q,q')-qq')^2d(q,q')\1_{\set{j=j'}}
 =\1_{\set{j=j'}}/90,
\end{equation*}
where we used that $\Ex[f_j(W)f_{j'}(W)]^2=\1_{\set{j=j'}}$  and 
\begin{align*}
 \int  (\min(q,q')-qq')^2d(q,q')
  &=\int\Big(\int_0^q (q'-qq')^2dq'+\int_q^1 (q-qq')^2dq'\Big)dq\\
 &=\frac{2}{3}\int q^3 (1-q)^2dq\\
 &=1/90.
\end{align*}
%
Observe that $\Ex[\cX_{1i}\cX_{1i'}]=0$ for $i\neq i'$ and thus, for $i=2,\dots,n$ we have
\begin{align*}
\Ex |Q_{ni}|^2&=\Ex|\cX_{1i}+\dots+\cX_{i-1,i}|^2\\
&=(i-1)\Ex|\cX_{12}|^2\\
&=\frac{(6\sqrt5)^2(i-1)}{n^2\m}\Ex\big|\sum_{j=1}^\m \int g_j(\bold X_1,\solq)g_j(\bold X_2,\solq)dq\big|^2\\
&=\frac{180\,(i-1)}{n^2\m}\sum_{j,j'=1}^\m\int \big(\Ex[ g_j(\bold X,\solq)g_{j'}(\bold X,\solqd)]\big)^2d(q,q')\\
&=\frac{2(i-1)}{n^2}.
\end{align*}
Thereby, we conclude
\begin{equation}\label{normal:lem:proof:eq1}
 \sum_{i= 1}^n\Ex|Q_{ni}|^2=\frac{2}{n^2}\sum_{i= 1}^{n-1}i=\frac{n(n-1)}{n^2}=1-\frac{1}{n}
\end{equation}
which proves \eqref{normal:lem:1:eq1}.

Proof of \eqref{normal:lem:1:eq2}. 
Using relation \eqref{normal:lem:proof:eq1} we observe
\begin{equation*}
 \Ex\big|\sum_{i=1}^nQ_{ni}^2-1\big|^2=\sum_{i=1}^n \Ex Q_{ni}^4+2\sum_{i< i'}\Ex Q_{ni}^2Q_{ni'}^2-1+o(1)=:I_n+II_n-1+o(1).
\end{equation*}
Consider $I_n$. 
Observe that
\begin{multline*}
 \Ex |Q_{ni}|^4=\Ex\big|\sum_{i'=1}^{i-1}\cX_{i'i}\big|^4\leq\int\Ex\Big|\frac{6\sqrt 5}{n\sqrt\m }\sum_{j=1}^\m g_j(\bold X_i,\solq)\sum_{i'=1}^{i-1}g_j(\bold X_{i'},\solq)\Big|^4dq\\\hfill
   \leq\frac{C}{n^4 m_n^2}\sup_{w\in\cW}\|f_\umn(w)\|^4\Big((i-1) \Ex\|f_\umn(W)\|^4+3(i-1)(i-2)(\Ex\|f_\umn(W)\|^2)^2\Big)
\end{multline*}
where we used that $\Ex[ g_j(\bold X,\solq)]=0$ for $0<q<1$. Since  $\sum_{i=1}^n3(i-1)(i-2)=n(n-1)(n-2)$ we conclude
\begin{equation*}
 I_n \leq C\Big(\frac{n(n-1)}{2n^4}\Ex\|f_\umn(W)\|^4+\frac{n(n-1)(n-2)}{n^4}(\Ex\|f_\umn(W)\|^2)^2\Big)=o(1)
\end{equation*}
since $(\Ex\|f_\umn(W)\|^2)^2\leq \Ex\|f_\umn(W)\|^4\leq Cm_n^2$. 
 We calculate  for $i<i'$
\begin{multline*}
 Q_{ni}^2Q_{ni'}^2
=\Big(\sum_{k=1}^{i-1}\cX_{ki}^2\Big)\Big(\sum_{k=1}^{i'-1}\cX_{ki'}^2\Big)
+\Big(\sum_{k=1}^{i-1}\cX_{ki}^2\Big)\Big(\sum_{k\neq k'}^{i'-1}\cX_{ki'}\cX_{k'i'}\Big)\\\hfill
+\Big(\sum_{k\neq k'}^{i-1}\cX_{ki}\cX_{k'i}\Big)\Big(\sum_{k=1}^{i'-1}\cX_{ki'}^2\Big)
+\Big(\sum_{k\neq k'}^{i-1}\cX_{ki}\cX_{k'i}\Big)\Big(\sum_{k\neq k'}^{i'-1}\cX_{ki'}\cX_{k'i'}\Big)\\
=:A_{ii'}+B_{ii'}+C_{ii'}+D_{ii'}.
\end{multline*}
Consider $A_{ii'}$. Exploiting relation \eqref{normal:lem:proof:eq1} and using
 $\sum_{i<i'}(i-1)
=\sum_{i'=1}^n(i'-1)(i'-2)/2=n(n-1)(n-2)/6$ and further  $\sum_{i<i'}(i-1)(i'-3)=\sum_{i'=1}^n(i'-3)(i'-2)(i'-1)/2=n(n-1)(n-2)(n-3)/8$  we obtain 
\begin{multline*}
 2\sum_{i<i'}\Ex A_{ii'}=4 \Ex \cX_{12}^2\cX_{23}^2\sum_{i<i'}(i-1) +2(\Ex \cX_{12}^2)^2\sum_{i<i'}(i-1)(i'-3)+o(1)\\
\leq C\frac{n(n-1)(n-2)}{n^4m_n^2}\sum_{j,l=1}^\m\int\Ex\big[ g_j^2(\bold X,\solq)g_l^2(\bold X,\sol_{q'})\big]d(q,q')\\+\frac{n(n-1)(n-2)(n-3)}{n^4}+o(1)
\end{multline*}
since $\int\Ex [g_j(\bold X,\solq)g_{j'}(\bold X,\solqd)]d(q,q')=\1\{j=j'\}/90$. 
Moreover, applying Cauchy Schwarz's inequality twice gives
\begin{equation*}
 \sum_{j,l=1}^\m\int\Ex\big[ g_j^2(\bold X,\solq)g_l^2(\bold X,\sol_{q'})\big]d(q,q')
\leq \sup_{w\in\cW}\|f_\umn(w)\|^4
\leq Cm_n^2.
\end{equation*}
Thereby, it holds $2\sum_{i<i'}\Ex A_{ii'}=1+o(1)$. 
Now consider $B_{ii'}$. Since $\{\basW_l\}_{l\geq 1}$ forms an orthonormal basis on the support of $W$ we obtain
\begin{multline*}
 \Ex \Big(\sum_{k=1}^{i-1}\cX_{ki}^2\Big)\Big(\sum_{k\neq k'}^{i'-1}\cX_{ki'}\cX_{k'i'}\Big)=2\sum_{k=1}^{i-1}\Ex \cX_{ki}^2\cX_{ki'}\c\cX_{ii'}\\
\leq \frac{C(i-1)}{n^4m_n ^2}\sum_{j,j'=1}^\m\int\Ex\Big|g_j(\bold X_1,\solq)g_{j'}(\bold X_1,\solq) g_j(\bold X_2,\solq)g_{j'}(\bold X_2,\solqd)\\\hfill
\times q(1-q)\sum_{l=1}^\m g_l^2(\bold X_1,\solq)\Big|d(q,q',q'')\\
\leq \frac{C(i-1) }{n^4\m}
\Big(\sum_{j,j'=1}^\m\int\Ex| g_j(\bold X,\solq)g_{j'}(\bold X,\solq)|^2d(q,q')\Big)
\leq \frac{C(i-1) m_n}{n^4}.
\end{multline*}
This, together with relation \eqref{normal:lem:proof:eq1}, yields $\sum_{i<i'}\Ex B_{ii'}=o(1)$.
Further, it is easily seen that $\sum_{i<i'}\Ex C_{ii'}=o(1)$. 
Consider $D_{ii'}$. Using twice the law of iterated expectation gives
\begin{multline*}
 \Ex D_{ii'}=\Ex\Big(\sum_{k\neq k'}^{i-1}\cX_{ki}\cX_{k'i}\Big)\Big(\sum_{k\neq k'}^{i'-1}\cX_{ki'}\cX_{k'i'}\Big)
  =4 \sum_{k<k'}^{i-1}\Ex \cX_{ki}\cX_{k'i}\cX_{ki'}\cX_{k'i'}\\
  =4 \sum_{k<k'}^{i-1}\Ex\big[ \cX_{ki}\cX_{k'i}\Ex[\cX_{ki'}\cX_{k'i'}|(Y_k,Z_k,W_k),(Y_{k'},Z_{k'},W_{k'}),(Y_{i},Z_{i},W_{i})]\big]\\
  \leq \frac{C}{n^2\m} \sum_{k<k'}^{i-1}\Ex\Big[ \Ex[\cX_{ki}\cX_{k'i}|(Y_k,Z_k,W_k),(Y_{k'},Z_{k'},W_{k'})]\\\hfill
  \times\sum_{j,j'=1}^\m\int \Ex[g_j(\bold X,\solq)g_{j'}(\bold X,\solqd)]g_j(\bold X_k,\solq)g_{j'}(\bold X_{k'},\solqd)d(q,q')\Big]\\\hfill
  \leq\frac{C}{n^4 m_n^2}\int \Ex\Big|\sum_{j,j'=1}^\m\Ex[g_j(\bold X,\solq)g_{j'}(\bold X,\solqd)]g_j(\bold X_1,\solq)g_{j'}(\bold X_2,\solqd)\Big|^2d(q,q')(i-1)(i-2)\\
  \leq \frac{C}{n^4\m}(i-1)(i-2).
\end{multline*} 
again using that $\Ex[g_j(\bold X,\solq)g_{j'}(\bold X,\solqd)]$ is only different from zero whenever $j=j'$. Consequently, we obtain
\begin{equation*}
 \sum_{i<i'}\Ex D_{ii'}\leq \frac{C}{n^4\m}\sum_{i<i'}(i-1)(i-2)
  = \frac{C\,n(n-1)(n-2)(n-3)}{\m n^4}=o(1)
\end{equation*} 
and hence $2\sum_{i< i'}\Ex Q_{ni}^2Q_{ni'}^2=1+o(1)$.

Proof of \eqref{normal:lem:1:eq3}. Note that $\PP\big(\sup_{i\geq 1}|Q_{ni}|>\varepsilon\big)\leq \sum_{i=1}^n\PP\big(Q_{ni}^2>\varepsilon^2\big)$ and, hence the assertion follows from the Markov inequality.
\end{proof}

\end{document}